\documentclass[11pt,letterpaper]{article}

\usepackage[margin=1in]{geometry}
\usepackage{libertine}
\usepackage[libertine]{newtxmath}
\usepackage{microtype}
\usepackage{enumitem}
\usepackage{amsmath}

\usepackage{amssymb}

\usepackage{amsthm}
\usepackage[colorlinks=true,linkcolor=blue,citecolor=blue,urlcolor=blue]{hyperref}

\newtheorem{definition}{Definition}
\newtheorem{theorem}{Theorem}
\newtheorem{lemma}{Lemma}
\newtheorem{corollary}{Corollary}
\theoremstyle{remark}
\newtheorem{remark}{Remark}
\newtheorem{example}{Example}
\newtheorem{observation}{Observation}
\newtheorem{proposition}{Proposition}

\newcommand{\Hist}{\mathcal{H}}
\newcommand{\Spec}{\mathsf{Spec}}
\newcommand{\cost}{\mathsf{cost}}
\newcommand{\Cost}{\mathsf{Cost}}
\newcommand{\depth}{\mathsf{depth}}

\newcommand{\Depth}{\mathsf{Depth}}

\title{On the Complexity of Determinations}
\author{Joseph M. Hellerstein \\ \small{UC Berkeley and Amazon Web Services}}
\date{}

\begin{document}

\maketitle

\begin{abstract}
  Classical complexity theory measures the cost of computing a
  function, but many computational tasks require committing to
  one valid output among several.
  We introduce \emph{determination depth}---the minimum number
  of sequential layers of irrevocable commitments needed to
  select a single valid output---and show that no amount of
  computation can eliminate this cost.
  We exhibit relational tasks whose commitments are
  constant-time table lookups yet require exponential parallel
  width to compensate for any reduction in depth.
  A conservation law shows that enriching commitments merely
  relabels determination layers as circuit depth, preserving
  the total sequential cost.
  For circuit-encoded specifications, the resulting depth
  hierarchy captures the polynomial hierarchy
  ($\Sigma_{2k}^P$-complete for each fixed~$k$,
  PSPACE-complete for unbounded~$k$).
  In the online setting, determination depth is fully
  irreducible: unlimited computation between commitment
  layers cannot reduce their number.
\end{abstract}


\section{Introduction}

Classical complexity theory measures the cost of computing a
\emph{function}---producing the unique correct output.
But many tasks require resolving a \emph{relation}: choosing
one admissible outcome from a set, as when a text generator
selects one coherent continuation among many, or a
distributed system must agree on one valid output
among several.
We show that resolving a relation carries an intrinsic
sequential cost that no amount of computation can eliminate.
Classical hard functions are computationally expensive but 
have no cost on this axis.
However, there exist relational tasks in which every step is a
constant-time table lookup, yet any strategy using fewer
than $k$ sequential layers requires exponential width,
even with unbounded parallelism within each layer.
The bottleneck is not computation, nor communication, but
\emph{irrevocable commitment}.
This cost is invisible to classical complexity
measures---circuit depth, communication complexity,
adaptive queries---which track the cost of computing
or distributing an answer, not the cost of committing to one.

We call this choice process a \emph{determination}: a collection
of irrevocable commitments that narrow the admissible set
until a single outcome remains.
Commitments whose order affects which outcomes survive must
be sequenced into layers; those whose order does not matter
can be applied in parallel.
The \emph{determination depth} of a specification is the
minimum number of layers needed to resolve it---the
intrinsic cost of committing.

We establish three main results.
\emph{(1)~Exponential separation}
(Section~\ref{sec:exp-separation}): the depth--width
tradeoff is exponential, generalizing pointer
chasing~\cite{nisan1993rounds,mao2025gadgetless} from
functions to relations.
\emph{(2)~Computation cannot help}
(Section~\ref{sec:structural}): the sequential cost of
determination is governed by the dependency structure among
commitments.
Enriching the computational power of \emph{commitments}
merely relabels determination layers as circuit depth
within layers---the total sequential cost is conserved
(conservation law); the bound is tight.
Enriching the computational power of the \emph{strategy}
does not help either: under a commutative basis in the
online setting, an oracle characterization shows that
determination depth equals the minimum number of
observations of the growing history, regardless of
computation between observations.
\emph{(3)~Polynomial-hierarchy characterization}
(Section~\ref{sec:metacomplexity-body}): for circuit-encoded
specifications, ``is depth $\le k$?'' is
$\Sigma_{2k}^P$-complete for each fixed~$k$ and
PSPACE-complete for unbounded~$k$, certifying
determination depth as a semantic measure well-calibrated
against the classical hierarchy.

The exponential separation construction is autoregressive:
each commitment is a single irrevocable choice, and the
depth-$k$ lower bound says that $k$ sequential choices are
unavoidable regardless of parallelism.
This connects determination depth to chain-of-thought
reasoning, where each token of sequential inference is a
commitment layer (Appendix~\ref{sec:cot-appendix}).
Stable matching provides a complementary structural
connection: every finite determination depth arises as
the rotation-poset height of some instance
(Appendix~\ref{sec:stable-matching}).

The appendices apply the framework to
several classical settings, decomposing known costs into
intrinsic determination depth and circumstantial model
assumptions, and in several cases revealing new structure
that existing models do not capture.
Each determination layer carries a concrete physical cost: a
coordination round in distributed systems, a token of inference
in language models, a round of play in a
game. Existing models can hide this cost
by assuming away commitments, or inflate it by charging for
operations that resolve no ambiguity.

The appendices exhibit both phenomena.
BSP round complexity overcharges by counting communication
rounds that involve no commitment, and undercharges
by assuming fixed membership.
In extensive-form games, determination depth bounds the
game's \emph{strategic depth}---the number of moves where a
player must break a tie among equally-optimal options---which
can be strictly less than the game-tree depth; one player can
influence the other's strategic depth, creating a new
dimension of strategic interaction
(Appendix~\ref{sec:mechanism}).
In the distributed setting, the framework recovers the
Halpern--Moses impossibility of asynchronous common
knowledge~\cite{halpern1990knowledge}, with the
conservation law and exponential separation providing
quantitative bounds beyond the binary threshold.

\section{Framework}
\label{sec:specs}

We model both online and offline settings via
\emph{histories}---partially ordered sets of events that
represent everything that has occurred up to a given point.

\begin{definition}[History]
  A \emph{history} is a partially ordered set $H=(E,\to)$
  of events, where $\to$ is interpreted as causal
  precedence~\cite{lamport1978time}.
  Events are of two kinds:
  \emph{environment events} (inputs, messages, computation
  steps), which are given, and
  \emph{commitment events} (defined below), which are
  chosen.
\end{definition}

The results below are parametric in a class $\Hist$ of
histories under consideration.

\begin{definition}[History Extension]
  For histories $H_1=(E_1,\to_1)$ and $H_2=(E_2,\to_2)$ in
  $\Hist$, we write $H_1 \sqsubseteq H_2$ if
  $E_1 \subseteq E_2$ and $H_1$ is a downward-closed sub-poset
  of $H_2$ (i.e., whenever $e\in E_1$ and $e'\to e$ in $H_2$,
  then $e'\in E_1$).
\end{definition}

\noindent
$H_2$ extends $H_1$ by adding later events;
$\sqsubseteq$ is a prefix order representing information
growth.
A specification is \emph{online} if $\Hist$ contains
histories with proper extensions (the future is uncertain);
it is \emph{offline} if every history in $\Hist$ is maximal
under $\sqsubseteq$ (no further environment events will
occur).
The framework handles both uniformly.
An \emph{outcome} is an element of a set $O$ of externally
observable results.

\begin{definition}[Specification]
  A \emph{specification} is a mapping $\Spec : \Hist \to 2^O$
  assigning to each history a set of \emph{admissible outcomes}.
  We call $\Spec(H)$ the \emph{admissible set} at~$H$.
\end{definition}

\noindent
A specification may admit multiple outcomes for the same
history; resolving this ambiguity is the cost that
determination complexity measures.

\begin{definition}[Commitment]
  \label{def:commitment}%
  A \emph{commitment} $\varphi$ is a constraint on outcomes.
  We write $H \cdot \varphi$ for the history obtained by
  adding a commitment event $\varphi$ after all maximal
  events of~$H$
  (i.e., $e \to \varphi$ for every $e$ with no successor
  in~$H$), giving $H \sqsubseteq H \cdot \varphi$.
  Adding a commitment permanently narrows the admissible
  set (\emph{shrinkage}):
  $\Spec(H') \subseteq \Spec(H)$ for every
  $H' \sqsupseteq H \cdot \varphi$.
\end{definition}

\noindent
In the distributed setting, each agent's commitments
succeed only its own local maximal events, not the global
set; Appendix~\ref{sec:distributed-ck} develops this
extension fully.

Each commitment event $\varphi$ in a history carries its
context: the prefix of the history up to $\varphi$ determines
the admissible set at which $\varphi$ takes effect.
Environment events may change the admissible set in either
direction, but commitments can only shrink it, permanently
(shrinkage condition).
Not every relational specification requires commitments;
quantifying the requirement is the theme of this work.

\begin{definition}[Determination]
  \label{sec:determinations}%
  \label{def:refinement-determination}%
  The \emph{determination} $D(H)$ of a history $H$ over a
  commitment basis $\Phi$ is the sequence
  $\varphi_1 \cdot \cdots \cdot \varphi_m$
  ($\varphi_i \in \Phi$) of all commitment events in $H$,
  listed in causal order.
  Its \emph{cost} is $\cost(D(H)) \triangleq m$.
\end{definition}

\noindent
The commitment events in any single-agent history are totally
ordered (each follows all maximal events of its prefix), so
the determination is a well-defined sequence.
Writing $H_j$ for the prefix of $H$ through
$\varphi_j$, shrinkage accumulates:
$\Spec(H_j) \subseteq \Spec(H_{j-1})$ for each $j$.
A determination is \emph{valid} if each $\varphi_j$
preserves nonemptiness:
$\Spec(H') \neq \emptyset$ for every
$H' \sqsupseteq H_j$.
The goal is to narrow the admissible set to a single
outcome ($|\Spec(H)| = 1$); the formal notion of a
resolving strategy appears in
Definition~\ref{def:res-cost-depth}.

In the offline setting, environment events precede all
commitment events, so the determination is a single
consecutive block.
In the online setting, environment events may interleave
with commitments, breaking the determination into separate
runs; the effect of a commitment depends on which
environment events have occurred, so the online setting
requires an adaptive strategy
(formalized in Section~\ref{sec:algebra-full}).

\paragraph{Three forces on a history.}
\label{sec:three-forces}%
A history evolves under three distinct forces:
the \emph{environment} extends the history with new events
(input arrivals, message deliveries);
\emph{commitments} extend the history with events that
irrevocably narrow the admissible set;
and a \emph{determination strategy} selects which commitment
events to append, based on the history observed so far.
A determination strategy is a rule
$\sigma : \Hist \to \Phi^*$ mapping each observable history
to a sequence of commitments from the basis.
It is a semantic object: it specifies \emph{which} commitments
are made and \emph{when} in history, but says nothing about how they are
computed.
A protocol, algorithm, or agent is a realization of a
determination strategy; the complexity results in this paper
are stated at the strategy level.

\begin{example}[Consensus server]
  \label{ex:consensus}
  Consensus---agreeing on a single value---is a canonical
  coordination problem in distributed systems.
  A single sequential server that dictates a choice for the system
  is a simplified implementation, but sufficient to illustrate the
  three forces. The consensus specification maps a history to a set of
  candidate values.
  Proposals arrive over time (environment events,
  extending history $H \sqsubseteq H'$), each adding a value
  to the admissible set at the extended history.
  The strategy's determination consists of two commitments.
  First, $\varphi_c$ \emph{closes the proposal phase} at some history
  $H_c$: after $H_c \cdot \varphi_c$, the specification ignores
  subsequent proposals, so no environment extension of
  $H_c \cdot \varphi_c$ can enlarge the admissible set.
  Then $\varphi_w$ \emph{draws a winner}, resolving the 
  specification to one value
  from the closed proposal set.
  The draw depends on the closure---the strategy cannot
  choose a winner without knowing the candidates---so the
  two commitments require two sequential layers.
\end{example}

\paragraph{Commutation and depth.}
\label{sec:algebra}%
\label{def:commitment-commutation}%
Commitments $\varphi$ and $\psi$ \emph{commute with respect to}
$\Spec$ at $H$ if the order of extension does not matter:
$\Spec(H \cdot \varphi \cdot \psi) =
\Spec(H \cdot \psi \cdot \varphi)$.
Commutation is relative to a specification and may change
as the history grows: $\varphi$ and $\psi$ may commute at
$H$ but not at an extension $H' \sqsupseteq H$, or vice
versa.
This history-dependence does not arise in classical trace
theory~\cite{mazurkiewicz1977concurrent}, where independence
is static; in our setting, non-commuting commitments impose
irreducible sequencing precisely because their commutation
status can shift as the history evolves.
(A history records one linearization of the commitments;
commutation ensures the choice of linearization is
immaterial.)
The \emph{depth} of a determination is the minimum number
of \emph{commuting layers} needed to organize its
commitments: within each layer, all commitments commute
pairwise; across layers, they need not.
In the online setting, environment events may separate
consecutive commitments into distinct runs, and the depth
is the total number of layers across all runs
(formalized in Section~\ref{sec:algebra-full}).

\begin{definition}[Determination cost and determination depth]
  \label{def:res-cost-depth}
  A determination strategy $\sigma$ \emph{resolves} $\Spec$
  if, for every history $H$ arising under $\sigma$,
  $|\Spec(H)| = 1$ once all of $\sigma$'s commitments have
  been applied.
  Define
  $\Cost_\Phi(\Spec)
    \triangleq
    \min_{\sigma}\;
    \max_{H}\;
    \cost(D(H))$
  and
  $\Depth_\Phi(\Spec)
    \triangleq
    \min_{\sigma}\;
    \max_{H}\;
    \depth(D(H))$,
  where $\sigma$ ranges over strategies that resolve $\Spec$
  and $H$ ranges over histories arising under $\sigma$;
  or $\infty$ if no resolving strategy exists.
  In the offline setting (a single complete history), a
  strategy is simply a determination of that history, and
  the definition reduces to a min over determinations.
\end{definition}

\noindent
$\Cost_\Phi(\Spec)$ is the \emph{determination cost} and
$\Depth_\Phi(\Spec)$ the \emph{determination depth}.
Both are properties of the specification, not of any
particular strategy---the strategy is the witness, just as a
circuit witnesses the depth of a function.
Cost counts commitments; depth counts sequential layers.
In the online setting, the max over $H$ gives the
environment adversarial control bounded by the history
class~$\Hist$; this is a worst-case measure, and the
results in this paper focus on this setting.
In the consensus server example
(Example~\ref{ex:consensus}), the determination cost is~$2$
(close proposals, then draw) and the determination depth is
also~$2$: the draw depends on the closure, so the two
commitments cannot share a layer.

\subsection{Commitment bases}
\label{sec:bases}
\label{sec:intrinsic-depth}%

Determination depth depends on the commitment basis~$\Phi$.
We begin with the canonical minimal basis and then identify
three properties it satisfies that generalize to richer
settings.

\begin{definition}[Atomic basis]
  \label{def:atomic-basis}
  For a history $H$ and outcome $o$, the \emph{atomic
  commitment excluding $o$ at $H$} is the event $\varphi$
  that, when appended to $H$, excludes exactly $o$:
  $\Spec(H \cdot \varphi) = \Spec(H) \setminus \{o\}$,
  and $o \notin \Spec(H')$ for every
  $H' \sqsupseteq H \cdot \varphi$.
  It has no effect when appended to a history incomparable
  to $H$.
  The \emph{atomic basis} contains one such commitment per
  pair $(H, o)$.
\end{definition}

\noindent
Atomic commitments are the finest-grained irrevocable
commitments: each permanently excludes a single outcome at a
single history.

In online settings, depth under the atomic basis arises from
\emph{forward validity constraints}: two atomic
commitments made at history~$H$ cannot be applied in the
same layer if their joint effect empties the admissible
set at an extension of~$H$, as in the following
consensus example.
In this variant, environment events may also
\emph{shrink} the admissible set---e.g., by retracting
proposals:

\begin{example}[Three-valued consensus]
  \label{ex:three-valued}
  Consider a consensus instance
  (cf.\ Example~\ref{ex:consensus}) with three candidate
  values $O = \{a, b, c\}$ and
  $\Spec(H_0) = \{a,b,c\}$.
  Suppose the environment may extend $H_0$
  to a history
  $E \sqsupseteq H_0$ with $\Spec(E) = \{a,b\}$.
  The refined admissible sets
  $\Spec(H_0 \cdot \varphi_{\neg a})$ and
  $\Spec(E \cdot \varphi_{\neg a})$ are both non-empty
  ($\{b,c\}$ and $\{b\}$ respectively), so
  $\varphi_{\neg a}$ is valid at~$H_0$.
  But the joint refinement
  $\Spec(E \cdot \varphi_{\neg a} \cdot \varphi_{\neg b})$
  is empty: $\{a,b\} \setminus \{a,b\} = \emptyset$.
  Symmetrically, if the environment may also extend
  $H_0$ to $E'$ with $\Spec(E') = \{b,c\}$ and to $E''$
  with $\Spec(E'') = \{a,c\}$, then no pair of atomic
  commitments at $H_0$ is jointly valid, so resolution
  requires two layers: commit, observe the environment,
  then commit again.
\end{example}

\begin{definition}[Intrinsic determination depth]
  \label{def:intrinsic-depth}
  The \emph{intrinsic determination depth} of a specification
  $\Spec$ is
  $\Depth_\Phi(\Spec)$ where $\Phi$ is the atomic basis.
\end{definition}

\noindent
Intrinsic depth measures the irreducible sequential cost of
resolving a specification under the finest-grained
commitments, before any computational or architectural
assumptions are introduced.

\paragraph{Properties of the atomic basis.}
The atomic basis satisfies three properties that we now name
for use in the general theory.

\begin{definition}[Pointwise basis]
  \label{def:pointwise-basis}
  A commitment $\varphi$ is \emph{pointwise} if it acts as
  an independent filter on each outcome: whether
  $o \in \Spec(H \cdot \varphi)$ depends only on $o$
  and $H$, not on which other outcomes belong to $\Spec(H)$.
  A \emph{pointwise basis} consists entirely of pointwise
  commitments.
\end{definition}

\noindent
Pointwise commitments always commute (each outcome is filtered
independently, so application order is irrelevant).
Atomic commitments are pointwise, but the pointwise class is
strictly larger: a commitment that excludes multiple outcomes
simultaneously (e.g., ``keep only tuples with $v_1 = 3$'') is
pointwise but not atomic.
Commutativity can also hold for non-pointwise commitments:

\begin{definition}[Commutative basis]
  \label{def:commutative-basis}
  A commitment basis $\Phi$ is \emph{commutative} if every
  pair of commitments in $\Phi$ commutes at every history
  for every specification:
  $\Spec(H \cdot \varphi \cdot \psi)
    = \Spec(H \cdot \psi \cdot \varphi)$
  for all $\varphi, \psi \in \Phi$, all $\Spec$, and
  all $H \in \Hist$.
\end{definition}

\begin{definition}[Constant-depth basis]
  \label{def:natural-basis}
  A commitment basis $\Phi$ is \emph{constant-depth} for
  $\Spec$ if each commitment in $\Phi$ is computable in
  $O(1)$ circuit depth from a standard polynomial-size
  encoding of the current history.
\end{definition}

\noindent
The atomic basis is both commutative and constant-depth,
making it the canonical minimal basis for the online
setting where commitments are made before all information
has arrived.
The pointwise, commutative, and constant-depth properties
each appear in the results that follow.

\section{Exponential Depth--Width Separation}
\label{sec:exp-separation}

This section proves that determination depth is a meaningful
barrier: reducing it requires an exponential blowup in
parallel resources, even when each individual commitment is
trivial to compute.
The consensus server (Example~\ref{ex:consensus}) required
two sequential layers; we now exhibit tasks requiring $k$
layers for any~$k$.
We construct a family of constrained generation
tasks---modeling autoregressive text generation---in which
every commitment is a constant-time table lookup, yet any
strategy using $d' < k$ layers requires width at least
$(m/s)^{k-d'}$.

\subsection{The constrained generation task}
\label{sec:k-level-task}

Fix positive integers $k$ (number of positions), $m$ (domain size), and
$s$ (constraint sparsity), with $1 \le s \le m$.

\begin{definition}[$k$-position constraint chain]
  \label{def:k-level-chain}
  A \emph{$k$-position constraint chain} over domain~$[m]$ with
  sparsity~$s$ is a sequence of constraint functions
  $P_1 \subseteq [m]$ and
  $P_\ell \subseteq [m] \times [m]$ for $\ell = 2,\ldots,k$,
  such that $|P_1| = s$ and, for each $\ell \ge 2$ and each
  $a \in [m]$, the \emph{successor set}
  $P_\ell(a,\cdot) \triangleq \{b \in [m] : (a,b) \in P_\ell\}$
  satisfies $|P_\ell(a,\cdot)| = s$.
\end{definition}

\begin{definition}[$k$-position generation task]
  \label{def:k-level-task}
  Given a $k$-position constraint chain $(P_1,\ldots,P_k)$
  (encoded as environment events in an offline history), the
  \emph{constrained generation task} is to output any
  admissible tuple in
  \[
    \Spec_{\mathsf{cg}}(P_1,\ldots,P_k)
    \;\triangleq\;
    \bigl\{\,
    (v_1,\ldots,v_k) \in [m]^k
    \;\bigm|\;
    v_1 \in P_1
    \text{ and }
    (v_{\ell-1},v_\ell) \in P_\ell
    \text{ for } \ell=2,\ldots,k
    \,\bigr\}.
  \]
\end{definition}

\noindent
The specification is relational ($s^k$ admissible tuples when
$s \ge 2$) and models autoregressive generation: each position
is a token, $[m]$ is the vocabulary, and the $k - 1$
\emph{links} $1 \to 2 \to \cdots \to k$ encode local
coherence constraints.

\paragraph{Commitment basis.}
For each position $\ell \in \{1,\ldots,k\}$ and value
$v \in [m]$, the commitment $\varphi_{\ell,v}$ filters to
tuples with $v_\ell = v$.
These are pointwise, but the links introduce dependency:
committing to $v_\ell$ before $v_{\ell-1}$ may violate a
constraint.
Each commitment is a constant-time table lookup.

\begin{observation}[Sequential strategy]
  \label{lem:sequential-ub}
  Committing to one position per layer in order
  $v_1, v_2, \ldots, v_k$ always produces a valid tuple---each
  $v_\ell \in P_\ell(v_{\ell-1},\cdot)$ exists by
  sparsity---using $k$~layers and width~$1$.
\end{observation}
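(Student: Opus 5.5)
The plan is to argue by induction on the position index~$\ell$. The inductive invariant is that after layer~$\ell$ the history $H_\ell$ (the offline history followed by $\varphi_{1,v_1},\ldots,\varphi_{\ell,v_\ell}$) satisfies
\[
  \Spec_{\mathsf{cg}}(H_\ell)
  = \bigl\{(v_1,\ldots,v_\ell,u_{\ell+1},\ldots,u_k) \in \Spec_{\mathsf{cg}}(P_1,\ldots,P_k)\bigr\}
  \neq \emptyset .
\]
In words, the admissible set is exactly the set of admissible tuples whose first $\ell$ coordinates are the committed values, and this set is nonempty. The commitments $\varphi_{\ell,v}$ are pointwise filters, so applying them in sequence simply intersects these coordinate constraints with $\Spec_{\mathsf{cg}}$. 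The equality part of the invariant is therefore immediate, and the real content is nonemptiness, which is what validity requires (Definition of a valid determination). The offline setting has no further environment extensions to consider.

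\textbf{Base case.} Since $|P_1| = s \ge 1$, some $v_1 \in P_1$ exists, and we commit $\varphi_{1,v_1}$.

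\textbf{Inductive step.} Given a valid prefix $v_1,\ldots,v_{\ell-1}$, the sparsity condition gives $|P_\ell(v_{\ell-1},\cdot)| = s \ge 1$, so we may pick $v_\ell$ from this successor set. Nonemptiness then follows by a greedy extension. Every $a \in [m]$ has at least one successor at each later link, so we can choose $u_{\ell+1} \in P_{\ell+1}(v_\ell,\cdot)$, then $u_{\ell+2} \in P_{\ell+2}(u_{\ell+1},\cdot)$, and so on up to position~$k$. The resulting tuple satisfies every constraint, so it lies in the admissible set. Hence each commitment preserves nonemptiness.

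After $k$ layers all coordinates are fixed, so $|\Spec_{\mathsf{cg}}(H_k)| = 1$ and the strategy resolves the specification. Each layer holds a single commitment, so the width is~$1$ and the layer count is~$k$. Each $v_\ell$ is found by a table lookup into $P_\ell(v_{\ell-1},\cdot)$.

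The only point needing care is that the sparsity condition is stated for \emph{every} $a \in [m]$, not just for reachable prefixes. This is exactly what makes a greedy choice never dead-end, so no lookahead is needed. Once that is noted there is no real obstacle, and the statement is an upper bound witnessing $\Depth \le k$.
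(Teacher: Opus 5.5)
Your proof is correct and takes the same approach as the paper, which justifies the observation in one line: each successor set $P_\ell(v_{\ell-1},\cdot)$ is nonempty by sparsity, so the greedy left-to-right choice never gets stuck. Your induction makes this explicit. It also checks, via the greedy-extension argument, that each intermediate commitment preserves nonemptiness, which the paper leaves implicit.
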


\subsection{Strategy Model and Lower Bound}

We formalize what a strategy with fewer than $k$ layers can do.

\begin{definition}[Uninformed link]
  \label{def:uninformed-link}
  Given a layer assignment $S_1,\ldots,S_{d'}$, a \emph{link}
  $\ell-1 \to \ell$ (for $\ell \in \{2,\ldots,k\}$) is \emph{informed} if
  position $\ell-1$ is assigned to a strictly earlier layer than position $\ell$,
  and \emph{uninformed} otherwise (i.e., $v_{\ell-1}$ is not
  yet determined when $v_\ell$ is committed).
\end{definition}

\begin{observation}[Uninformed-link count]
  \label{lem:uninformed-link-count}
  Any assignment of $k$ positions into $d' < k$ layers
  leaves at least $k - d'$ uninformed links.
\end{observation}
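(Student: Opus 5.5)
The plan is to bound the number of \emph{informed} links from above. There are exactly $k-1$ links, so the claim is equivalent to saying at most $d'-1$ of them are informed. Write $\lambda(\ell)\in\{1,\ldots,d'\}$ for the layer containing position~$\ell$. By Definition~\ref{def:uninformed-link}, a link $\ell-1\to\ell$ is informed exactly when $\lambda(\ell)\ge\lambda(\ell-1)+1$. The natural potential is the layer index along the chain $1\to2\to\cdots\to k$.

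\textbf{Main step (telescoping).} Sum the increments $\lambda(\ell)-\lambda(\ell-1)$ over $\ell=2,\ldots,k$. The sum telescopes to $\lambda(k)-\lambda(1)\le d'-1$. Each informed link contributes at least~$1$. If every uninformed link contributes a nonnegative increment, the number of informed links is at most $d'-1$, and so at least $(k-1)-(d'-1)=k-d'$ links are uninformed, as claimed. For such assignments this telescoping is routine.

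\textbf{Expected obstacle: nonnegative uninformed increments.} The step I expect to be the real difficulty is justifying that uninformed links never \emph{decrease} the layer index. This holds automatically when the layers are filled in position order, i.e.\ $\lambda$ is nondecreasing, which is the autoregressive reading of the strategy model in this section. For an arbitrary assignment it must be argued, and it is not automatic. For instance, with $k=4$ and $d'=2$, the alternating assignment $\lambda=(1,2,1,2)$ has only one uninformed link, since a backward jump lets the chain ``reuse'' the budget of $d'$ layers. My first attempt would be a normalization argument: show that any strategy can be replaced by one whose layer map is nondecreasing in position without decreasing its set of uninformed links, for example by sorting each maximal informed run into consecutive layers. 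If that fails, I would restrict the observation to nondecreasing layer maps. That is precisely the case used downstream in the lower bound, and it is where the telescoping argument above goes through verbatim.
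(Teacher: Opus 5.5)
Your example $\lambda=(1,2,1,2)$ is more than an obstacle to the telescoping argument: it is a counterexample to the observation itself. With $k=4$ and $d'=2$, the links $1\to2$ and $3\to4$ are informed and only $2\to3$ is uninformed. That is one uninformed link, fewer than $k-d'=2$.

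So the normalization step cannot succeed. The observation is a claim about every assignment, and this particular assignment violates it. Replacing it by some other, nondecreasing assignment says nothing about the one in hand, whichever direction your ``without decreasing its set of uninformed links'' is meant to go.

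The paper gives no argument here. The count is asserted as an observation and then invoked in the proofs of Theorem~\ref{thm:exp-separation}(b) and Theorem~\ref{thm:three-way}. You are therefore not missing a hidden idea. Your telescoping proof is correct and complete exactly for nondecreasing layer maps, and that is the only setting in which the stated bound holds.

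You should commit to the counterexample rather than hedge. For arbitrary assignments the sharp bound is $u \ge \lceil k/d' \rceil - 1$, where $u$ is the number of uninformed links:
\begin{itemize}
\item The $u$ uninformed links cut $1,\ldots,k$ into $u+1$ maximal blocks whose internal links are all informed.
\item Layer indices strictly increase within each block, so each block has at most $d'$ positions.
\item Hence $k \le (u+1)d'$.
\item The cyclic assignment $\lambda(\ell) = ((\ell-1) \bmod d') + 1$ attains the bound.
\end{itemize}

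Your remark that the monotone case is ``precisely the case used downstream'' is also not accurate. Definition~\ref{def:strategy-model} places no restriction on the groups $S_1,\ldots,S_{d'}$, and the paper explicitly says the model ``allows arbitrary layer assignments.'' Restricting the observation to nondecreasing maps therefore leaves the exponent $k-d'$ in Theorem~\ref{thm:exp-separation}(b) and Theorem~\ref{thm:three-way} unsupported for non-monotone strategies. One must either restrict the strategy model to ordered contiguous blocks, or weaken the exponent to $\lceil k/d' \rceil - 1$.
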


\begin{definition}[$d'$-layer strategy with width $w$]
  \label{def:strategy-model}
  A \emph{$d'$-layer strategy with width~$w$} partitions the
  $k$ positions into $d'$ groups $S_1,\ldots,S_{d'}$ and
  builds $w$ candidate tuples in parallel.
  In layer~$r$, the strategy assigns values to the positions
  in group $S_r$ for each candidate, depending only on the
  full input and earlier-layer assignments.
  The strategy resolves the specification if any candidate
  lies in $\Spec_{\mathsf{cg}}$; the final selection among
  valid candidates adds at most one layer, which does not
  affect the asymptotic lower bound.
\end{definition}

At each uninformed link, the strategy must guess a value
without knowing its predecessor; the theorem below shows
that each guess fails with probability controlled by a
distributional parameter~$\gamma$.

\begin{definition}[Conditionally $\gamma$-spread distribution]
  \label{def:gamma-spread}
  A distribution $\mathcal{D}$ over $k$-position constraint chains
  is \emph{conditionally $\gamma$-spread} if for every
  $\ell \in \{2, \ldots, k\}$, every $a, b \in [m]$, and every
  fixing of all constraint functions except the row
  $P_\ell(a, \cdot)$:
  $\Pr_{\mathcal{D}}[\,b \in P_\ell(a, \cdot)
    \mid \text{all other constraint data}\,]
    \le \gamma$.
\end{definition}

\noindent
In short, membership of any one element in a row has
probability $\le \gamma$, conditioned on the
rest of the chain.

\begin{definition}[Random constraint distribution]
  \label{def:random-ensemble}
  In the \emph{random $(k,m,s)$-distribution}, every row
  ($P_1$ and each $P_\ell(a,\cdot)$) is an independent
  uniformly random $s$-subset of~$[m]$.
  This distribution is conditionally $(s/m)$-spread.
\end{definition}

\noindent
Over a distribution on inputs, a strategy may resolve some
constraint chains but not others; we measure the probability
of resolution.

\begin{theorem}[Exponential depth--width separation]
  \label{thm:exp-separation}
  Let $\mathcal{D}$ be a conditionally $\gamma$-spread
  distribution over $k$-position constraint chains with
  $\gamma < 1$.
  \begin{enumerate}[label=(\alph*)]
    \item A sequential strategy with $k$ layers and width~$1$
          resolves $\Spec_{\mathsf{cg}}$ with probability~$1$.
    \item For any $d' < k$, any deterministic $d'$-layer
          strategy with width~$w$ resolves
          $\Spec_{\mathsf{cg}}$ with probability at
          most $w \cdot \gamma^{\,k-d'}$.
    \item Consequently, achieving constant resolution
          probability with $d' < k$ layers requires width
          $w \ge (1/\gamma)^{k-d'}$.
  \end{enumerate}
  For the random $(k,m,s)$-distribution, $\gamma = s/m$, giving
  width $w \ge (m/s)^{k-d'}$.
\end{theorem}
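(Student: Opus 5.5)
Part~(a) is Observation~\ref{lem:sequential-ub} applied pointwise: every chain in the support of $\mathcal{D}$ has $|P_1| = s \ge 1$ and nonempty successor rows, so committing $v_1 \in P_1$ and then $v_\ell \in P_\ell(v_{\ell-1},\cdot)$ layer by layer always yields an admissible tuple. Part~(c) follows directly from~(b): if the success probability is at least a constant $c$, then $w\gamma^{k-d'} \ge c$, so $w \ge c\,(1/\gamma)^{k-d'}$. Absorbing the constant (or taking $c=1$ in the sharp form stated) gives the claim, and substituting $\gamma = s/m$ from Definition~\ref{def:random-ensemble} gives the final sentence. The substance is therefore~(b).

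For~(b), I will use a union bound over the $w$ candidates. It then suffices to show that a single fixed candidate, produced deterministically by the strategy, is admissible with probability at most $\gamma^{k-d'}$. By Observation~\ref{lem:uninformed-link-count}, the layer assignment $S_1,\ldots,S_{d'}$ is fixed by the strategy independent of the input, and it has a set $U$ of at least $k-d'$ uninformed links $\ell-1\to\ell$. For each such link, the candidate must satisfy the event $A_\ell = \{v_\ell \in P_\ell(v_{\ell-1},\cdot)\}$. The goal is to show $\Pr[\bigcap_{\ell\in U} A_\ell] \le \gamma^{|U|}$ by peeling off the events one at a time using conditional $\gamma$-spread (Definition~\ref{def:gamma-spread}).

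The main obstacle is the peeling itself. The strategy's values depend on the \emph{full input}, including the very row $P_\ell(v_{\ell-1},\cdot)$, so a naive conditioning argument fails: $v_\ell$ may be chosen by reading rows $P_\ell(a,\cdot)$ for every $a$. What "uninformed" buys is that $v_\ell$ is fixed at a layer where $v_{\ell-1}$ is not yet determined. My first attempt is to condition on all constraint data except the single row $P_\ell(a,\cdot)$ for the \emph{realized} $a=v_{\ell-1}$. But $a$ itself depends on the input, so I will instead sum over $a$ and $b$ and bound $\Pr[v_{\ell-1}=a,\ v_\ell=b,\ b\in P_\ell(a,\cdot)]$. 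This step is delicate because $v_\ell=b$ can depend on row $a$. If the strict statement resists, the fallback I would adopt is the strategy model's intended reading: at an uninformed link, the committed value $v_\ell$ is a function of data not including the row $P_\ell(v_{\ell-1},\cdot)$ that will later be selected. I would make this explicit as the content of "not yet determined", so that conditioning on everything except that row fixes $b$ while the row remains $\gamma$-spread, giving a factor $\gamma$.

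I will process the uninformed links in decreasing $\ell$, so that each conditioning step leaves the remaining events measurable with respect to the data already fixed. Chaining the conditional bounds then gives $\gamma^{|U|} \le \gamma^{k-d'}$ per candidate. Multiplying by $w$ completes~(b).
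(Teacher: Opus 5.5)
Your skeleton (a union bound over the $w$ candidates, one factor $\gamma$ per uninformed link from conditional spread, and $|U|\ge k-d'$ from Observation~\ref{lem:uninformed-link-count}) is exactly the paper's proof. The obstacle you flag is real. The paper gets past it only by asserting that $v_{\ell_j}$ is picked ``without seeing'' $P_{\ell_j}(v^*_{\ell_j-1},\cdot)$, which Definition~\ref{def:strategy-model} does not provide. Under that definition part~(b) is false for every $k\ge 2$, so your sum over $(a,b)$ cannot be completed. With $d'=1$ and $w=1$, committing $v_1=\min P_1$ and $v_\ell=\min P_\ell(v_{\ell-1},\cdot)$, each a function of the input alone, resolves every chain. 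Some informational restriction is therefore unavoidable, but yours constrains only $v_\ell$ and is too weak. If position $\ell-1$ sits in a strictly later layer (still an uninformed link), the predecessor may read $P_\ell$ and pick $a$ with $v_\ell\in P_\ell(a,\cdot)$. The row ``that will later be selected'' is then selected by its own contents, and conditioning on the other data does not leave it $\gamma$-spread. Concretely, take $k=3$, $S_1=\{1,3\}$, $S_2=\{2\}$, $v_1=\min P_1$, $v_3=1$, and choose $v_2\in P_2(v_1,\cdot)$ with $1\in P_3(v_2,\cdot)$ when possible. Under the random $(3,m,s)$-distribution this succeeds with probability $1-(1-s/m)^s>s/m=\gamma$ for $2\le s<m$, even though $v_3$ uses no data at all. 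What does work is to require that nothing in any layer consults $P_\ell$ for $\ell\in U$ (e.g., position $\ell$ reads only $P_\ell(v_{\ell-1},\cdot)$ for an already-fixed $v_{\ell-1}$). Then conditioning on all data outside $\{P_\ell:\ell\in U\}$ fixes the candidate, and the events $A_\ell$ concern rows of distinct families $P_\ell$. The chain rule with conditional spread then gives $\gamma^{|U|}$, with no ordering of the links needed.

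Second, the count $|U|\ge k-d'$, which you and the paper both take from Observation~\ref{lem:uninformed-link-count}, holds only when the groups $S_r$ are contiguous blocks of positions. The strategy model, however, allows arbitrary partitions. For $k=4$, $S_1=\{1,3\}$, $S_2=\{2,4\}$, only the link $2\to 3$ is uninformed. Even under the restrictive lookup model just described, the strategy $v_1=\min P_1$, $v_3=1$, $v_2=\min P_2(v_1,\cdot)$, $v_4=\min P_4(v_3,\cdot)$ succeeds with probability exactly $s/m=\gamma>\gamma^{2}=\gamma^{k-d'}$. In general the layer indices $L(1),\ldots,L(k)\in[d']$ split into strictly increasing runs of length at most $d'$. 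So the sharp bound is $|U|\ge\lceil k/d'\rceil-1$, attained by the cyclic assignment. With the information model repaired, your argument yields $w\gamma^{\lceil k/d'\rceil-1}$ for arbitrary partitions, and the stated $w\gamma^{k-d'}$ only for contiguous blocks.
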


The model is strictly more permissive than autoregressive
generation (which fixes left-to-right order and
width~$1$): it allows arbitrary layer assignments,
arbitrary within-layer parallelism, and $w$ independent
candidates.
The only constraint is that commitments within a layer
cannot depend on same-layer outcomes---the defining
property of parallel execution---so the lower bound
applies to every parallel strategy that respects this
constraint, including beam search, speculative decoding,
and diffusion-style refinement.

Part~(a) is Observation~\ref{lem:sequential-ub}.
Part~(c) follows immediately from part~(b), which we prove now:

\begin{proof}[Proof of Theorem~\ref{thm:exp-separation}(b)]
  Fix a deterministic $d'$-layer strategy with width~$w$ and
  layer assignment $S_1,\ldots,S_{d'}$.
  By Observation~\ref{lem:uninformed-link-count}, at least $t \ge k - d'$
  links $\ell_1, \ldots, \ell_t$ are uninformed.

  Fix a single candidate tuple $(v_1,\ldots,v_k)$ and
  consider an uninformed link $\ell_j$.
  The strategy picks $v_{\ell_j}$ without seeing
  $P_{\ell_j}(v_{\ell_j - 1}^*, \cdot)$, which is still
  $\gamma$-spread by assumption, so
  $\Pr[(v_{\ell_j - 1}^*, v_{\ell_j}) \in P_{\ell_j}]
    \le \gamma$.
  Multiplying across all $t \ge k - d'$ links gives
  success probability at most $\gamma^{k-d'}$ per candidate;
  a union bound over $w$ candidates gives the result.
\end{proof}

\begin{corollary}[Hard instances for any fixed strategy]
  \label{cor:hard-instances}
  For every $k, m, s$ with $m \ge 2s$, every $d' < k$, and
  every (possibly randomized) $d'$-layer strategy with
  width~$w$, there exists a constraint chain on which the
  strategy resolves $\Spec_{\mathsf{cg}}$ with probability
  at most
  $w \cdot (s/m)^{k-d'}$.
\end{corollary}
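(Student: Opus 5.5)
The plan is to reduce the corollary to Theorem~\ref{thm:exp-separation}(b) under the random $(k,m,s)$-distribution $\mathcal{D}$, then apply an averaging (Yao-style) argument. By Definition~\ref{def:random-ensemble}, $\mathcal{D}$ is conditionally $\gamma$-spread with $\gamma = s/m$. The hypothesis $m \ge 2s$ gives $\gamma \le 1/2 < 1$, so the theorem applies. Part~(b) then says that every \emph{deterministic} $d'$-layer strategy with width~$w$ resolves $\Spec_{\mathsf{cg}}$ with probability at most $w\,(s/m)^{k-d'}$ over a chain drawn from~$\mathcal{D}$.

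Next, I would handle randomization. A randomized $d'$-layer strategy of width~$w$ is a distribution $\mu$ over deterministic strategies $A_\rho$ of the same layer count and width. The randomness may also choose the layer assignment $S_1,\ldots,S_{d'}$; since part~(b) holds for every assignment, this causes no trouble. Write $p(\rho, P)$ for the indicator that $A_\rho$ resolves chain~$P$. The strategy's success probability on a fixed chain $P$ is $q(P) = \mathbb{E}_{\rho\sim\mu}[p(\rho,P)]$. Because $\rho$ and $P \sim \mathcal{D}$ are independent, Fubini gives
\[
  \mathbb{E}_{P\sim\mathcal{D}}[q(P)]
  = \mathbb{E}_{\rho\sim\mu}\bigl[\Pr_{P\sim\mathcal{D}}[A_\rho \text{ resolves } P]\bigr]
  \le w\,(s/m)^{k-d'} .
\]
A random variable cannot exceed its mean everywhere, so some chain $P^\ast$ in the support of $\mathcal{D}$ satisfies $q(P^\ast) \le w\,(s/m)^{k-d'}$. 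This chain is the required hard instance. Every chain in the support is a valid $k$-position constraint chain with sparsity~$s$, because each row is an $s$-subset of $[m]$.

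The main point needing care is the first step: checking that part~(b) really applies to each $A_\rho$, and that fixing $\rho$ does not correlate the strategy with the input. This holds because the strategy's coins are drawn independently of the chain; once $\rho$ is fixed, $A_\rho$ is a legitimate deterministic $d'$-layer strategy depending only on the input and on earlier layers. The remaining detail is that the support of $\mathcal{D}$ is finite, so the minimum is attained; this is immediate since $[m]$ and $k$ are finite. With the hypotheses checked, the inequality is exactly part~(b), and the averaging step finishes the proof.
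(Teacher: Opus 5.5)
Your proposal matches the paper's proof: apply Theorem~\ref{thm:exp-separation}(b) to the random $(k,m,s)$-distribution, where $m\ge 2s$ gives $\gamma=s/m\le 1/2<1$, then average over chains; your Fubini step is the easy direction of Yao's principle that the paper cites, and the reduction itself is sound. One caveat applies equally to the paper: both arguments rest entirely on part~(b), yet under Definition~\ref{def:strategy-model} as literally written (each layer may be any function of the full input) the one-layer, width-one strategy from the $d=1$ case of Theorem~\ref{thm:conservation-tight} resolves every chain, so part~(b) and this corollary tacitly need a restriction on what a layer may read or compute, which neither you nor the paper states.
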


\noindent
Theorem~\ref{thm:exp-separation}(b) bounds the expected
resolution probability of any deterministic strategy over the
random $(k,m,s)$-distribution by $w \cdot (s/m)^{k-d'}$,
so for each deterministic strategy some chain exhibits this
bound.
Yao's minimax principle~\cite{yao1977probabilistic} gives
the same guarantee for randomized strategies.

\begin{remark}[Approximate determination and the exchange rate]
  \label{sec:approx-determination}
  When a strategy tolerates constraint violations, the
  expected violation count is at least
  $(k - d')(1 - \gamma)$ by Theorem~\ref{thm:exp-separation};
  for the random $(k,m,s)$-distribution this is tight.
  The spread parameter $\gamma$ controls the exchange rate
  between distributional structure and sequential depth:
  a distribution with more exploitable structure has a
  smaller effective $\gamma$, reducing the layers needed.
  At one extreme ($\gamma \to 0$, perfect prediction), no
  layers are needed; at the other ($\gamma = s/m$, no
  exploitable structure), the full $k$ layers are required.
\end{remark}

\begin{remark}[Generality beyond autoregressive generation]
  \label{rem:diffusion}
  The strategy model applies to any procedure operating in
  sequential rounds of parallel refinement, including
  diffusion models.
  A diffusion model with $T$ denoising steps is a $T$-round
  strategy; if $T < k$, the uninformed-link argument applies
  and the exponential lower bound holds with $d' = T$.
\end{remark}

\subsection{Distributed Extension and Pointer Chasing}

The constrained generation task shares the same
constraint-chain structure as $k$-step pointer
chasing~\cite{nisan1993rounds,papadimitriou1984communication},
differing only in sparsity~$s$, and plays the same role for
determination complexity that pointer chasing plays for
communication complexity.
The following tradeoff extends the width bound to a
distributed setting where communication can substitute for
width.
Consider a $k$-party communication model in which the
constraint chain is distributed: player~$\ell$ holds
$P_\ell$ privately and sends $b_\ell$ bits to a central
referee, who must output an admissible tuple.

\begin{theorem}[Depth--width--communication tradeoff]
  \label{thm:three-way}
  For the random $(k,m,s)$-distribution in the $k$-party model,
  any $d$-round protocol with width~$w$ and per-player
  communication $b_\ell$ bits satisfies
  $\log w + \sum_{\ell \in U} b_\ell
    \ge |U| \cdot \log(m/s)$,
  where $U$ is the set of uninformed links ($|U| \ge k - d$).
\end{theorem}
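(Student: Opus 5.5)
The plan is to reduce Theorem~\ref{thm:three-way} to the same uninformed-link argument used for Theorem~\ref{thm:exp-separation}(b), with the $b_\ell$ bits of communication playing the role of partial information about the unseen rows. Fix a deterministic $d$-round protocol with width~$w$. By Observation~\ref{lem:uninformed-link-count}, its layer assignment has a set $U$ of uninformed links with $|U|\ge k-d$. Condition on all constraint data except the rows $P_\ell(a,\cdot)$ for $\ell\in U$. Under the random $(k,m,s)$-distribution these rows are independent uniformly random $s$-subsets. We must show that resolution with constant probability forces the displayed inequality. Equivalently, we show that the success probability is at most
\[
  w\cdot 2^{\sum_{\ell\in U} b_\ell}\cdot (s/m)^{|U|},
\]
and then take logarithms.

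\textbf{Key step: a counting argument.} The referee's behavior, and hence the $w$ candidate tuples, is a deterministic function of the transcript, meaning the messages from all players. The players $\ell\notin U$ have fixed data, so their messages are fixed. For $\ell\in U$, player $\ell$'s message ranges over at most $2^{b_\ell}$ values. The joint transcript restricted to $U$ therefore takes at most $2^{\sum_{\ell\in U}b_\ell}$ values, and so the full set of candidates over all inputs is contained in a fixed list of at most $W=w\cdot 2^{\sum_{U}b_\ell}$ tuples. This list is independent of the randomness in the $U$-rows; it depends only on the conditioned data.

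\textbf{Per-tuple bound.} For any fixed tuple $(v_1,\ldots,v_k)$, success requires $v_\ell\in P_\ell(v_{\ell-1},\cdot)$ for every $\ell\in U$. Each such event concerns an independent uniformly random $s$-subset and has probability $s/m$, so the probability that this tuple is admissible is at most $(s/m)^{|U|}$. A union bound over the $W$ tuples gives the claimed success bound. For constant success probability (say at least $1/2$, absorbing constants), this yields
\[
  \log w+\sum_{\ell\in U}b_\ell \;\ge\; |U|\log(m/s)-O(1),
\]
which is the stated inequality up to the additive constant that the theorem suppresses.

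\textbf{Main obstacle: interaction and the rows the players choose to read.} The delicate point is the meaning of ``uninformed'' once communication is present. Player $\ell$ knows all of $P_\ell$, but the relevant row $P_\ell(v_{\ell-1}^*,\cdot)$ is indexed by a value that is not committed until the same or a later round. In a multi-round protocol, later messages could depend on earlier referee broadcasts. My first attempt is to take the model literally: each player sends $b_\ell$ bits in total, possibly adaptively. The count of transcripts is still bounded by $2^{\sum b_\ell}$, since the concatenation of player $\ell$'s messages is a string of at most $b_\ell$ bits, given the fixed responses of the others. The union-over-transcripts argument then goes through unchanged. If this fails because of subtle adaptivity, the fallback is an information-theoretic version: bound the mutual information between the $U$-rows and the transcript by $\sum_U b_\ell$, and apply a Fano-style bound. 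Each uninformed row has entropy $\log\binom{m}{s}$, and pinning down one member of it costs at least $\log(m/s)$ bits.
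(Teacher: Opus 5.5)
Your argument is correct, but it is organized differently from the paper's. The paper's route has three steps:
\begin{itemize}
\item it proves a per-link message-decoding lemma, $\Pr[g(M(R))\in R]\le 2^{b}s/m$ for a uniformly random $s$-subset $R$;
\item it applies this to the uninformed links one at a time, conditioning on everything outside the relevant row, and chains the factors $\min(1,2^{b_\ell}s/m)$;
\item it then union-bounds over the $w$ candidates.
\end{itemize}
You instead do a single global enumeration. With the non-$U$ data fixed, every (transcript, candidate) pair is a fixed tuple, chosen before the $U$-rows are revealed.

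The paper's form gives the sharper intermediate bound $w\prod_{\ell\in U}\min(1,2^{b_\ell}s/m)$, in which surplus bits on one link cannot cover a deficit on another. You recover exactly this bound by running your argument on $U'=\{\ell\in U: b_\ell<\log(m/s)\}$.

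Your route is also the more robust one. On an uninformed link, $v_{\ell-1}$ is committed no earlier than $v_\ell$, so it may depend on player~$\ell$'s own message. In a multi-round protocol, other players' messages may also depend on player~$\ell$'s through the referee's broadcasts. Both points make the paper's claims delicate: that $v_{\ell_j-1}$ is fixed, and that the row stays uniform after conditioning on the other players' messages. Your enumeration needs neither claim.

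It also shows that uninformedness is never used. Since the referee sees only messages, you get $\Pr[\text{success}]\le w\,2^{\sum_{\ell\in S}b_\ell}(s/m)^{|S|}$ for every set $S$ of links, and the theorem is the case $S=U$.

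Two small fixes are worth making.
\begin{itemize}
\item For adaptive protocols, the non-$U$ messages are not fixed. Together with the referee's broadcasts, they are determined by the history. So in the protocol tree only the $U$-players' bits branch, at most $\sum_{\ell\in U}b_\ell$ of them on any path, which gives at most $2^{\sum_{\ell\in U}b_\ell}$ transcripts. Your first attempt is therefore right, and the Fano fallback is unnecessary.
\item You need not concede an additive constant. Requiring resolution on every instance gives $1\le w\,2^{\sum_{\ell\in U}b_\ell}(s/m)^{|U|}$, which is exactly the stated inequality. This is also what the paper's last step actually requires: its phrase ``positive probability'' should read ``probability~$1$''. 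With success probability only $c$, the loss is $\log(1/c)$.
\end{itemize}
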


\noindent
Setting $b_\ell = 0$ recovers the width bound; setting $w = 1$
gives a communication lower bound.
Both are tight.
Proof in Appendix~\ref{sec:conservation-tight-proof}.

\begin{observation}[Pointer chasing as degenerate determination]
  \label{prop:pointer-chasing}
  At sparsity $s = 1$ the specification is functional
  (determination depth~$0$); the classical round complexity
  of $k$-step pointer
  chasing~\cite{nisan1993rounds} arises entirely from
  distributing the chain across agents, not from relational
  choice.
  At $s \ge 2$ the specification is relational (determination
  depth~$k$); the exponential width bound applies even to a
  centralized machine.
  The transition occurs exactly at $s = 1$ vs.\ $s \ge
  2$---the boundary between functions and relations.
\end{observation}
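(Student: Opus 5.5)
The statement has two regimes, $s=1$ and $s \ge 2$, and the plan is to treat them separately, using only Definition~\ref{def:res-cost-depth}, the commitment basis of Section~\ref{sec:k-level-task}, and Theorem~\ref{thm:exp-separation}.

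\emph{The case $s=1$.}
Here $P_1$ is a singleton and every row $P_\ell(a,\cdot)$ is a singleton, so the chain determines a unique tuple: $v_1$ is the element of $P_1$, and $v_\ell$ is the element of $P_\ell(v_{\ell-1},\cdot)$. Hence $|\Spec_{\mathsf{cg}}(H)| = 1$ already on the offline history, before any commitment is made. The empty strategy therefore resolves $\Spec_{\mathsf{cg}}$, giving cost and depth~$0$.

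The remaining content for $s=1$ is the attribution of round complexity. When the chain is split among players as in Theorem~\ref{thm:three-way}, the uninformed-link penalty $|U|\log(m/s)$ becomes $|U|\log m$. That is the standard Nisan--Wigderson pointer-chasing bound~\cite{nisan1993rounds}. Its cost comes from the referee not knowing $P_\ell$, since the output is a function of the whole input; it does not come from any choice among admissible outcomes. I would present this part as an interpretive remark that cites the classical result, not as something to prove.

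\emph{The case $s\ge2$, upper bound.}
Observation~\ref{lem:sequential-ub} gives a strategy with $k$ layers.

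\emph{The case $s\ge2$, lower bound.*}
This is the main obstacle. Theorem~\ref{thm:exp-separation} only bounds the success probability of width-limited strategies over a distribution. By contrast, $\Depth$ is defined as a minimum over resolving strategies on the given history, and an offline strategy sees the whole chain. So "depth $k$" has to be read in the paper's parallel-strategy sense: a strategy whose layer-$r$ commitments may depend only on earlier-layer values.

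In that sense, Corollary~\ref{cor:hard-instances} does the work when $m \ge 2s$. For any $d'<k$ and any width $w < (m/s)^{k-d'}$, there is a chain on which the strategy fails. So no strategy of bounded width resolves every instance with fewer than $k$ layers, and the depth is exactly $k$ uniformly over instance families.

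If instead the fully informed offline reading is intended, the plan is different. Pointwise commitments $\varphi_{1,v_1},\dots,\varphi_{k,v_k}$ for a fixed admissible tuple commute and would fit in one layer. In that reading I would state the claim explicitly as concerning strategies that cannot condition same-layer commitments on one another, which is the model of Definition~\ref{def:strategy-model}.

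\emph{The threshold.*}
Combining the two cases, the transition happens exactly where $|\Spec_{\mathsf{cg}}|$ goes from $1^k = 1$ to $s^k \ge 2^k$. That is the boundary between a functional and a relational specification.
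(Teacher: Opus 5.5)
Your outline is the argument the paper leaves implicit: a singleton admissible set at $s=1$, and Observation~\ref{lem:sequential-ub} plus Theorem~\ref{thm:exp-separation} at $s\ge 2$. You are right that under the formal $\Depth_\Phi$ the offline task at $s\ge 2$ has depth $1$, because the pointwise commitments for one admissible tuple form a single commuting layer; the paper concedes this only in its conclusion. The gap is in your repair. Restricting to strategies that ``cannot condition same-layer commitments on one another'' does not suffice, and Definition~\ref{def:strategy-model} imposes nothing more: it lets layer values depend on the full input. So the one-layer, width-one strategy $v_1=\min P_1$, $v_\ell=\min P_\ell(v_{\ell-1},\cdot)$ always succeeds, since every coordinate is a function of the input alone. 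That contradicts Corollary~\ref{cor:hard-instances}. The lower bound holds only in an informational model, where choosing $v_\ell$ may use the row $P_\ell(a,\cdot)$ only for $a$ fixed in a strictly earlier layer. Your other phrasing, ``depend only on earlier-layer values,'' has to be read this way, or the sequential strategy is not admissible either.

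Once the model is fixed this way, the argument no longer uses $s\ge 2$. The random $(k,m,1)$-distribution is $(1/m)$-spread, and $m\ge 2s$ holds whenever $m\ge 2$, so Corollary~\ref{cor:hard-instances} gives the same centralized width bound $m^{k-d'}$ at $s=1$. Your threshold paragraph therefore compares two different quantities: formal $\Depth_\Phi$ at $s=1$ and restricted-model layers at $s\ge 2$. Measured consistently, you get $0$ versus $1$ (formal) or $k$ versus $k$ (restricted). Neither is a jump from $0$ to $k$, and neither supports the contrast ``cost comes only from distribution at $s=1$, but hits even a centralized machine at $s\ge 2$.'' What can actually be proved is that $|\Spec_{\mathsf{cg}}|$ jumps from $1$ to $s^k$. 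At $s=1$ a correct value assignment therefore excludes nothing, and the same width bound is charged to computing a function rather than to commitment. The dichotomy is one of attribution, and your write-up should say so instead of presenting $0$ and $k$ as values of one measure. Two smaller points. First, at $s=1$ Theorem~\ref{thm:three-way} gives only $|U|\log m$, which is not the Nisan--Wigderson bound; theirs is polynomial in $m$ and holds in a different two-party model. Second, the depth-$k$ claim needs $s<m$: at $s=m$ every tuple is admissible and one blind layer suffices. Your hypothesis $m\ge 2s$ covers this, but the statement omits it.
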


\section{Structural Results}
\label{sec:structural}
\label{sec:oracle-body}

The exponential separation shows that determination depth is a
meaningful complexity measure with concrete consequences.
We now establish structural results that explain \emph{why}
the separation works and what it implies for the general
theory.
The sequential cost of determination is governed by the
dependency structure among commitments---a structure that
computation cannot alter.
The \emph{conservation law}
(Theorem~\ref{obs:conservation}) shows that enriching the
commitment basis merely relabels determination layers as
circuit depth within layers---the total sequential cost is
conserved.
The \emph{oracle characterization}
(Proposition~\ref{thm:oracle-depth-body}) shows that under a
commutative basis in the online setting, even unbounded
computation between layers cannot reduce their number.
Communication re-enters in the distributed setting
(Appendix~\ref{sec:distributed-ck}).

\subsection{Conservation of sequential depth}
\label{sec:conservation-body}

A specification may have internal dependency structure---a
chain of commitments where each depends on the previous
one's outcome---even in the offline setting, where forward
validity constraints are absent.
A richer basis can reduce the number of determination
layers $d$ by bundling multiple constant-depth commitments
into a single commitment, but evaluating that commitment
requires circuit depth $c_i$ that accounts for the bundled
dependencies.
The total sequential depth of commitment
evaluation---summing $(1 + c_i)$ over layers---cannot fall
below $\Depth_{\Phi_0}(\Spec)$.
This is not a counting identity: a richer basis could
in principle exploit algebraic cancellations to resolve the
specification in fewer total sequential steps than the
longest dependency chain, and the following theorem's
content is that no such shortcut exists.

\begin{theorem}[Conservation law for sequential depth]
  \label{obs:conservation}
  Let $\Phi_0$ be a constant-depth basis for $\Spec$
  (Definition~\ref{def:natural-basis}).
  For any basis $\Phi$ and any valid determination over $\Phi$
  that resolves $\Spec$ in $d$ layers, let $c_i$ be the
  circuit depth of layer~$i$ (where the circuit's input is
  the current history, which includes all prior commitment
  events and their effects on the admissible set).
  Then
  $
    \sum_{i=1}^{d}(1 + c_i)
    \;\ge\; \Depth_{\Phi_0}(\Spec).
  $
\end{theorem}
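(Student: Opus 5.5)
The plan is a simulation argument. Starting from the given determination over $\Phi$ with $d$ layers and per-layer circuit depths $c_1,\ldots,c_d$, I will build a determination over $\Phi_0$ whose depth is at most $\sum_{i=1}^d (1+c_i)$. Since $\Depth_{\Phi_0}(\Spec)$ is a minimum over resolving strategies (Definition~\ref{def:res-cost-depth}), exhibiting such a $\Phi_0$-determination gives the inequality directly. The idea is to unroll each layer: layer~$i$ of the $\Phi$-determination computes its commitments by a circuit of depth $c_i$ whose input is the current history. I will charge one $\Phi_0$-layer for each level of that circuit, plus one final layer in which the resulting commitment takes effect. This is where the summand $1+c_i$ comes from.

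\emph{Step 1 (gate levels as constant-depth commitments).} Fix layer~$i$ and a depth-$c_i$ circuit $C_i$ computing that layer's commitments from the encoding of the current history $H_{i-1}$. The intermediate values at each gate level must be representable as commitment events over $\Phi_0$. I will treat the computation of each level from the previous one as a family of $O(1)$-depth evaluations on the history, and use the constant-depth property of $\Phi_0$ (Definition~\ref{def:natural-basis}) to realize each level's contribution with commitments that all read the same history. Such commitments cannot depend on one another's outcomes, so they can be placed in a single layer. The $c_i$ levels are strictly data-dependent on their predecessors, so they occupy $c_i$ consecutive layers. A final layer then applies the filter that implements $\Phi$'s commitment on the admissible set.

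\emph{Step 2 (validity and equality of admissible sets).} I will check that after the $1+c_i$ simulated layers the admissible set equals $\Spec(H_{i-1}\cdot\varphi^{(i)})$. The simulated commitments that only record intermediate values must not shrink the admissible set incorrectly, so they must act as the identity on $\Spec$, or as a refinement contained in the eventual one. Shrinkage then keeps every prefix nonempty, because the target set is nonempty by validity of the original determination. Inducting over $i$ yields a valid $\Phi_0$-determination that resolves $\Spec$ with depth $\sum_i(1+c_i)$. In the online case I expect the same induction to go through, since the simulation is applied pointwise on each history arising under the strategy.

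\emph{Main obstacle.} The hard step is Step~1: justifying that arbitrary intermediate circuit values can be encoded as $\Phi_0$ commitments when $\Phi_0$ is only guaranteed to be constant-depth, not expressive. My first approach is to read Definition~\ref{def:natural-basis} as saying that $\Phi_0$ is closed under the $O(1)$-depth operations used to evaluate the standard encoding, so that a gate level is simply a layer of $\Phi_0$ commitments.

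If that reading fails, I will argue contrapositively. Suppose $\sum_i(1+c_i) < \Depth_{\Phi_0}(\Spec)$. Then flattening every circuit into its gates gives a dependency chain among constant-depth filters that is shorter than the longest required chain of $\Phi_0$ commitments. That chain would itself be a $\Phi_0$-strategy of smaller depth, which is a contradiction. Under this route the "no algebraic cancellation" content reduces to the fact that a depth-$c$ circuit has no input-to-output path longer than $c$. Consequently, any commitment it outputs depends on at most $c$ levels of prior sequential information.
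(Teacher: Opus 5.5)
\textbf{There is a genuine gap.} It sits exactly at the step you flag as the main obstacle, and neither of your two routes closes it.

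Definition~\ref{def:natural-basis} bounds the \emph{cost} of each $\Phi_0$ commitment. It says nothing about what $\Phi_0$ can \emph{express}. The gates of $C_i$ are Boolean functions of the history encoding, not constraints on outcomes. Nothing guarantees that the effect of a $\Phi$ commitment on the admissible set can be produced by $\Phi_0$ commitments at all, let alone in $1+c_i$ commuting layers. Your ``closure'' reading is not among the hypotheses. The contrapositive fallback reuses the same assumption when it says the flattened chain ``would itself be a $\Phi_0$-strategy.''

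The unrolling is also mismatched with how depth is measured. A strategy is never charged for computing which commitments to make. Your intermediate ``recording'' commitments are meant to leave $\Spec$ unchanged, so they contribute nothing to the sequencing that $\Depth_{\Phi_0}$ counts. Everything therefore rests on the final step: reproducing layer~$i$'s effect on $\Spec$ with at most $1+c_i$ commuting layers of $\Phi_0$ commitments. That is essentially the theorem itself, since a bundled $\Phi$ commitment hides a chain of $\Phi_0$ dependencies whose length must be bounded by $1+c_i$. The proposal gives no argument for it.

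The paper never translates $\Phi$ into $\Phi_0$. Instead:
\begin{itemize}
\item It fixes a longest forced-dependency chain $\varphi_1\to\cdots\to\varphi_{d^*}$ in the $\Phi_0$-dependency DAG (Proposition~\ref{thm:dependency-chain-lb}). Each edge has a witness exclusion $(H_j, H_j\cdot\varphi_j, o_j)$.
\item It works with outcome exclusions, which make sense for any basis. Since the given $\Phi$-determination resolves $\Spec$, each $o_j$ is eventually excluded, at some layer $\lambda(j)$.
\item Each consecutive pair is charged either to a layer boundary ($\lambda(j)<\lambda(j+1)$) or, when $\lambda(j)=\lambda(j+1)=i$, to a directed path in the circuit for layer~$i$ (Observation~\ref{lem:semantic-circuit}). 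The case $\lambda(j)>\lambda(j+1)$ is ruled out by forced dependency.
\item Summing over the chain gives $\sum_i(1+c_i)\ge d^*$.
\end{itemize}

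The missing idea in your proposal is this link from semantic dependency to circuit dependency. Your observation that a depth-$c$ circuit has no path longer than $c$ only becomes useful once you know that every $\Phi_0$ forced dependency resolved inside a single layer forces such a path.
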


Each edge $\varphi_j \to \varphi_{j+1}$ in the
forced-dependency DAG of $\Phi_0$ has a \emph{witness
exclusion}: a triple $(H_j, H_j \cdot \varphi_j, o_j)$
such that excluding $o_j$ at $H_j$ is invalid, but
excluding $o_j$ at $H_j \cdot \varphi_j$ is valid.
The proof traces the longest path in this DAG and shows
that each edge must be ``paid for'' by either a layer
boundary or a circuit path within a layer.
The key step is:

\begin{observation}[Semantic dependency implies circuit dependency]
  \label{lem:semantic-circuit}
  If $\varphi_{j+1}$ is forced to depend on $\varphi_j$
  (under a constant-depth basis $\Phi_0$), and a circuit $C$
  computes a layer that achieves the witness exclusions of
  both, then $C$ contains a directed path from the gates
  for $\varphi_j$'s exclusion to those for
  $\varphi_{j+1}$'s.
\end{observation}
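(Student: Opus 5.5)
We argue by contradiction, using the witness exclusion attached to the edge $\varphi_j \to \varphi_{j+1}$. Recall what that witness says: excluding $o_j$ at $H_j$ is invalid, meaning some extension of $H_j\cdot(\text{exclude } o_j)$ has empty admissible set. Excluding $o_j$ at $H_j\cdot\varphi_j$, by contrast, is valid. So the validity of $\varphi_{j+1}$'s exclusion is a nonconstant function of whether $\varphi_j$'s exclusion has taken effect.

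Suppose $C$ computes a layer achieving both witness exclusions, but has no directed path from the gates $G_j$ realizing $\varphi_j$'s exclusion to the gates $G_{j+1}$ realizing $\varphi_{j+1}$'s exclusion. Then the outputs of $G_{j+1}$ lie outside the forward cone of $G_j$. Equivalently, they are a function only of inputs and gates that do not descend from $G_j$.

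The first key step is a hybrid or counterfactual circuit. Take $C$ and perturb $G_j$ so that it does not perform $\varphi_j$'s exclusion, for instance by replacing its output with the value it takes when $o_j$ is not excluded. This is well defined because $\Phi_0$ is constant-depth and pointwise/atomic-style: the exclusion is a local gate-level event computed from the standard encoding of the history. Since $G_{j+1}$ is not downstream of $G_j$, its outputs are unchanged, so the modified circuit still performs $\varphi_{j+1}$'s exclusion. The history it acts on, however, is now one in which $\varphi_j$ has not occurred, namely $H_j$ rather than $H_j\cdot\varphi_j$.

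The second key step uses the witness directly. The exclusion of $o_j$ is then applied at $H_j$, which is invalid by the witness, so the modified layer yields an invalid determination. Alternatively, the unmodified layer's behavior on $\varphi_{j+1}$ is identical on the two inputs. Whichever formulation we use, validity of the original layer would imply validity at a history where the witness says it fails. Concretely, I would compare the two input histories $H_j$ and $H_j\cdot\varphi_j$ fed to the part of $C$ outside the cone of $G_j$. The commitment made by $G_{j+1}$ is the same function of both. The determination is valid after $H_j\cdot\varphi_j$ but invalid without $\varphi_j$, which contradicts that $G_{j+1}$'s exclusion was chosen "knowing" only data independent of $\varphi_j$. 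Hence a directed path exists.

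The main obstacle is making the counterfactual legitimate. The circuit's input is the current history, which already fixes whether $\varphi_j$ occurred in an earlier layer. So we must argue that within a single layer, the effect of $\varphi_j$ is available to other gates \emph{only} through $C$'s wires from $G_j$, not through the input. I would first try to settle this with the layer semantics of Definition~\ref{def:strategy-model}: commitments within a layer depend only on the input history and earlier-layer results, never on same-layer commitment events. That gives exactly the independence needed to swap $H_j\cdot\varphi_j$ for $H_j$ on the part of $C$ outside the cone of $G_j$.
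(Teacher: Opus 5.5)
\textbf{There is a genuine gap.} Your argument is the paper's own route. The paper's proof rests on the single inference ``If $C$ had no such path, $\varphi_{j+1}$'s output would be independent of $\varphi_j$'s,'' and your expansion shows why that inference fails, so the gap is inherited rather than introduced.

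\textbf{Where the inference breaks.} You correctly note that $G_{j+1}$'s outputs are a function only of inputs and gates not descending from $G_j$. But in Case~2 both exclusions are computed by $C$ from the same input, namely the pre-layer history. That history does not contain $\varphi_j$, but it \emph{determines} $\varphi_j$, because $C$ is a deterministic function of it. So gates outside the forward cone of $G_j$ can recover $\varphi_j$'s choice directly from the input, and ``no path from $G_j$'' gives no independence.

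\textbf{Why the counterfactual proves nothing about $C$.} The circuit with $G_j$ overwritten is a different circuit, and no input makes $C$ behave like it. In every actual run of $C$ both exclusions are applied in the same layer, and the witness says nothing against that joint application. To contradict $C$'s validity you would need an actual input on which $C$ performs $\varphi_{j+1}$'s exclusion in a configuration the witness declares invalid, and gate-level path structure cannot supply one.

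\textbf{The two-history comparison and the proposed fix.} There are no ``two input histories $H_j$ and $H_j\cdot\varphi_j$'' to compare: if $H_j\cdot\varphi_j$ were the layer's input, $\varphi_j$ would lie in an earlier layer and you would be in Case~1. Your appeal to Definition~\ref{def:strategy-model} also targets the wrong obstacle. That definition explicitly lets same-layer values depend on the full input, which is exactly the channel through which $\varphi_j$'s choice reaches $G_{j+1}$ without passing through $G_j$.

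\textbf{The statement fails as literally phrased.} No sharper counterfactual can close the gap, because the statement fails for unrestricted circuits.
\begin{itemize}
\item \emph{Recomputation.} In the constrained generation chain, let $C$ compute $v_{\ell-1}$ with two disjoint copies of the same subcircuit. Output one copy and feed the other into a lookup of some $v_\ell\in P_\ell(v_{\ell-1},\cdot)$. Then $C$ is valid on every instance, yet no path joins the output gates for $v_{\ell-1}$ to those for $v_\ell$.
\item \emph{Parallel prefix.} Closing ``the gates for $\varphi_j$'' under recomputation does not rescue it. Set $f_\ell(a)=\min P_\ell(a,\cdot)$, compute all compositions $f_\ell\circ\cdots\circ f_2$ by parallel prefix, and let $v_\ell=(f_\ell\circ\cdots\circ f_2)(v_1)$. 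This is a valid one-layer circuit in which, for $\ell\ge 3$, no gate whose value is $v_{\ell-1}$ lies upstream of $v_\ell$. It uses only $O(\log k)$ levels of table lookups.
\end{itemize}

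\textbf{What a valid argument would need.} Whatever the witness can legitimately yield has to be an input-level statement, obtained by exhibiting real instances on which $C$ errs. Converting that into a gate path from $\varphi_j$'s gates, let alone an additive depth charge, needs restrictions on the circuit model that neither your proposal nor the paper provides.
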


\begin{proof}
  By forced dependency, $\varphi_{j+1}$'s exclusion is
  invalid unless $\varphi_j$'s has occurred.
  If $C$ had no such path, $\varphi_{j+1}$'s output would
  be independent of $\varphi_j$'s.
  But the witness exclusion guarantees instances where
  excluding $o_{j+1}$ without first excluding $o_j$ is
  invalid, so $C$ would produce an invalid refinement on
  such an instance---a contradiction.
\end{proof}

\begin{proof}[Proof of Theorem~\ref{obs:conservation}]
  Let $d^* = \Depth_{\Phi_0}(\Spec)$ and let
  $\varphi_1 \to \cdots \to \varphi_{d^*}$ be a longest
  forced-dependency path in the $\Phi_0$-dependency DAG
  (Proposition~\ref{thm:dependency-chain-lb}).
  Each edge $\varphi_j \to \varphi_{j+1}$ has a
  witness exclusion.
  Fix any determination under basis $\Phi$ with $d$ layers
  and per-layer circuit depths $c_1, \ldots, c_d$.
  Since the determination resolves $\Spec$, each $o_j$ is
  eventually excluded; let $\lambda(j)$ be the $\Phi$-layer
  in which $o_j$ is first excluded from the admissible set
  at all extensions.

  For each consecutive pair $\varphi_j \to \varphi_{j+1}$,
  exactly one of the following holds.
  \emph{Case~1:} $\lambda(j) < \lambda(j+1)$---the pair
  crosses a layer boundary, contributing at least~$1$ to $d$.
  \emph{Case~2:} $\lambda(j) = \lambda(j+1) = i$---both
  exclusions occur in the same layer; by forced dependency,
  the circuit computing layer~$i$ must contain a directed
  path from the gates for $o_j$'s exclusion to those for
  $o_{j+1}$'s, contributing at least~$1$ to $c_i$.
  \emph{Case~3:} $\lambda(j) > \lambda(j+1)$---the
  successor is excluded before the predecessor, contradicting
  forced dependency.

  Each of the $d^* - 1$ pairs contributes at least~$1$ to
  either $d$ or some~$c_i$, and each layer visited
  contributes its $1$ term.
  Hence $\sum_{i=1}^{d}(1 + c_i) \ge d^*$.
\end{proof}

\noindent
The constrained generation task
(Section~\ref{sec:exp-separation}) witnesses tightness: for
every split of the $k$ dependency links, the bound is
achieved with equality
(Appendix~\ref{sec:conservation-tight-proof}).

\subsection{Oracle characterization}
\label{sec:oracle-characterization}
\label{sec:coordination}

The conservation law shows that enriching commitments
cannot reduce the total sequential cost.
What if instead the \emph{strategy} has unbounded
computational power between commitments?
In the online setting, this question is non-trivial:

\begin{proposition}[Oracle power does not reduce depth]
  \label{thm:oracle-depth-body}
  \label{thm:oracle-depth}
  \label{lem:oracle-layer}
  Under a commutative basis~$\Phi$ in the online setting,
  grant the strategy a free call to a disclosure oracle
  at the start of every layer
  (a function that may inspect the entire history, perform
  unbounded computation, and return any value).
  Any determination strategy that resolves $\Spec$ still
  requires at least $\Depth_\Phi(\Spec)$ layers.
\end{proposition}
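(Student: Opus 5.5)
The proof is a simulation argument: any oracle-augmented strategy can be compiled into an ordinary determination strategy with the same number of layers. The lower bound then follows directly from the definition of $\Depth_\Phi(\Spec)$ in Definition~\ref{def:res-cost-depth}.

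The key observation is that $\Depth_\Phi(\Spec)$ already minimizes over \emph{all} determination strategies $\sigma:\Hist\to\Phi^*$. Such a strategy is an arbitrary function of the observed history, with no computability or complexity constraint. So the oracle adds nothing the definition does not already allow.

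Concretely, fix an oracle-augmented strategy $\sigma^{\mathcal{O}}$ that resolves $\Spec$ using $d$ layers on every history it generates. At the start of layer $r$, the observable history is some $H$. The oracle returns $\mathcal{O}(H)$, and $\sigma^{\mathcal{O}}$ then selects a batch of commitments $B_r\subseteq\Phi$ as a function of $H$ and $\mathcal{O}(H)$. Since $\mathcal{O}(H)$ is itself a function of $H$, the composite map $H\mapsto B_r$ is a function of $H$ alone. Define $\sigma(H)$ to be this batch. Because $\Phi$ is commutative (Definition~\ref{def:commutative-basis}), the commitments in $B_r$ commute pairwise at every history. Hence the linearization in which a history records them does not affect $\Spec$, and each $B_r$ forms a single commuting layer.

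By induction on layers against any environment behaviour allowed by $\Hist$, $\sigma$ and $\sigma^{\mathcal{O}}$ produce identical histories. The inductive step is immediate: the two strategies agree on every batch given the same prefix, and environment events are chosen adversarially and independently of which strategy generated them. Therefore $\sigma$ resolves $\Spec$, and $\depth(D(H))\le d$ for every history $H$ arising under $\sigma$. Taking the minimum over strategies gives $\Depth_\Phi(\Spec)\le d$, which is the claim.

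The main obstacle is making precise what the oracle can observe, since the statement says it may ``inspect the entire history''. In the online setting the oracle must not see future environment events, because that would reduce $\sigma^{\mathcal{O}}$ to the offline case. I would therefore interpret ``entire history'' as the current prefix $H$. I would also argue that each within-layer batch has to be fixed before later environment events arrive, since committing after observing more of the history would begin a new run and hence a new layer. Once the oracle is confined to the current prefix, the simulation goes through, and the commutativity assumption is used only to ensure that regrouping commitments into batches preserves both the layer count and validity.
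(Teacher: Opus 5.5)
Your proof is correct, but it follows a different route from the paper's. The paper gives a mechanism argument. Under a commutative basis, the only thing limiting which commitments can share a layer is forward validity, meaning joint nonemptiness of the admissible set at every extension. Forward validity is a property of $\Spec$, not of what the strategy knows, so no oracle can enlarge what fits in a layer (cf.\ Example~\ref{ex:three-valued}). Commutativity is invoked only to prevent one layer from hiding several sequenced algebraic layers.

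You instead argue by subsumption. Definition~\ref{def:res-cost-depth} already minimizes over all strategies $\sigma:\Hist\to\Phi^*$ with no computational restriction. An oracle confined to the current prefix is a function of that prefix, so the oracle-augmented strategy is already one of the strategies being minimized over.

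What your route buys:
\begin{itemize}
\item It is rigorous relative to the formal definitions, and it shows the proposition is essentially definitional.
\item It makes explicit two assumptions the paper leaves implicit. The oracle must not see future environment events. Each oracle layer must be applied as a single run, so that under commutativity it counts as exactly one commuting layer. This is the same job commutativity does in the paper's remark.
\end{itemize}

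What the paper's route buys: it explains why the layer count is independent of information. Forward validity quantifies over all extensions, so more knowledge of the present cannot relax it. That is what ties the proposition to Example~\ref{ex:three-valued} and to reading depth as the number of observations of the growing history.

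One small point: you do not need commutativity to preserve validity when forming batches. $\sigma$ can replay $\sigma^{\mathcal{O}}$'s commitments in exactly the recorded order, so validity is literally identical.
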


\noindent
Under a commutative basis, the only constraint on
co-applying commitments within a layer is \emph{forward
validity}: their combined effect must preserve nonemptiness
of the admissible set at all extensions.
Forward validity is a property of the specification, not of
the strategy's knowledge, so no oracle can increase the
number of commitments that fit in one layer
(Example~\ref{ex:three-valued}).
Commutativity is essential: without it, non-commuting
commitments can be sequenced within a single layer,
collapsing multiple algebraic layers.

\subsection{Relationship to computational complexity}

Determination depth measures a cost that classical
complexity does not track.
A function has determination depth~$0$ regardless of its
circuit depth; a relation can require exponential width to
compensate for any reduction in depth, even when every
individual commitment is a constant-time operation
(Theorem~\ref{thm:exp-separation}).
This width cost is purely relational: at sparsity $s = 1$
(a function), it vanishes entirely
(Observation~\ref{prop:pointer-chasing}).
In the online setting, the oracle characterization shows
that determination depth is irreducible: no amount of
computation between layers can reduce their number.

Classical computational complexity plays a narrow role
\emph{within} the framework: each commitment is itself a
function, so its evaluation has classical circuit depth.
The conservation law shows that this is all computation can
do---it emulates atomic-basis determination layers
one-for-one as
circuit layers within a richer commitment, preserving the
total sequential cost exactly.
Determination depth under the atomic basis is thus a
universal lower bound on the total sequential cost of
resolution, regardless of the basis used.

Determination depth on the atomic basis therefore provides
a complete accounting of the sequential cost of resolving a
relation---both commitment dependency and forward
validity, in both offline and online settings.
Computational depth captures only the cost of evaluating
individual commitments, a sub-problem that arises when the
basis is enriched and only in settings where forward
validity is absent.

In the distributed setting, the oracle characterization
serves as a single-agent baseline:
Appendix~\ref{sec:distributed-ck} shows that asynchronous
multi-agent systems cannot always achieve this baseline,
recovering the Halpern--Moses impossibility of common
knowledge as a special case and providing quantitative
depth predictions beyond the binary threshold.

\section{Metacomplexity of Determination Depth}
\label{sec:metacomplexity-body}

The preceding sections establish determination depth as a
complexity measure with concrete separations and structural
laws.
We now ask: how hard is it to \emph{compute} the determination
depth of a given specification?
The answer ranges from NP-hard to PSPACE-complete depending
on the setting, with the polynomial hierarchy
arising as the exact hierarchy of determination depths.

\begin{proposition}[NP-hardness in the offline setting]
  \label{thm:metacomplexity-body}
  Computing determination depth is NP-hard in the offline
  setting: for decision tree synthesis, determination depth
  equals minimum tree depth, which is NP-hard to
  compute~\cite{hyafil1976constructing}.
\end{proposition}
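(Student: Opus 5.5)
The plan is a reduction from minimum-depth decision tree synthesis, which Hyafil and Rivest~\cite{hyafil1976constructing} showed NP-hard. The reduction has to respect the offline framework of Section~\ref{sec:specs}: every environment event is given up front, and commitments are made in layers.

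\textbf{The decision-tree specification.} An instance is a finite set $X$ of objects, a set $T$ of binary tests, and a target labelling. I encode it as an offline history $H$ whose environment events record the test table. An outcome is a complete evaluation path: the object $x$ is unknown, so to pin down a label one must reveal test results adaptively. The basis $\Phi$ contains one commitment ``query test $t$'' per $t \in T$. Its effect on the admissible set (the objects consistent with the answers so far) is to split it by $t$'s answer. Resolution means $|\Spec(H)|=1$, meaning the label is fixed.

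\textbf{Direction 1: trees give strategies.} Read $\Spec$ so that the admissible outcomes after a sequence of commitments are the label classes still consistent. A decision tree of depth $k$ then yields a determination strategy with $k$ layers, one per tree level: at each level exactly one commitment is chosen, and that choice depends on the earlier answers.

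\textbf{Direction 2: strategies give trees.} Given a resolving strategy of depth $k$, build a tree of depth at most $k$. Two queries on different tests placed in the same layer must commute. If one of them were chosen adaptively on the other's answer, they would not form a commuting layer in the paper's sense. So within a layer the query set is fixed given the earlier layers. I then serialize each layer. Here is the catch: a layer holding several non-adaptive queries turns into several tree levels, not one, so the two depths do not automatically coincide.

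\textbf{Main obstacle.} This layer-to-level mismatch is the step I expect to be hardest. The fix I would try is to restrict the basis to pointwise commitments that each bind a single tree node, i.e.\ ``at the current node, use test $t$''. Two such commitments at different depths of the same branch do not commute, because the second one's context depends on the first. Commitments at sibling nodes address disjoint outcome sets, so they do commute and can share a layer. Under this basis a layer corresponds exactly to one tree level across all branches, and $\Depth_\Phi$ equals the minimum tree depth.

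The construction is polynomial in the table size, so NP-hardness transfers. Along the way I would check two things: that every determination is valid (no branch ever empties the admissible set, which holds since the labelling is total), and that the commutation relation defined in Section~\ref{sec:algebra} really does forbid merging parent and child nodes into one layer.
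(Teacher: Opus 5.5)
Your node-level basis (one commitment ``at the current node, use test $t$'', one layer per tree level, hardness imported from decision-tree minimization) is the same as the paper's. The specification it acts on is not, and that is a genuine gap.

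You never settle what an outcome is: an evaluation path, an object, or a label class. You then take resolution to mean ``the label is fixed.'' In the offline setting nothing is unknown and no answers arrive, so ``split the admissible set by $t$'s answer'' is not a commitment, since commitments only shrink $\Spec$. If instead the determiner chooses the answers itself, fixing a label only requires following one root-to-leaf path. That says nothing about the worst-case depth of a tree that must handle every object.

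The node-level fix does not repair this. Such a commitment excludes no object. If outcomes are paths, a finished tree still leaves all of its root-to-leaf paths admissible, so $|\Spec(H)|=1$ never coincides with ``a tree has been synthesized.''

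The missing idea, and the core of the paper's proof, is to take the outcomes to be the decision trees themselves: $\Spec(H)=\{\,\tau : \tau \text{ computes the labelling encoded in } H\,\}$. Then:
\begin{itemize}
\item ``test $t$ at node $v$'' is a genuine filter on trees;
\item two different tests at one node empty the admissible set, so a valid determination fixes exactly one test per node;
\item the admissible set becomes a singleton exactly when a whole tree is pinned down.
\end{itemize}
Your two directions then become the level-by-level correspondence. Layer $i$ holds one commitment for each internal node at depth $i-1$, and these commute because they constrain different nodes.

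A second problem: you call the node commitments pointwise and then claim that a parent's and a child's commitment do not commute. The paper notes that pointwise filters always commute. Indeed ``the node at address $v$ tests $t$'' is an order-independent predicate on trees, so under your description every node could be fixed in one layer and the depth would collapse to~$1$. The non-commutation has to come from context, which is what ``at the current node'' supplies. The commitment must act as the identity until every ancestor of $v$ has a fixed test in all surviving trees. This is the same exposure gating as the rotation commitments for stable matching. It is this gating that forces a resolving $d$-layer determination to fix only nodes of depth below~$d$. This is the check you deferred, and it is where the depth equality actually lives.

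Finally, your objects-and-tests input is the right one. For a full truth table on $n$ variables, as in the paper's encoding, minimum depth is computable by dynamic programming over the $3^n$ subcubes, which is polynomial in the $2^n$-size table. However, Hyafil--Rivest prove hardness for the expected number of tests, not worst-case depth. For depth, reduce from set cover:
\begin{itemize}
\item one object labelled~$0$ on which every test is false;
\item for each element $u$, an object labelled~$1$ on which test $j$ is true iff $u\in S_j$.
\end{itemize}
The $0$-object's root-to-leaf path must use a cover, and a cover of size $k$ gives a caterpillar tree of depth~$k$.
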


\noindent
Proof in Appendix~\ref{sec:metacomplexity}.
Depth here reflects the tree structure: a depth-$d$
decision tree has $d$ levels of nodes, and each level is
one round of variable-test commitments.

In the online setting, an adversarial environment adds
universal quantification---the environment fixes some
variables, the determiner fixes others---and the
metacomplexity captures the full polynomial hierarchy.

\begin{theorem}[Determination depth captures the polynomial
  hierarchy]
  \label{thm:ph-characterization-body}
  Given a Boolean formula $\theta(x_1, \ldots, x_n)$
  encoded by a polynomial-size circuit, the outcome set is
  $O = \{0,1\}^n$ (all variable assignments) and the
  initial admissible set is $\Spec(H_0) = O$.
  The commitment basis consists of pointwise filters
  ``set $x_i = b$''; the environment fixes some variables
  adversarially, the determiner fixes the rest.
  After all $n$ variables are fixed, the strategy succeeds
  iff the resulting assignment satisfies~$\theta$.
  \begin{enumerate}[label=(\roman*),nosep]
    \item For each fixed $k$, ``is determination
          depth $\le k$?'' is $\Sigma_{2k}^P$-complete.
    \item For unbounded $k$ (given as input), the problem is
          PSPACE-complete.
  \end{enumerate}
\end{theorem}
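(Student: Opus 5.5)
The plan is to identify depth-$k$ determination in this game with a $2k$-alternation quantified Boolean formula. The determiner has a pointwise, hence commutative, basis. By Proposition~\ref{thm:oracle-depth-body}, a layer can be normalized as follows: the determiner observes the current history once and then commits, in parallel, to values for some set of still-free variables. Between layers the environment may fix further variables.

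\emph{Normal form.} A depth-$k$ strategy therefore corresponds to alternating blocks. In layer $j$ the determiner chooses a set of variables and their values, as a function of everything observed so far. Then the environment fixes its variables until the next observation point.

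I will make the instance format explicit. The input specifies which variables belong to the environment and the order in which the environment may reveal them, and a depth bound $k$. "Depth $\le k$" then reads
\[
\exists\,\text{(layer-1 commitments)}\;\forall\,\text{(env block 1)}\;\exists\,\text{(layer-2)}\;\forall\cdots\;\exists\,\text{(layer-}k)\;\forall\,\text{(final env)}\;\theta .
\]
The determiner's choice of \emph{which} variables to commit in each layer is a polynomial-size object. It can be folded into the existential blocks by guessing a layer label for each variable together with its value. This gives $k$ existential blocks and $k$ universal blocks.

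\emph{Upper bound.} The decision problem is the quantifier prefix $\exists\forall$ repeated $k$ times, with an outer existential guess of the layer assignment that merges into the first block. The matrix is polynomial-time checkable: it verifies that the assignment is consistent with the layer labels and evaluates the circuit $\theta$. So the problem is in $\Sigma_{2k}^P$. When $k$ is part of the input, the same game has polynomially many rounds, and a standard alternating polynomial-time evaluation, i.e.\ depth-first minimax over polynomially bounded choices, puts it in PSPACE.

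\emph{Lower bound.} I will reduce from $\Sigma_{2k}$-QBF $\exists X_1\forall Y_1\cdots\exists X_k\forall Y_k\,\psi$. Take $\theta=\psi$ and let the environment own the $Y_i$, released in order with $Y_i$ available only after the $i$-th observation. The main obstacle is showing that the determiner cannot gain by \emph{delaying} its choices or by using fewer layers cleverly. In particular, it must not be able to postpone $X_i$ until after seeing $Y_i$ while still using only $k$ layers.

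I would force the alignment with a gadget. The environment's block $Y_i$ becomes available only after the determiner has committed some variable in layer $i$, and waiting layers are charged. Concretely, I add marker variables $z_1,\dots,z_k$ and strengthen $\theta$ so that $z_i$ must be committed before $Y_i$ is fixed. Alternatively, I would make $\theta$ trivially satisfied if the environment can exploit an uncommitted $X_i$, which makes committing $X_i$ early the only winning option. I expect making this timing discipline rigorous within the history-based framework to be the hardest step. I would try first to encode the environment's release schedule directly in the history class $\Hist$, which the framework lets us choose.

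For unbounded $k$, the same reduction from TQBF with $k=n$ gives PSPACE-hardness, completing (ii).
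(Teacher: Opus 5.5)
The gap is in the hardness direction. It sits exactly at the step you flag but leave open, and neither mechanism you sketch closes it.

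\emph{Why your mechanisms fail.} Strengthening $\theta$ cannot enforce an order. $\theta$ is a predicate on the final assignment, so it cannot see whether $z_i$ was committed before $Y_i$ was fixed. Gating the release of $Y_i$ on the \emph{number} of layers is exploitable; this covers ``available only after the $i$-th observation'' and ``after some variable or marker is committed in layer~$i$''. A layer need only be nonempty, so a throwaway commitment buys an extra look at the environment. For instance, $\exists x_1x_1'\,\forall y_1\,\exists x_2\,\forall y_2.\,(x_1'\leftrightarrow y_1)$ is false. Yet the determiner can commit only $x_1$ in layer~1, observe $y_1$, and then commit $x_1':=y_1$ and $x_2$ in layer~2, winning with two layers. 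So the $k$-layer game you build is not the QBF you reduce from.

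\emph{The missing idea.} Gate on the \emph{content} of the commitments, and put this in $\Hist$. Environment events fixing $Y_i$ may occur only in histories that already contain commitments to every variable of $X_1\cup\cdots\cup X_i$, and they then occur before the determiner's next layer. Every variable of $X_i$ is therefore committed while at most $Y_1,\ldots,Y_{i-1}$ are fixed. Hence any winning strategy, with any number of layers, yields Skolem functions witnessing the QBF. Conversely, a QBF witness gives the strategy that commits $X_i$ in layer~$i$. So depth $\le k$ iff the QBF is true, whether $k$ is fixed or part of the input.

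This format also settles two loose ends in your upper bound. First, your prefix starts with $\exists$ only because the environment cannot move before the first layer; a leading $\forall$ block would give only $\Pi_{2k+1}^P$. Second, the set committed in layer~$j$ should be guessed inside the $j$-th existential block, not up front, since strategies choose it adaptively.

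\emph{Comparison with the paper.} The paper sidesteps timing by hard-wiring a schedule of one variable per round and mapping the quantifier order directly onto determiner and environment rounds. The cost is that each determiner layer is then a single bit; its reduction yields $2k$-variable instances, which are decidable by brute force for fixed $k$. Your multi-commitment layers are what supply the large existential blocks that fixed-$k$ hardness needs. That is precisely why, in your model, the gating must be made explicit.
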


\begin{proof}[Proof sketch]
  \emph{Upper bound.}
  The determiner guesses $k$ rounds to control; the
  environment controls the remaining $n-k$.
  Each round, the controlling player fixes one variable.
  The determiner's choices are existential quantifiers,
  the environment's are adversarial (universal).
  In the worst case these alternate
  ($\exists\forall\exists\forall\cdots$), and consecutive
  same-player rounds collapse, giving at most $2k$
  alternations ($\Sigma_{2k}^P$); for unbounded~$k$,
  PSPACE.

  \emph{Hardness.}
  Reduce from $\Sigma_{2k}$-QBF for fixed~$k$, or from
  TQBF for unbounded~$k$.
  Construct a specification with $2k$ variables: at odd
  rounds the determiner fixes an existential variable, at
  even rounds the environment fixes a universal variable.
  The admissible set starts as $O$; each round narrows it
  by fixing one variable.
  The determiner has a depth-$k$ strategy iff the
  corresponding QBF is true.
  Full proof in Appendix~\ref{sec:ic}, where the game is
  formalized as a QBF instance with variable-fixing
  commitments.
\end{proof}

\noindent
The polynomial hierarchy is thus precisely the hierarchy of
determination depths for circuit-encoded specifications.
(Offline, the
metacomplexity is NP-hard but does not capture the full
hierarchy; see Proposition~\ref{thm:metacomplexity-body}.)
Where circuit depth and communication complexity parameterize
the PH through computational resources, determination depth
parameterizes it through a semantic one---layers of irrevocable
choice.
The correspondence is not the contribution; it is a
calibration.
An alternating Turing machine has a fixed quantifier prefix
determined by the program; here the determiner optimizes
\emph{which} variables to control and in which order, so the
quantifier schedule is itself part of the problem.
The alternation arises from the interaction between
commitments and environment extensions, not from the
encoding of the specification.
More fundamentally, the paper's main contributions---the
exponential separation
(Theorem~\ref{thm:exp-separation}), the conservation law
(Theorem~\ref{obs:conservation}), and the oracle
characterization
(Proposition~\ref{thm:oracle-depth-body})---establish that
determination depth resists computation in both the offline
and online settings, a phenomenon not captured by
quantifier alternation.

\section{Conclusion}
\label{sec:conclusion}

Complexity theory has measured the cost of computing a uniquely
determined answer.
This paper introduces determination depth to measure the
cost of \emph{committing} to an answer when many are
admissible---a cost that no amount of computation can
eliminate.
The exponential separation
(Theorem~\ref{thm:exp-separation}) shows this cost is real,
the conservation law
(Theorem~\ref{obs:conservation}) and oracle characterization
(Proposition~\ref{thm:oracle-depth-body}) show that
computation cannot reduce it,
and the PH characterization
(Theorem~\ref{thm:ph-characterization-body}) shows it is
well-calibrated against the classical hierarchy.
In the distributed setting, the framework recovers the
Halpern--Moses impossibility of asynchronous common
knowledge as a special case
(Appendix~\ref{sec:distributed-ck}),
and the depth--width--communication tradeoff
(Theorem~\ref{thm:three-way}) shows that the depth, width,
and communication costs of resolving a specification are
fungible.
The appendices ground the framework in BSP round complexity,
chain-of-thought reasoning, stable matching, extensive-form
games, and distributed graph coloring;
Appendix~\ref{sec:related-work} surveys related work in
detail.
A number of directions are open:

\emph{Distributional vs.\ worst-case hardness.}
The exponential separation is distributional: for any single
fixed chain, a strategy with full knowledge resolves in one
layer.
Characterizing which restrictions on the strategy's knowledge
or power yield worst-case hardness is open.

\emph{Fault tolerance and randomization.}
In every deterministic, fault-free multi-party setting we
have examined, the specification's semantic structure fully
accounts for the known round complexity via determination
depth
(Appendices~\ref{sec:bsp-diagnosis}--\ref{sec:local-diagnosis}).
The scope of this correspondence is open:
fault-tolerant settings and randomized protocols lie outside
the current framework, though the Lov\'{a}sz Local Lemma
already shows that randomization can collapse determination
depth in some settings
(Appendix~\ref{sec:local-diagnosis}).

\emph{Reversible commitments.}
The framework assumes commitments are irrevocable.
In many settings---autoregressive generation with
backtracking, transactional rollback, speculative
execution---commitments can be reversed at a cost.
Extending the commitment algebra to include inverses
(a group rather than a monoid) and
characterizing the resulting depth--cost tradeoffs is open.

\emph{Provenance of determinations.}
Classical data provenance explains query answers via a
commutative semiring over monotone
derivations~\cite{green2007provenance}.
Determinations introduce a different algebra: a
conditionally commutative monoid of layered commitments.
The two structures do not align naturally.
A unifying algebraic theory of \emph{determination provenance}---
tracking how commitments at each layer shape downstream
explanations---remains open.

\appendix

\section{Formal Definitions for Determination Depth}
\label{sec:algebra-full}
\label{sec:normal-forms}

This appendix collects the formal definitions deferred from
Section~\ref{sec:algebra}; no new results are claimed.
Fix a commitment basis~$\Phi$ and a specification~$\Spec$
throughout.

The goal is to characterize the parallelism available in a
determination $D(H)$ (the commitment subsequence of a
history~$H$).
In the \emph{offline} setting, all commitment events are
consecutive (no environment events interleave), so the
entire determination can be analyzed as a single sequence:
adjacent commitments that commute can be parallelized into
layers, and the depth of $D(H)$ is the minimum number of
such layers.
In the \emph{online} setting, environment events may
separate consecutive commitments, and commutation is only
meaningful for commitments that share the same history
prefix.
The layering then applies to each maximal run of consecutive
commitments (between successive environment events), and the
depth of $D(H)$ is the total number of layers across all
such runs.
$\Depth_\Phi(\Spec)$ is defined
(Definition~\ref{def:res-cost-depth}) as the minimum
worst-case depth over all strategies that resolve $\Spec$.

\subsection{Commuting Layers and Depth}
\label{sec:depth}

We extend the $\cdot$ notation to sets: $H \cdot L$
is the history obtained by appending all commitments in $L$
after~$H$ (in any order).

\begin{definition}[Commuting layer]
  A finite set $L$ of commitments from a common basis
  $\Phi$ is a \emph{commuting layer for $\Spec$ at $H$} if
  applying its commitments consecutively after $H$ (with no
  intervening environment events) is order-independent:
  for any two listings
  $\psi_1,\ldots,\psi_t$ and $\psi'_1,\ldots,\psi'_t$ of
  the elements of $L$,
  \[
    \Spec(H \cdot \psi_1 \cdot \psi_2 \cdots \psi_t)
    \;=\;
    \Spec(H \cdot \psi'_1 \cdot \psi'_2 \cdots \psi'_t).
  \]
  Pairwise commutation
  (Definition~\ref{def:commitment-commutation}) at $H$ is
  sufficient; every basis used in this paper satisfies this
  condition.
\end{definition}

\begin{definition}[Layering and determination depth]
  \label{def:layering-depth}
  Let $D(H)=\varphi_1 \cdot \cdots \cdot \varphi_m$ be
  the determination of a history~$H$ over~$\Phi$.
  A \emph{run} is a maximal subsequence of consecutive
  commitment events in $H$ (with no intervening environment
  events).
  A \emph{layering of a run} $R = \psi_1 \cdot \cdots
  \cdot \psi_r$ is a partition into nonempty sets
  $L_1,\ldots,L_k$ such that each $L_i$ is a commuting layer
  for $\Spec$ at the history $H_0 \cdot L_1 \cdots L_{i-1}$,
  where $H_0$ is the history prefix immediately before the
  run.
  The \emph{depth} of a run is the minimum $k$ over all its
  layerings.
  The \emph{depth} of $D(H)$, written $\depth(D(H))$, is
  the sum of the depths of its runs.
  In the offline setting, the entire determination is a
  single run.
\end{definition}

\noindent
Every run admits a depth-minimal layering; in the offline
setting (a single run), we call this the \emph{layered
normal form}.
Depth is bounded by cost ($\depth(D) \le \cost(D)$, since
layers are nonempty), but can be much smaller when most
commitments commute.

\subsection{Dependency Chains and Depth Characterization}

To lower-bound depth, we exhibit commitments that cannot share
a layer.

\begin{definition}[Forced dependency]
  \label{def:forced-dependency}
  Let $\varphi$ and $\psi$ be commitment events in a
  determination $D(H)$.
  We say $\psi$ \emph{locally depends on} $\varphi$ (in $H$)
  if, in every layering of $D(H)$, $\varphi$ appears in a
  strictly earlier layer than $\psi$.
  We say $\psi$ \emph{universally depends on} $\varphi$ if
  this holds in every resolving history whose determination
  contains both (used only in the distributed setting,
  Appendix~\ref{sec:distributed-ck}).
\end{definition}

\noindent
In practice, forced dependency arises when $\psi$ acts as the
identity (or violates validity) until $\varphi$ has been
applied.
The stable matching application
(Section~\ref{sec:examples}) exhibits this pattern.

\begin{remark}[Online depth and forced dependency]
  \label{rem:online-depth}
  The offline results (Proposition~\ref{thm:dependency-chain-lb}
  below) use local forced dependency.
  The distributed results
  (Theorem~\ref{thm:async-impossibility}) use universal
  forced dependency.
  In the online setting, the oracle characterization
  (Proposition~\ref{thm:oracle-depth-body}) handles depth
  directly, without reducing to dependency chains.
\end{remark}

\begin{definition}[Dependency chain]
  \label{def:dependency-chain}
  A sequence $\varphi_1,\varphi_2,\ldots,\varphi_k$ is a
  \emph{dependency chain of length $k$} if each
  $\varphi_{i+1}$ locally depends on $\varphi_i$.
\end{definition}

\begin{proposition}[Depth equals longest dependency chain
  (offline)]
  \label{thm:dependency-chain-lb}%
  \label{cor:depth-invariant}%
  In the offline setting,
  $\Depth_\Phi(\Spec)$ equals the maximum length of a
  dependency chain over all resolving determinations.
  In the online setting, the longest chain is a lower bound
  on depth; run boundaries may force additional layers.
\end{proposition}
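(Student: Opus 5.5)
The plan is to prove the offline equality by two inequalities, working with a depth-optimal resolving determination and its layered normal form. In the offline setting a strategy is a single determination $D(H)$, forming one run, and $\Depth_\Phi(\Spec)$ is the minimum of $\depth(D(H))$ over resolving determinations. I read ``maximum length of a dependency chain over all resolving determinations'' as this min--max: we fix the depth-minimizing determination, take the longest chain inside it, and show the two numbers agree. Equivalently, for every resolving $D$ we show $\text{(longest chain in }D) \le \depth(D)$, with equality for some depth-minimal $D$.

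\emph{Lower bound (chain $\le$ depth).} Let $\varphi_1,\ldots,\varphi_k$ be a dependency chain in $D(H)$ and take any layering $L_1,\ldots,L_t$ of the run. By local forced dependency (Definition~\ref{def:forced-dependency}), $\varphi_{i+1}$ lies in a strictly later layer than $\varphi_i$. The layer indices of $\varphi_1,\ldots,\varphi_k$ are therefore strictly increasing, so $t \ge k$. Minimizing over layerings gives $\depth(D(H)) \ge k$. This step uses only the definitions.

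\emph{Upper bound (depth $\le$ some chain).} Take a depth-minimal layering $L_1,\ldots,L_d$ of a depth-optimal resolving $D(H)$. I would build a chain backwards. Pick $\psi_d \in L_d$; I claim some $\psi_{d-1}\in L_{d-1}$ is forced before $\psi_d$, and then iterate. The natural tool is a greedy ``sink'' argument. Suppose no element of $L_i$ is forced strictly before some element of $L_{i+1}$, in the sense that every layering could place it at the same layer or later. Then we would like to merge or shift elements to obtain a layering with fewer than $d$ layers, contradicting minimality.

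To make this precise, I would define for each commitment $\psi$ its \emph{earliest possible layer} $e(\psi)$, the minimum layer index over all depth-minimal layerings, with layers packed left. I would then argue that the leftmost-packed normal form places each $\psi$ at layer $e(\psi)$. If $e(\psi) = i+1 > 1$, there is a $\varphi$ in layer $i$ that blocks moving $\psi$ earlier. Blocking means $\varphi$ and $\psi$ fail to commute at the relevant prefix, or moving $\psi$ before $\varphi$ changes the admissible sets. That blocking relation is exactly ``$\psi$ locally depends on $\varphi$.'' Following blockers from a layer-$d$ element back to layer $1$ yields a chain of length $d$.

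The main obstacle is the upper bound. Local dependency is defined globally, as ``in \emph{every} layering $\varphi$ precedes $\psi$,'' whereas the blocker is found relative to one layering whose layer contexts $H_0\cdot L_1\cdots L_{i-1}$ depend on the layering itself. Because commutation is history-dependent, moving one commitment can change which others commute. I would first try to prove a monotonicity lemma: in the leftmost-packed layering, if $\psi$ cannot be moved to layer $i$, it cannot occupy layer $\le i$ in any layering. I would prove this by induction on $i$, comparing prefix admissible sets, and use the paper's assumption that all bases considered are pairwise commuting at each $H$ to keep the contexts comparable. For the online remark, the lower-bound argument applies run by run, since chain elements in different runs are already separated. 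Run boundaries can add layers without adding chain length, so only the inequality survives, which is what is claimed.
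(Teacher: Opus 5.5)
Your lower bound (chain $\le$ depth, applied inside a depth-optimal determination, and run by run online) is exactly the paper's argument, and it is fine. Reading the statement as being about a depth-optimal determination is also a sensible repair, since a long chain inside a non-optimal determination says nothing about $\Depth_\Phi(\Spec)$.

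\textbf{The gap is in the upper bound.} Your ``blocker'' is found relative to a single layering: $\varphi$ and $\psi$ fail to commute at that layering's prefix. Non-commutation is symmetric, so it only shows the two cannot share that layer in that context. It does not show that $\varphi$ must come first. And because commutation is history-dependent, a third commitment can make the reversed order work.

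\textbf{A counterexample.} Work offline with $O=\{o_a,o_b,o_c,o_d,z,r\}$ and initial admissible set $O$. Let the commitments act on the admissible set $S$, as the paper's rotation commitments do:
\begin{itemize}
\item $a$ deletes $o_a$, $b$ deletes $o_b$, and $d$ deletes $o_d$ and $r$;
\item $c$ deletes $o_c$ if $o_a,o_b$ are both present in $S$ or both absent, and otherwise replaces $S$ by $S\cap\{o_d,r\}$.
\end{itemize}
The run $a\cdot b\cdot c\cdot d$ resolves to $\{z\}$. The single layer $\{a,b,c,d\}$ is not commuting, since the listing $a,c,b,d$ ends at $\emptyset$. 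Yet $\{a,b,d\},\{c\}$ and $\{c,d\},\{a,b\}$ are both commuting layerings ending at $\{z\}$. So the depth is $2$, while no pair is forced and every dependency chain has length~$1$.

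\textbf{How this breaks each step of your plan.}
\begin{itemize}
\item Every commitment has $e(\psi)=1$, yet no one-layer layering exists. So no normal form can place each $\psi$ at $e(\psi)$.
\item There are two different packed layerings, so ``the'' leftmost-packed form is not well defined.
\item In the packed layering $\{a,b,d\},\{c\}$, $c$ cannot move to layer~$1$ because it fails to commute with $a$ there. Yet $c$ sits in layer~$1$ of $\{c,d\},\{a,b\}$. This refutes your monotonicity lemma.
\end{itemize}

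\textbf{The equality itself fails for a general basis.} A copy of $c$ applied when exactly one of $o_a,o_b$ has been removed can never lead to a singleton. Hence every resolving determination consists of copies of $a,b,c,d$ and admits the same two block layerings. So here $\Depth_\Phi(\Spec)=2$, while every chain in every resolving determination has length~$1$. No refinement of the blocker walk can prove the offline equality without an extra hypothesis. This holds whether a layering must reproduce the run's final admissible set or merely stay valid and resolving. You do need one of these stipulations: under the literal definition, any reordering into singletons is a layering, so local dependency never holds.

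\textbf{Your proposed rescue does not help.} The paper never assumes that basis elements commute at every history. It only says pairwise commutation \emph{suffices} for a commuting layer, and the example above satisfies that condition. If all commitments did commute everywhere (for commitments acting on the admissible set), every set would be a commuting layer and every offline run would have depth at most~$1$. That trivializes the statement and excludes the rotation basis, which is where the proposition is actually invoked.

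\textbf{The paper's own proof has the same gap.} It asserts in one line that a depth-minimal layering ``corresponds to a minimum-height topological layering'' of the forced-dependency DAG. That presupposes every topological layering of the DAG is valid, which is exactly what fails here: the DAG has no edges, yet one layer is impossible.

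\textbf{What is genuinely needed} is a hypothesis that makes the constraints static and oriented. For example: the valid layerings of the run are exactly the topological layerings of a fixed strict order on its commitments. This is the trace-theory situation, and essentially what the rotation poset provides for stable matching. Under that hypothesis, forced dependency coincides with the order, and the longest-path layering gives depth equal to the longest chain.
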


\begin{proof}
  A dependency chain of length~$k$ requires $k$ layers
  (each successive commitment must appear in a strictly
  later layer), so $\Depth_\Phi(\Spec) \ge$ the longest
  chain in both settings.
  In the offline setting (a single run), any determination
  induces a dependency DAG on its commitments (draw an edge
  $\varphi \to \psi$ whenever $\psi$ depends on $\varphi$).
  A depth-minimal layering corresponds to a minimum-height
  topological layering of this DAG, whose height equals the
  length of the longest chain.
\end{proof}

\section{Distributed Determinations and Common Knowledge}
\label{sec:distributed-ck}

This appendix extends the oracle characterization
(Section~\ref{sec:oracle-characterization}) to the
distributed setting, connecting determination depth to
synchronization and common knowledge.

\paragraph{Distributed setting.}
We extend the framework to multiple agents.
Fix a finite set of agents $\{1, \ldots, n\}$.
Each event in a history $H$ is associated with an agent;
for agent~$p$, the \emph{projection} $H|_p$ is the
restriction of~$H$ to $p$'s events---local computation
steps, message sends, and message receives at~$p$---ordered
by the transitive closure of~$\to$ restricted to $p$'s
events: $e_1$ precedes $e_2$ in $H|_p$ whenever
$e_1 \to^* e_2$ in $H$ and both are events at~$p$.
(Causal paths through other agents' events are not
directly visible, but their ordering effects are preserved.)
The \emph{frontier} of $H|_p$ is the set of sinks of
$H|_p$ (events with no successors under $\to$).

\paragraph{Indistinguishability.}
In the Halpern--Moses
framework~\cite{halpern1990knowledge}, two global states are
indistinguishable to agent~$p$ if $p$'s local state is the
same in both.
In our formalism, the analogue of local state is the
projection $H|_p$.
Two global histories $H, H'$ are
\emph{$p$-indistinguishable}, written $H \sim_p H'$, if
$H|_p$ and $H'|_p$ are isomorphic as partial orders of
typed events---that is, there exists an order-preserving
bijection between the events of $H|_p$ and $H'|_p$ that
preserves event types (message contents, commitment
values, and local computation steps).

\paragraph{Local knowledge and mutual knowledge.}
A global property $S$ (a set of histories) is \emph{known}
by agent~$p$ at $H$ if $S$ holds at every history
$p$-indistinguishable from $H$:
$K_p(S, H) \;\Leftrightarrow\; \forall H'\!: H \sim_p H'
\Rightarrow H' \in S$.
$S$ is \emph{known to everyone} at $H$ if $K_p(S, H)$ for
every~$p$; write $E(S, H)$ for this.
Define \emph{$k$-th order mutual knowledge} inductively:
$E^0(S, H)$ holds iff $H \in S$;
$E^{k+1}(S, H)$ holds iff for every agent~$p$ and every
$H'$ with $H \sim_p H'$, $E^k(S, H')$ holds.
\emph{Common knowledge} of $S$ at $H$ means $E^k(S, H)$ for
all $k \ge 0$.

\paragraph{Synchronization points.}
We assume each agent's projection $H|_p$ is a chain (a total
order on $p$'s events); this holds whenever each agent
executes sequentially, as in standard distributed computing
models.

A determination strategy may invoke multiple synchronization
points in sequence.
The \emph{$j$-th synchronization point}
is a set of distinguished events
$\{e^*_{j,1}, \ldots, e^*_{j,n}\}$, one local event per
agent (i.e., $e^*_{j,p}$ is in the projection of~$p$),
satisfying two conditions:
(i)~the events form a \emph{consistent cut}: no
$e^*_{j,p}$ causally follows any $e^*_{j,q}$ for
$p \neq q$ (so the global history truncated to the cut is
well-defined); and
(ii)~the admissible set at the cut is the same as seen by
each agent---$\Spec(H_{\le j})$ is independent of which
agent's perspective is used, where $H_{\le j}$ is the
global history through the cut.

At a synchronization point, every agent sees the same
admissible set (condition~(ii)), and each agent knows the
synchronization occurred ($e^*_{j,p}$ is in its
projection).

\begin{lemma}[Synchronization establishes common knowledge]
  \label{lem:sync-ck}
  Assume the synchronization-point conditions
  (i)--(ii) are common knowledge among all agents.
  Then at each synchronization point, the current admissible
  set is common knowledge.
\end{lemma}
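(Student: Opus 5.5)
The plan is to prove, by induction on $k$, that the property is mutual knowledge of every order. Fix the $j$-th synchronization point and a global history $H$ in which it occurs. Let $A \triangleq \Spec(H_{\le j})$ be the admissible set at the cut. Define the set of histories
\[
  S_A \;\triangleq\; \{\,H' : H' \text{ contains a $j$-th synchronization point and } \Spec(H'_{\le j}) = A\,\},
\]
and also the larger set $T$ of histories $H'$ that contain a $j$-th synchronization point satisfying conditions (i)--(ii). The goal is to show $E^k(S_A, H)$ for every $k \ge 0$. The hypothesis that (i)--(ii) are common knowledge gives $E^k(T, H)$ for all $k$. In particular, every history reachable from $H$ by a finite chain of indistinguishability steps $\sim_{p_1}, \sim_{p_2}, \ldots$ lies in $T$.

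The key step is a one-step lemma: if $H' \in S_A$ and $H'' \sim_p H'$ with $H'' \in T$, then $H'' \in S_A$. The argument has three parts.

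\begin{enumerate}
  \item Since $H'|_p$ and $H''|_p$ are isomorphic as typed partial orders, the distinguished event $e^*_{j,p}$ and everything $p$ has observed up to it correspond between the two histories.
  \item By condition~(ii), the admissible set at the cut is determined from any single agent's perspective, in particular from $p$'s. That perspective is a function of $H|_p$ up to $e^*_{j,p}$, together with the commitment values recorded there.
  \item Hence $\Spec(H''_{\le j}) = \Spec(H'_{\le j}) = A$.
\end{enumerate}

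The induction is then routine. The base case $E^0$ is $H \in S_A$, which holds by definition of $A$. For the step, assume $E^k(S_A, \cdot)$ holds on every history reachable from $H$ by at most $m$ indistinguishability steps. Apply the one-step lemma along each chain $H \sim_{p_1} H_1 \sim_{p_2} \cdots$, using membership in $T$ from common knowledge of (i)--(ii). I will actually prove the stronger statement that every history reachable by any finite chain lies in $S_A$. This gives $E^k$ for all $k$ at once, because $E^k(S,H)$ is equivalent to $S$ holding on all histories reachable from $H$ in at most $k$ steps.

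The main obstacle is the second part of the one-step lemma. Condition~(ii) as stated says $\Spec(H_{\le j})$ is "independent of which agent's perspective is used," and I must read this as: the admissible set at the cut is a function of each agent's local projection through $e^*_{j,p}$ alone. I will make this reading explicit as the formal content of (ii). Without it, indistinguishability says nothing about $\Spec$, which depends on the global history.

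A secondary point is that the $j$-th synchronization event must be identifiable in $H''|_p$ from $p$'s local view, so that the isomorphism maps $e^*_{j,p}$ to $e^*_{j,p}$. I will obtain this from the standing assumption that each $H|_p$ is a chain and that synchronization events are distinguished typed events. The order-preserving, type-preserving bijection then matches the $j$-th such event in each history.
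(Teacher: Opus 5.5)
Your proof is correct and follows the paper's own argument: the paper's proof says condition~(ii) yields first-order mutual knowledge $E(S,H)$, and common knowledge of (i)--(ii) iterates this to all levels. That is exactly your one-step lemma propagated along chains of $\sim_p$ steps, with your reading of (ii) (the admissible set at the cut is determined by each agent's local projection) stated explicitly where the paper uses it implicitly.
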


\begin{proof}
  Condition~(ii) gives $E(S, H)$; common knowledge of the
  conditions iterates this to all levels.
\end{proof}

\paragraph{Connection to the oracle framework.}
The oracle characterization
(Section~\ref{sec:oracle-characterization}) assumes a single
global oracle that sees the entire history.
In the distributed setting, each agent has access only to a
\emph{local} oracle that sees its projection~$H|_p$.
A synchronization round (conditions (i)--(ii)) is precisely
the mechanism that elevates local oracles to global ones:
after synchronization, every agent's local state includes
the same admissible set (by condition~(ii) and
Lemma~\ref{lem:sync-ck}), so a local oracle can determine
the admissible set without global access.
Between synchronization points, local oracles are strictly
weaker---each sees only its own projection.

\begin{proposition}[Determination depth equals
  synchronization points]
  \label{prop:ck-rounds}
  In the online setting under a commutative basis~$\Phi$
  with $n \ge 2$ agents, if every layer boundary involves a
  cross-agent forced dependency
  (Definition~\ref{def:forced-dependency}), then
  $\Depth_\Phi(\Spec)$ equals the minimum number of
  synchronization points needed to resolve $\Spec$.
  When some layers involve only local dependencies, the
  minimum number of synchronization points may be smaller
  than $\Depth_\Phi(\Spec)$, but
  $\Depth_\Phi(\Spec)$ synchronization points always
  suffice.
\end{proposition}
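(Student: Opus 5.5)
The plan is to prove the two inequalities separately and then treat the local-dependency case as a relaxation of the lower bound. For the upper bound ("$\Depth_\Phi(\Spec)$ synchronization points always suffice"), take a depth-optimal resolving strategy $\sigma$ from Definition~\ref{def:res-cost-depth}. Simulate it in the distributed setting by inserting one synchronization point immediately before each of its layers. At the $j$-th synchronization point, Lemma~\ref{lem:sync-ck} makes the current admissible set common knowledge. Each agent's local oracle therefore knows exactly what the global disclosure oracle of Proposition~\ref{thm:oracle-depth-body} would have revealed at the start of that layer. Every agent can then compute $\sigma$'s layer-$j$ commitments from the synchronized state. Because $\Phi$ is commutative, the commitments of a layer may be issued by different agents in any interleaving without changing the resulting admissible set. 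The simulated execution therefore reproduces $\sigma$'s determinations and resolves $\Spec$ using $\Depth_\Phi(\Spec)$ synchronization points.

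For the lower bound, assume every layer boundary carries a cross-agent forced dependency. Fix a distributed strategy resolving $\Spec$ with $s$ synchronization points, and show that it induces a centralized strategy of depth at most $s$. Between consecutive synchronization points, each agent sees only its projection $H|_p$. Consider a commitment $\psi$ by agent $q$ that depends (in the universal sense of Definition~\ref{def:forced-dependency}) on a commitment $\varphi$ by agent $p\neq q$. If no synchronization point separates them, build a $q$-indistinguishable history $H'\sim_q H$ in which $\varphi$ has not yet taken effect; asynchrony permits delaying $p$'s events past $q$'s frontier while preserving consistency of cuts. Agent $q$ acts identically in $H'$, so $\psi$ is applied without $\varphi$. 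Universal forced dependency says this either violates validity or fails to yield a layering with $\varphi$ earlier, which is a contradiction. Hence every cross-agent dependency edge is separated by a synchronization point.

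Now group all commitments between consecutive synchronization points into one layer. The step above shows that no forced dependency lies inside a group. By commutativity, only forward validity could prevent co-application, and the grouped commitments were jointly applied in the actual histories, so forward validity already holds for them. The result is a centralized strategy with at most $s$ layers, giving $s\ge\Depth_\Phi(\Spec)$. Applying Proposition~\ref{thm:oracle-depth-body} rules out any centralized shortcut, even though the induced strategy has full oracle power between layers. When some layer boundaries involve only local dependencies, the indistinguishability argument no longer applies: one agent can order its own dependent commitments along its chain $H|_p$ with no synchronization. Two layers may then share a synchronization interval, which explains why fewer synchronization points can suffice. The upper bound is unaffected.

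The main obstacle is the indistinguishability construction in the lower bound. It requires producing a legitimate history in $\Hist$ that delays $p$'s commitment relative to $q$'s without changing $H|_q$. It must also be checked that this delay contradicts universal, rather than merely local, dependency. The first attempt will be to take $H'$ as the causal past of $q$'s frontier, closed downward and cut before $\varphi$. This works provided $\varphi\not\to^*\psi$ in $H$; the absence of a synchronization point between them is meant to guarantee exactly that, and this link is the one that needs care.
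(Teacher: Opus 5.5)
Your two-part structure matches the paper's proof. For the upper bound you synchronize before every layer and simulate the disclosure oracle. For the lower bound you argue that a cross-agent dependency cannot be honored inside a synchronization interval. The lower-bound count, however, breaks at the grouping step. Your indistinguishability argument separates only \emph{cross-agent} edges, so ``no forced dependency lies inside a group'' does not follow. The hypothesis says each layer boundary involves \emph{some} cross-agent dependency, not that all dependencies are cross-agent. Your own last paragraph notes that an agent may chain its own dependent commitments inside one interval. In addition, environment events inside an interval split it into several runs, and late commitments may depend on events that a layer applied at the interval's start could not see. So an interval need not be one layer, and the ``centralized strategy with at most $s$ layers'' is not established.

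With only one cross-agent edge per boundary the count can genuinely fail. Let $p$ apply $\varphi_1$ and then $\varphi_2$ (depending on $\varphi_1$), and let $q$ apply $\chi$ (depending on $\varphi_1$) and $\varphi_3$ (depending on $\varphi_2$). Both boundaries of the depth-$3$ layering $\{\varphi_1\},\{\varphi_2,\chi\},\{\varphi_3\}$ carry a cross-agent edge, yet one synchronization between $p$'s work and $q$'s suffices. The repair is to count along a chain rather than group intervals. Take a longest dependency chain $\varphi_1\to\cdots\to\varphi_d$, and read the hypothesis as requiring consecutive chain elements to sit at distinct agents. The synchronization point separating $\varphi_i$ from $\varphi_{i+1}$ precedes $\varphi_{i+1}$, while the one separating $\varphi_{i+1}$ from $\varphi_{i+2}$ follows it, so these $d-1$ points are distinct. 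Reaching $d$ requires also counting a synchronization at the start of the first layer, as your upper bound does. This is the precise content of the paper's ``each inter-synchronization interval accomplishes at most one layer,'' and it makes your appeal to Proposition~\ref{thm:oracle-depth-body} unnecessary.

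Second, the link you flag cannot be derived from the stated hypotheses. A single message from $p$, sent after $\varphi$ and received by $q$ before $\psi$, puts $\varphi$ in $\psi$'s causal past without constituting a synchronization point. It also gives $q$ plain (not common) knowledge that $\varphi$ occurred, which is all $\psi$'s validity needs. The paper does not derive this either; its proof simply asserts that without a synchronization point $q$'s projection may be consistent with histories in which $p$ has not acted. You should state it as an explicit premise: the indistinguishability condition of Theorem~\ref{thm:async-impossibility} holds for events after the last synchronization point. Do this rather than trying to build $H'$ from the causal past of $q$'s frontier.

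Finally, in the upper bound, common knowledge of the admissible set is less than what the disclosure oracle supplies. $\sigma$ is a function of the whole history, and forward validity depends on which extensions remain possible, not only on $\Spec(H)$. The paper has the agents exchange their full projections at each synchronization point, so that the global history itself becomes common knowledge; your simulation needs the same step.
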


\begin{proof}
  \emph{Lower bound.}
  Between two consecutive synchronization points, each agent
  acts on its projection $H|_p$ (its local events, received
  messages, and environment inputs).
  Within a single layer, an agent can safely apply its own
  commitments without synchronization: by definition, the
  commitments in a layer commute and are jointly valid, so
  each agent's share can be applied independently.
  The constraint arises at layer boundaries.
  By the cross-agent assumption, each layer boundary has a
  commitment in layer $i+1$ at some agent~$q$ that
  universally depends
  (Definition~\ref{def:forced-dependency}) on a commitment
  in layer~$i$ at a different agent~$p$.
  Without a synchronization point, $q$ cannot verify
  that $p$ has completed its layer-$i$
  commitment: $q$'s projection may be consistent
  with histories in which $p$ has not yet acted.
  Applying a layer-$(i+1)$ commitment in such a history
  risks invalidity.
  A synchronization point establishes common knowledge that
  layer~$i$ is complete, enabling all agents to proceed to
  layer~$i+1$.
  Hence each inter-synchronization interval accomplishes
  at most one layer, and at least
  $\Depth_\Phi(\Spec)$ synchronization points are needed.

  \emph{Upper bound.}
  A synchronous protocol in which all agents exchange
  projections at each synchronization point can simulate the
  oracle: the combined projections reconstruct the global
  history, and the strategy selects the next layer's
  commitments.
  At the synchronization point, every agent knows the
  global history (from the exchange), knows that every other
  agent knows it (the exchange was simultaneous), and so on
  at all levels---establishing common knowledge of the
  current state.
  Hence $\Depth_\Phi(\Spec)$ synchronization points suffice.
\end{proof}

The following theorem recovers the Halpern--Moses impossibility
of asynchronous common
knowledge~\cite{halpern1990knowledge} as a consequence of the
determination framework, via an independent proof that uses
only histories, projections, and universal forced dependency.

\begin{theorem}[Asynchronous impossibility via determination]
  \label{thm:async-impossibility}
  Let $\Hist_{\mathsf{async}}$ be an asynchronous history
  class: for every pair of agents $p, q$ and every history
  $H \in \Hist_{\mathsf{async}}$ containing an event $e$ at
  $p$ that is not in the projection of~$q$, there exists
  $H' \in \Hist_{\mathsf{async}}$ with $H \sim_q H'$ in
  which $e$ has not occurred.
  If $\Spec$ has a cross-agent universal forced
  dependency---a
  commitment $\varphi$ at some agent~$p$ and a commitment
  $\psi$ at a distinct agent~$q$ such that $\psi$ universally
  depends on $\varphi$
  (Definition~\ref{def:forced-dependency})---then no
  determination strategy over a commutative basis can
  resolve $\Spec$ over $\Hist_{\mathsf{async}}$.
\end{theorem}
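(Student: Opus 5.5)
The proof goes by contradiction, via an indistinguishability argument in the style of Halpern--Moses. I would use only the asynchrony property of $\Hist_{\mathsf{async}}$ and the definition of universal forced dependency (Definition~\ref{def:forced-dependency}).

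Suppose some strategy $\sigma$ over a commutative basis $\Phi$ resolves $\Spec$ over $\Hist_{\mathsf{async}}$. Fix a resolving history $H$ arising under $\sigma$ whose determination contains both $\varphi$ (at $p$) and $\psi$ (at $q$). Such a history exists because the dependency is assumed to be witnessed by $\Spec$. By universal dependency, $\varphi$ lies in a strictly earlier layer than $\psi$ in every layering of $D(H)$.

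Next I would look at the prefix $H_\psi$ of $H$ just before $q$ appends $\psi$. In the distributed setting $\psi$ succeeds only $q$'s local maximal events, so $q$'s decision to emit $\psi$ is a function of $H_\psi|_q$ alone. If $\varphi \notin H_\psi|_q$, asynchrony gives $H' \sim_q H_\psi$ in which $\varphi$ has not occurred. If $\varphi$ is already in $q$'s projection (for example, via a message from $p$), I would iterate on whichever event in $p$'s causal chain is not yet visible to $q$. The goal is to show that ``$p$ has completed its layer-$i$ commitment'' can never be certified by $q$: only common knowledge would certify it, and asynchrony blocks every finite level $E^k$ through repeated applications of the hypothesis. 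By Lemma~\ref{lem:sync-ck} and Proposition~\ref{prop:ck-rounds}, separating the layers containing $\varphi$ and $\psi$ requires a synchronization point, and asynchrony forbids one.

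In $H'$ the strategy still has $q$ emit $\psi$, since $q$'s view is the same. Two cases follow. In the first, $H'$ extends to a resolving history containing both commitments in which $\psi$ does not come after $\varphi$ in any layer order. Commutativity of $\Phi$ makes the order of application within a run irrelevant, so $\varphi$ and $\psi$ can be placed in a single layer, or $\psi$ in an earlier one. This contradicts universal dependency. In the second, no such extension exists, so $\psi$ at $H'$ is invalid: it empties the admissible set at some extension, contradicting validity. Either way $\sigma$ does not resolve $\Spec$.

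The main obstacle is the case where $\varphi$ is already causally visible to $q$ before $\psi$. There the naive indistinguishability step fails, because $q$ genuinely knows $\varphi$ occurred. I would handle it by showing that universal dependency, read as ``in every layering'', together with commutativity, forces the dependency to be against a state that $q$ can never know is common. Concretely, I would argue that the layer boundary must be a consistent cut satisfying condition~(ii), and then remove the last message establishing that cut, in the standard Halpern--Moses induction on message count. That step must be formulated carefully using only projections.
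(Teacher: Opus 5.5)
Your central step is exactly the paper's proof. You pass to the prefix just before $q$ emits $\psi$, use the asynchrony hypothesis to obtain $H' \sim_q H$ in which $\varphi$ has not occurred, note that $q$'s strategy still emits $\psi$ at $H'$, and derive a contradiction.

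However, the case you single out as the main obstacle cannot occur. In the paper, $H|_q$ is the restriction of $H$ to events \emph{at}~$q$, so $\varphi$, an event at $p \neq q$, is never in $q$'s projection. The paper's proof uses exactly this observation. Suppose $p$ sends $q$ a message announcing $\varphi$. Then $q$'s projection contains only a receive event at $q$, and the definition of $\Hist_{\mathsf{async}}$ still supplies an $H' \sim_q H$, with that same typed receive event, in which $\varphi$ has not occurred. The Halpern--Moses induction over messages is thus built into the hypothesis. A single application already shows that $q$ never even knows (at first order, $K_q$) that $\varphi$ occurred, so there is no iteration over $E^k$ to carry out. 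Your final paragraph and the appeals to Lemma~\ref{lem:sync-ck} and Proposition~\ref{prop:ck-rounds} should be dropped, not completed. In particular, the claim that asynchrony forbids a synchronization point is not an available lemma. It is what the indistinguishability step itself establishes, so using it as an input would be circular.

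Your first case is sound. It is in fact the only place the commutative-basis hypothesis does any work; the paper's proof never uses it. The step that needs repair is your second case. From the failure of the first case you cannot conclude that $\psi$ is invalid at $H'$. Definition~\ref{def:forced-dependency} only constrains layerings of resolving histories whose determination contains both commitments. It says nothing about continuations of $H'\cdot\psi$ in which $p$ never applies $\varphi$, and nothing in the formal definitions forces those continuations to be invalid.

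The paper gets invalidity at $H'$ directly from the operational reading of universal dependency stated after that definition: without the prior effect of $\varphi$, $\psi$ acts as the identity or violates validity. You need to invoke that reading explicitly; your case split does not supply it. Both your argument and the paper's also tacitly assume that resolving $\Spec$ forces some history arising under the strategy to contain both $\varphi$ and $\psi$. Otherwise the universal dependency is vacuous for that strategy.
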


\begin{proof}
  Let $\varphi$ at agent~$p$ and $\psi$ at agent~$q \neq p$
  be a cross-agent universal forced dependency.
  Consider any history $H$ in which $\varphi$ has been
  applied at $p$.
  Since $\varphi$ is an event at $p$ and is not in the
  projection of~$q$, the asynchronous condition guarantees a
  history $H' \in \Hist_{\mathsf{async}}$ with
  $H \sim_q H'$ in which $\varphi$ has not occurred.
  At $H'$, $\psi$ is invalid: by universal forced dependency,
  $\psi$ acts as the identity or violates validity without
  the prior effect of~$\varphi$.
  Since $q$ cannot distinguish $H$ from $H'$, any
  deterministic strategy that applies $\psi$ at $H$ also
  applies it at $H'$, producing an invalid determination.
  Hence no strategy can apply both $\varphi$ and $\psi$
  without a synchronization point between them, and
  $\Spec$ is not resolvable over
  $\Hist_{\mathsf{async}}$.
\end{proof}

\noindent
Theorem~\ref{thm:async-impossibility}---like Halpern-Moses---is a qualitative
threshold: synchronization is needed or not.
Determination depth refines this to a cost measure:
Proposition~\ref{prop:ck-rounds} says at most $k$
synchronization points are needed for depth~$k$ (exactly
$k$ when every layer boundary involves a cross-agent
dependency), and the conservation law
(Theorem~\ref{obs:conservation}) ensures the total
sequential cost cannot be eliminated by enriching the basis.
More broadly, the determination framework applies beyond
the distributed setting---to single-agent offline problems,
combinatorial structures, and autoregressive
generation---making the distributed recovery one
instantiation of a domain-independent measure.

\begin{corollary}[Common knowledge assumptions collapse depth]
  \label{cor:ck-collapse}
  If common knowledge of a property $P$ is assumed as part
  of the model (i.e., $P$ holds at every history in $\Hist$
  and this is common knowledge among all agents), and
  establishing $P$ through synchronization would require
  $j$ layers, then the assumption reduces
  $\Depth_\Phi(\Spec)$ by $j$.
\end{corollary}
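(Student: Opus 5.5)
The plan is to compare two models of the same specification: one in which $P$ must be \emph{established} by the agents, and one in which $P$ is \emph{assumed} as common knowledge. I would then show that their optimal depths differ by exactly $j$. Let $\Hist$ be the original history class and let $\Hist_P \subseteq \Hist$ be the subclass in which $P$ holds and is common knowledge. Let $\Spec_P$ be the restriction of $\Spec$ to $\Hist_P$. The claim is then $\Depth_\Phi(\Spec_P) = \Depth_\Phi(\Spec) - j$. I would read the hypothesis ``establishing $P$ requires $j$ layers'' concretely as follows. In any resolving strategy over $\Hist$, the commitments that bring about the state in which $P$ is common knowledge occupy $j$ layers. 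Each of these layer boundaries is a cross-agent forced dependency, so by Proposition~\ref{prop:ck-rounds} each one corresponds to a synchronization point. Moreover, these $j$ layers form a prefix of a longest dependency chain.

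\emph{Upper bound ($\le$).} Take an optimal strategy $\sigma$ for $\Spec$ over $\Hist$. Split its layers into the first $j$ layers, which establish $P$, and the remaining $\Depth_\Phi(\Spec)-j$ layers. Over $\Hist_P$, every history already satisfies $P$, and $P$ is common knowledge there. By Lemma~\ref{lem:sync-ck}, this is exactly the state that $\sigma$ reaches after its $j$-th synchronization point. So the suffix strategy $\sigma'$, which skips the first $j$ layers and runs the rest of $\sigma$ from the start, is well defined. It is valid, because each agent's local information under $\Hist_P$ is at least what it had after the synchronization in $\sigma$. It is resolving, because the admissible sets coincide. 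Its depth is therefore $\Depth_\Phi(\Spec)-j$.

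\emph{Lower bound ($\ge$).} Suppose $\sigma'$ resolves $\Spec_P$ in fewer than $\Depth_\Phi(\Spec)-j$ layers. Build a strategy for $\Hist$ by first running $j$ layers that establish $P$, which is possible by hypothesis, and then running $\sigma'$. The total depth is then below $\Depth_\Phi(\Spec)$, which contradicts optimality. To make this composition valid, I would argue two things. First, after the $j$ layers, the reachable histories are indistinguishable to every agent from histories in $\Hist_P$, again by Lemma~\ref{lem:sync-ck}. Second, under a commutative basis, forward validity depends only on the specification (the point made after Proposition~\ref{thm:oracle-depth-body}), so $\sigma'$'s commitments stay valid after the prefix. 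The two depths therefore add along the chain, as in Proposition~\ref{thm:dependency-chain-lb}.

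The main obstacle is the hypothesis itself: it must be made precise enough that the $j$ establishing layers are a \emph{separable prefix}. Without that, they could overlap with layers needed anyway for resolution, and the reduction would be strictly less than $j$. My first attempt would be to define ``requires $j$ layers'' as the statement that the longest dependency chain of $\Spec$ over $\Hist$ factors through the event ``$P$ becomes common knowledge,'' with exactly $j$ layers before that event. This is in the spirit of the forced-dependency DAG used in the proof of Theorem~\ref{obs:conservation}. If that factorization cannot be guaranteed, I would weaken the conclusion to ``reduces by at most $j$,'' where only the upper-bound argument is needed. I would then record equality as holding under the cross-agent assumption of Proposition~\ref{prop:ck-rounds}.
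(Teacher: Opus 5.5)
The paper gives no separate proof of this corollary. It reads it off Proposition~\ref{prop:ck-rounds} and Lemma~\ref{lem:sync-ck} as bookkeeping: each cross-agent layer costs one synchronization point, a synchronization point serves to create common knowledge, so assuming the common knowledge that $j$ such points would create deletes them. Your plan is that idea made explicit, and you are right that equality needs a separability hypothesis.

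There is, however, a more basic gap that affects both of your directions. Each transplant step silently identifies the state reached after the $j$ establishing layers with the initial state of the $P$-model, and nothing you cite supplies this.
\begin{itemize}
\item In the upper bound, Lemma~\ref{lem:sync-ck} says only that the admissible set at a synchronization point is common knowledge. It says nothing about what $\Spec$ assigns to histories of $\Hist_P$. The first $j$ layers of $\sigma$ are commitments, which narrow the \emph{outcome} set, while $P$ is a property of \emph{histories}. Restricting the history class does not perform those narrowings, so $\Spec_P$ at the start can be strictly larger than the set $\sigma$ has reached. Then $\sigma$'s suffix need not leave a singleton; ``the admissible sets coincide'' is exactly what is unproved.
\item In the lower bound, indistinguishability of post-prefix histories from histories of $\Hist_P$ is likewise not given by Lemma~\ref{lem:sync-ck}. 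Nor does ``forward validity depends only on the specification'' transplant $\sigma'$: forward validity of a layer at $H$ quantifies over the extensions of $H$, and post-prefix histories of $\Hist$ and histories of $\Hist_P$ have different extensions unless you assume they correspond.
\item Relatedly, $\Hist_P$ cannot just be the subclass of $\Hist$ where $P$ holds. Where $P$ has been \emph{established}, those histories already contain the establishing commitment events, which $D(H)$ counts.
\end{itemize}

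What is missing is an explicit simulation hypothesis. You need a correspondence between the histories reachable at the $j$-th synchronization point and the histories of $\Hist_P$ that preserves admissible sets, the extension order, and each $\sim_p$. With it, your composition argument yields $\Depth_\Phi(\Spec) \le j + \Depth_\Phi(\Spec_P)$, i.e.\ the assumption saves \emph{at most} $j$ layers. The suffix argument yields the reverse inequality only under separability, meaning some optimal strategy over $\Hist$ uses its first $j$ layers solely to establish $P$.

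Your fallback swaps the two halves. ``Reduces by at most $j$'' is $\Depth_\Phi(\Spec_P) \ge \Depth_\Phi(\Spec) - j$, which is your \emph{lower}-bound (composition) argument. The upper-bound (suffix) argument is precisely the one that breaks when the prefix is not separable, so it is the half you would lose.

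Finally, phrasing separability through the longest dependency chain is too weak in this online setting. There, Proposition~\ref{thm:dependency-chain-lb} gives only chain length $\le$ depth. Separability has to be stated at the level of strategies or synchronization points, as in Proposition~\ref{prop:ck-rounds}.
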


\noindent
This unifies the diagnostic observations:
\begin{itemize}[nosep]
  \item \emph{BSP} assumes common knowledge of membership
        (the process set is fixed and globally known),
        saving $1$ layer
        (Section~\ref{sec:bsp-diagnosis}).
  \item In \emph{extensive-form games}, a dominant-strategy
        mechanism makes every node subgame-trivial,
        reducing the strategic depth to~$0$
        (Section~\ref{sec:mechanism}).
  \item \emph{LOCAL} assumes common knowledge of unique
        identifiers, collapsing determination depth to~$1$
        (Section~\ref{sec:local-diagnosis}).
\end{itemize}
In each case, the standard model silently enriches the
commitment basis by assuming common knowledge of a property
whose establishment would otherwise cost determination
layers.

\section{Tightness Proofs}
\label{sec:product-space-proof}
\label{sec:conservation-proof}
\label{sec:conservation-tight-proof}

\paragraph{Constrained generation instantiation.}
The conservation law applies directly to the constrained
generation task (Section~\ref{sec:k-level-task}).
The outcome space is a product
$O = [m]^k$, and the constant-depth basis $\Phi_0$ is the
coordinate basis: each commitment ``fix $v_\ell = v$'' selects
a value for one position, which is computable in $O(1)$
circuit depth.
The dependency DAG $G_{\Phi_0}$ is the chain
$1 \to 2 \to \cdots \to k$, since the feasible values at
position $\ell$ depend on the value chosen at
position $\ell - 1$ (via the successor set
$P_\ell(v_{\ell-1}, \cdot)$).
In Case~2 of the conservation-law proof, this dependency
becomes a literal circuit data path: the sub-circuit computing $v_{\ell_{j+1}}$
must have $v_{\ell_j}$ on its input path, because the
successor sets $P_{\ell_{j+1}}(a, \cdot)$ differ across
predecessor values $a$.
Case~3 cannot arise for the same reason: $v_{\ell_{j+1}}$'s
feasibility depends on $v_{\ell_j}$'s value, so the
successor's exclusion cannot be valid before the
predecessor's.

We first prove the three-way tradeoff, then the conservation
tightness.

\vspace{0.5em}
\noindent
\textbf{Theorem~\ref{thm:three-way}}
(Depth--width--communication tradeoff, restated)\textbf{.}
\textit{For the random $(k,m,s)$-distribution in the
$k$-party model, any $d$-round protocol with width~$w$ and
per-player communication $b_\ell$ bits satisfies
$\log w + \sum_{\ell \in U} b_\ell
  \ge |U| \cdot \log(m/s)$,
where $U$ is the set of uninformed links
($|U| \ge k - d$).}

\begin{proof}[Proof of Theorem~\ref{thm:three-way}]
  Fix a $d$-round protocol with width~$w$ and layer
  assignment $S_1, \ldots, S_d$.
  Let $U$ be the set of uninformed links.
  Index the uninformed links as $\ell_1, \ldots, \ell_t$;
  Observation~\ref{lem:uninformed-link-count} tells us $t \ge k - d$ .

  \emph{Message-decoding lemma.}
  Let $R \subseteq [m]$ be a uniformly random $s$-element
  subset, let $M = M(R)$ be a $b$-bit message (a
  deterministic function of $R$), and let $g$ be any decoder
  that outputs an element $g(M) \in [m]$.
  Then $\Pr[g(M) \in R] \le \min(1,\; 2^b \cdot s/m)$.
  \emph{Proof:}
  Count pairs $(R, \mu)$ with $\mu = M(R)$ and
  $g(\mu) \in R$.
  For each of the at most $2^b$ message values $\mu$, the
  decoder outputs a fixed element $g(\mu)$.
  The number of $s$-subsets containing $g(\mu)$ is
  $\binom{m-1}{s-1}$.
  Hence the total number of good pairs is at most
  $2^b \cdot \binom{m-1}{s-1}$.
  Dividing by the total $\binom{m}{s}$ subsets gives
  $\Pr[g(M) \in R] \le 2^b \cdot
  \binom{m-1}{s-1}/\binom{m}{s} = 2^b \cdot s/m$.
  Taking the minimum with~$1$ gives the claim.

  \emph{Per-link bound.}
  Process the uninformed links in order
  $\ell_1, \ldots, \ell_t$.
  At step~$j$, condition on all constraint functions
  $(P_1, \ldots, P_k)$ except the row
  $P_{\ell_j}(v_{\ell_j - 1}, \cdot)$, on the outcomes
  at links $\ell_1, \ldots, \ell_{j-1}$, and on all
  messages from players other than~$\ell_j$.
  Under this conditioning, $v_{\ell_j - 1}$ is fixed (from
  earlier layers or the same layer), $v_{\ell_j}$ is a
  deterministic function of player~$\ell_j$'s message, and
  $P_{\ell_j}(v_{\ell_j - 1}, \cdot)$ remains a uniformly
  random $s$-subset
  (by Definition~\ref{def:random-ensemble}).
  Applying the message-decoding lemma with $b = b_{\ell_j}$
  (where $v_{\ell_j}$ is determined by player~$\ell_j$'s
  message and the fixed side information under the current
  conditioning),
  the conditional success probability at link $\ell_j$ is at
  most $\min(1,\; 2^{b_{\ell_j}} \cdot s/m)$.

  \emph{Combining.}
  By the chain rule across all uninformed links, the success
  probability of a single candidate is at most
  $\prod_{\ell \in U} \min(1,\; 2^{b_\ell} \cdot s/m)$.
  A union bound over $w$ candidates gives total success
  probability at most
  $w \cdot \prod_{\ell \in U} \min(1,\; 2^{b_\ell} \cdot
  s/m)$.
  For the strategy to succeed with positive probability,
  this upper bound must be at least~$1$.
  Taking logarithms:
  \[
    \log w + \sum_{\ell \in U}
    \min(0,\; b_\ell - \log(m/s))
    \;\ge\; 0.
  \]
  Since $\min(0, x) \le x$,
  \[
    0 \;\le\; \log w + \sum_{\ell \in U}
    \min(0,\; b_\ell - \log(m/s))
    \;\le\; \log w + \sum_{\ell \in U}
    (b_\ell - \log(m/s)),
  \]
  which rearranges to
  $\log w + \sum_{\ell \in U} b_\ell \ge
  |U| \cdot \log(m/s)$.
\end{proof}

\begin{theorem}[Tightness of the conservation law]
  \label{thm:conservation-tight}
  For the $k$-position constrained generation task
  (Definition~\ref{def:k-level-task}) with $m \ge 2s$, every
  determination using $d$ layers satisfies
  $\sum_{i=1}^{d}(1 + c_i) \ge k$, and this bound is achieved
  with equality for every $d \in \{1, \ldots, k\}$.
\end{theorem}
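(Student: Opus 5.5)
The plan has two halves: a lower bound obtained by specializing the conservation law, and a matching upper bound obtained by an explicit construction for each $d$.

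\emph{Lower bound.} We apply Theorem~\ref{obs:conservation} with $\Phi_0$ the coordinate basis. The key input is $\Depth_{\Phi_0}(\Spec_{\mathsf{cg}})=k$. By Proposition~\ref{thm:dependency-chain-lb}, it suffices to exhibit a dependency chain of length $k$ in some resolving determination, namely $\varphi_{1,v_1}\to\varphi_{2,v_2}\to\cdots\to\varphi_{k,v_k}$.

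\begin{itemize}
\item Since $m\ge 2s$, successor sets $P_\ell(a,\cdot)$ are proper subsets of $[m]$. Generically, rows for distinct predecessors differ, so some $b\in P_\ell(a,\cdot)$ has $b\notin P_\ell(a',\cdot)$.
\item Hence committing $v_\ell=b$ before $v_{\ell-1}$ is fixed is invalid for some predecessor choice. This gives local forced dependency of position $\ell$ on position $\ell-1$.
\item As a further sanity check, Theorem~\ref{thm:exp-separation} shows any $d'<k$-layer strategy without lookahead fails on some chain.
\end{itemize}

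The conservation law then yields $\sum_{i=1}^d(1+c_i)\ge k$ for every $d$-layer determination over any basis.

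\emph{Upper bound.} For each $d\in\{1,\ldots,k\}$ we split the $k$ positions into $d$ contiguous blocks $B_1,\ldots,B_d$ of sizes $k_1,\ldots,k_d$ with $\sum k_i=k$. We use an enriched basis whose layer-$i$ commitment fixes all of block $B_i$ at once.

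Given $v_{\min B_i-1}$ from the previous layer, the circuit for layer $i$ computes the block greedily:
\begin{itemize}
\item each next value is a constant-depth table lookup of some element (say, the first one) of $P_\ell(v_{\ell-1},\cdot)$ from the encoded history;
\item the block therefore needs a chain of $k_i$ lookups, of which the first is absorbed into the layer's "$1$" term.
\end{itemize}
With the convention that one lookup is one unit of depth, this gives $c_i=k_i-1$. Summing, $\sum_i(1+c_i)=\sum_i k_i=k$, which is equality.

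\emph{Main obstacle.} The hard step is the depth accounting convention. The equality requires counting each table lookup as exactly one unit, consistent with how $\Phi_0$ commitments count as one layer. Otherwise we get $c_i=O(k_i)$ rather than exactly $k_i-1$.

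We would fix this by normalizing circuit depth in units of $\Phi_0$-commitment depth, which is the same normalization implicit in Theorem~\ref{obs:conservation}. Under that normalization, the Case~2 argument in the appendix's instantiation shows each intra-block link needs one unit. So the upper and lower bounds match, and the case $d=k$ with $c_i=0$ recovers Observation~\ref{lem:sequential-ub}.
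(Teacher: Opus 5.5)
Your two halves match the paper's proof. The paper also gets the lower bound by asserting that the $\Phi_0$-dependency DAG is the chain $1\to 2\to\cdots\to k$ and invoking Theorem~\ref{obs:conservation}, and the upper bound from contiguous blocks of chained lookups with $c_i=|B_i|-1$. The upper bound is fine, and your ``one lookup $=$ one unit'' remark sharpens the paper's ``$O(1)$''. The lower bound does not go through, however, and the justifications you add are where this shows.

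Local forced dependency (Definition~\ref{def:forced-dependency}) quantifies over all layerings of a fixed determination $D(H)$. The coordinate commitments are pointwise, hence commute. Offline, the commitments $\varphi_{1,v_1},\ldots,\varphi_{k,v_k}$ of any admissible tuple therefore form one commuting layer that is valid in every order, because every intermediate admissible set contains that tuple. Your ``committing $v_\ell=b$ is invalid for some predecessor choice'' describes a strategy that does not yet know $v_{\ell-1}$; it does not show invalidity inside a determination that fixes $v_{\ell-1}$. So no local dependency follows, and formally $\Depth_{\Phi_0}(\Spec_{\mathsf{cg}})\le 1$, as the paper's conclusion concedes. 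Theorem~\ref{thm:exp-separation} cannot fill this in, since it is distributional and restricts what the strategy sees. Also, ``generically'' omits chains in which all rows of some $P_\ell$ coincide. The condition $m\ge 2s$ does not exclude these, and there position $\ell$ really is independent of position $\ell-1$.

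More seriously, your normalization does not rescue Case~2 of the conservation law; it refutes the inequality. Count one lookup as one unit. A single layer can then output an admissible tuple in $O(\log k)$ units:
\begin{itemize}
\item let $f_\ell(a)=\min P_\ell(a,\cdot)$ and form the $0/1$ matrices of the $f_\ell$ in constant depth;
\item compute every prefix $f_\ell\circ\cdots\circ f_2(\min P_1)$ with a parallel-prefix network of $O(\log k)$ levels of Boolean matrix products, each of the same depth as a single lookup.
\end{itemize}
With $d=1$ this gives $1+c_1=O(\log k)<k$ for large $k$. The gates for $v_\ell$ and $v_{\ell+1}$ are consistent because they share inputs, not because a directed path joins them. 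That is exactly where Observation~\ref{lem:semantic-circuit} fails. Associativity of composition is precisely the kind of shortcut the conservation law claims to exclude.

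The bound $\sum_i(1+c_i)\ge k$ holds only if layer circuits may use nothing but successor lookups at already-computed values. In that model each unit extends the computed prefix by at most one position, so the bound is a direct count, not an application of Theorem~\ref{obs:conservation}. The paper's one-line proof has the same gap, so this restriction would have to be written into the statement.
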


\begin{proof}
  \emph{Lower bound.}
  The dependency DAG for the constrained generation task is
  the chain
  $1 \to 2 \to \cdots \to k$ (each position's feasibility
  depends on the previous value via $P_\ell(v_{\ell-1},
  \cdot)$), which has longest path $k$.
  By Theorem~\ref{obs:conservation},
  $\sum_{i=1}^{d}(1 + c_i) \ge k$ for any $d$-layer
  determination.

  \emph{Upper bound.}
  For any $d \in \{1, \ldots, k\}$, partition the $k$ levels
  into $d$ contiguous blocks $B_1, \ldots, B_d$ of sizes
  $\lfloor k/d \rfloor$ or $\lceil k/d \rceil$.
  In layer~$i$, commit to all positions in $B_i$ by computing
  them sequentially: given the value $v_{\ell-1}$ from the
  previous position (either from an earlier layer or from
  earlier in the same layer's computation), look up any
  $v_\ell \in P_\ell(v_{\ell-1}, \cdot)$.
  Each lookup has circuit depth~$O(1)$ (it is a table scan of
  $P_\ell$), and the $|B_i|$ lookups within layer~$i$ are
  chained, giving $c_i = |B_i| - 1$.
  The total is
  $\sum_{i=1}^{d}(1 + (|B_i| - 1)) = \sum_{i=1}^{d} |B_i|
  = k$.
\end{proof}

\section{Applications to Classical Problems}
\label{sec:applications}
\label{sec:examples}

This section demonstrates the breadth of the determination
framework by applying it to problems across several
domains: stable matching, extensive-form games,
chain-of-thought reasoning, and distributed round complexity
(BSP and LOCAL).
In each case, the framework either recovers a known result
with a new explanation or reveals structure that existing
models do not capture.
Determination depth arises for two reasons across these
examples: \emph{commitment dependency}, where each
commitment creates the sub-problem for the next layer
(constrained generation, extensive-form games); and
\emph{forward validity}, where commitments that are
individually valid conflict when applied in the same layer
(stable matching under the rotation basis).
Each example connects to known results in its domain:
stable matching recovers Garg's parallel
algorithm~\cite{garg2020lattice} and witnesses the
universality of determination depth
(every finite depth is realized by some instance);
extensive-form games measure the game's strategic
depth---the moves where a player must break a tie among
equally-optimal options---connecting
to Selten's trembling-hand
refinement~\cite{selten1975reexamination};
chain-of-thought reasoning decomposes CoT length into
determination depth, computational depth, and architectural
overhead, explaining why longer chains sometimes
degrade~\cite{chen2024illusion,sui2025dontoverthink};
and BSP/LOCAL expose hidden modeling assumptions
(Ameloot et al.'s
non-obliviousness~\cite{ameloot2013relational}) that
silently collapse determination layers.

The examples progress from a purely offline setting (stable
matching) through online single-agent generation
(chain-of-thought) to adversarial online interaction
(extensive-form games) and distributed multi-agent
computation (BSP and LOCAL), illustrating how determination
depth provides exact diagnostics across increasing
environmental complexity.

\subsection{Stable Matching and Determination Universality}
\label{sec:stable-matching}

Stable matching is a canonical relational specification: given
preference lists for two disjoint sets of $n$ agents (encoded
as environment events in an offline history), the
specification
$\Spec_{\mathsf{SM}}$ maps each history to the set of all
stable matchings of the encoded
instance~\cite{gale1962college}.
The number of stable matchings ranges from~$1$ to exponentially
many, and the specification is inherently relational whenever
more than one exists.
The analysis below is in the offline setting: all preference
lists are given and the specification has no extensions.
We show that determination depth for stable matching is exactly
characterized by a classical combinatorial invariant---the height
of the rotation poset---and that stable matching is
\emph{universal} for determination depth: every finite
depth arises as the rotation-poset height of some stable
matching instance.

\subsubsection{Rotations as commitments}

The structural theory of stable matchings is organized around
\emph{rotations}~\cite{irving1986efficient,gusfield1989stable}.
A rotation is a cyclic reassignment of partners that transforms
one stable matching into an adjacent one in the lattice of stable
matchings.
Formally, a rotation $\rho = (a_0, b_0), (a_1, b_1), \ldots,
(a_{r-1}, b_{r-1})$ is a sequence of matched pairs in some
stable matching such that $b_{i+1 \bmod r}$ is the next
partner on $a_i$'s preference list (after $b_i$) with whom
$a_i$ appears in some stable matching (this can be computed
without enumerating all stable
matchings~\cite{irving1986efficient}); applying $\rho$
reassigns each $a_i$ to $b_{i+1 \bmod r}$, producing an
adjacent stable matching in the lattice.

The set of rotations, partially ordered by precedence
($\rho \prec \rho'$ if $\rho$ must be applied before $\rho'$
can be exposed),
forms the \emph{rotation poset} $\Pi$.
The key structural facts
(Irving and Leather~\cite{irving1986efficient},
Gusfield and Irving~\cite{gusfield1989stable}) are:
\begin{enumerate}[label=(\roman*),nosep]
  \item The downsets (closed subsets) of $\Pi$ are in bijection
        with the stable matchings of the instance.
  \item Every finite poset is realizable as the rotation poset
        of some stable matching instance.
\end{enumerate}

We take rotations as the commitment basis $\Phi_{\mathsf{rot}}$.
The commitment for rotation $\rho$ is \emph{ambiguity-sensitive}
(non-pointwise): whether $\rho$ can be applied depends on the
current admissible set, not just on individual outcomes.
Writing $S = \Spec(H)$ for the admissible set at history~$H$:
\[
  \varphi_\rho(S)
  \;\triangleq\;
  \begin{cases}
    \{\,\mu \in S \mid \rho \in \mathsf{ds}(\mu)\,\},
    & \text{if $\rho$ is \emph{exposed} in $S$}\\
    & \text{(all predecessors of $\rho$ in $\Pi$}\\
    & \text{are in $\mathsf{ds}(\mu)$ for every $\mu \in S$),}\\
    S, & \text{otherwise,}
  \end{cases}
\]
where $\mathsf{ds}(\mu)$ denotes the downset of $\Pi$
corresponding to matching~$\mu$
(by fact~(i) above, each stable matching corresponds to a
unique downset).
The commitment is irrevocable and satisfies shrinkage
(it filters the admissible set rather than transforming
matchings; in the determination framework, ``applying a
rotation'' means retaining only matchings consistent with
that rotation having been applied).
Any determination that applies rotations in a valid
topological order preserves feasibility.

\subsubsection{Determination depth equals rotation poset height}

\begin{proposition}[Determination depth of stable matching]
  \label{thm:sm-depth}
  For any stable matching instance with rotation poset $\Pi$,
  \[
    \Depth_{\Phi_{\mathsf{rot}}}(\Spec_{\mathsf{SM}})
    \;=\;
    \mathrm{height}(\Pi),
  \]
  where $\mathrm{height}(\Pi)$ is the length of the longest
  chain in $\Pi$.
\end{proposition}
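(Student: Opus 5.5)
The proof establishes the two inequalities separately. Throughout, we use fact~(i): admissible sets reachable by rotation commitments correspond to intervals of downsets. The initial admissible set is all downsets of $\Pi$. After applying a set $R$ of rotations, each one exposed when applied, the admissible set is exactly the set of downsets containing the downward closure of $R$.

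\emph{Upper bound.} We partition $\Pi$ by level. Let $L_i$ be the set of rotations whose longest chain ending at them has length $i$, for $i=1,\ldots,\mathrm{height}(\Pi)$. All predecessors of a rotation in $L_i$ lie in $L_1\cup\cdots\cup L_{i-1}$. Suppose layers $L_1,\ldots,L_{i-1}$ have already been applied. Then every $\mu$ in the current admissible set $S$ has a downset $\mathsf{ds}(\mu)$ containing all of these earlier levels, so every $\rho\in L_i$ is exposed in $S$. Each $\varphi_\rho$ with $\rho \in L_i$ therefore acts as the filter ``$\rho\in\mathsf{ds}(\mu)$''. These filters are pointwise on the current set, so they commute pairwise, and $L_i$ is a commuting layer. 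The intersection of the filters is nonempty, since it contains the matching whose downset is the closure of $L_1\cup\cdots\cup L_i$; hence the determination is valid. After all layers have been applied, the only surviving outcome is the matching with downset $\Pi$, namely the woman-optimal matching. This resolves $\Spec_{\mathsf{SM}}$ in $\mathrm{height}(\Pi)$ layers.

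\emph{Lower bound.} We invoke Proposition~\ref{thm:dependency-chain-lb}, so it suffices to show that a longest chain $\rho_1\prec\cdots\prec\rho_h$ yields a dependency chain. The first step is to argue that any resolving determination must apply, in effect, every rotation on some maximal chain. Resolution leaves a single matching $\mu$ with $\mathsf{ds}(\mu)=D$. Since commitments only filter by rotation membership, narrowing to one downset requires that every rotation in $D$ be committed, so that the lower endpoint of the interval equals $D$. It also requires that every rotation outside $D$ be either committed or excluded.

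I expect the real obstacle to be this first step, because a determination could in principle stop at $D=\emptyset$ with nothing committed. Stopping there gives a single outcome only if the interval is already a singleton, which is false whenever $\Pi$ is nonempty. With only the rotation basis, however, an upper bound on the interval can never be imposed: $\varphi_\rho$ keeps matchings that \emph{contain} $\rho$. The admissible set is therefore always an up-closed family containing $\Pi$, and resolution forces the final set to be $\{\mu_\Pi\}$. Every rotation must then have been applied while exposed, since a commitment applied while unexposed is the identity and is never effective.

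The second step is the dependency itself. Apply a rotation $\rho_{j+1}$ in the same layer as $\rho_j$, or before it. The layer's commitments are evaluated at a history where $\rho_j$ has not yet been committed, so some $\mu\in S$ has $\rho_j\notin\mathsf{ds}(\mu)$. Hence $\rho_{j+1}$ is not exposed and $\varphi_{\rho_{j+1}}$ acts as the identity. It follows that in every effective layering $\rho_{j+1}$ sits strictly after $\rho_j$. I would first make this precise by restricting to the commitment occurrence that actually excludes the last matchings lacking $\rho_{j+1}$.

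Chaining these dependencies gives a dependency chain of length $h$, and therefore depth at least $\mathrm{height}(\Pi)$. Combined with the upper bound, this gives equality.
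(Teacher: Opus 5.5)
Your proposal is correct and follows the paper's proof. The upper bound uses the same longest-path layering, with exposed rotations acting as commuting filters; the paper says explicitly that exposure persists as $S$ shrinks, and your commutation claim uses this tacitly. The lower bound uses the same non-exposure argument along a longest chain, invoking Proposition~\ref{thm:dependency-chain-lb}. Your added observation that rotation commitments only ever impose lower bounds on the downset goes beyond the paper. It means every resolving determination must end at the top matching and hence effectively apply every rotation. This supplies a step the paper leaves implicit: the chain's rotations must appear in every resolving determination, not just in some.
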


\begin{proof}
  \emph{Upper bound.}
  Partition the rotations of $\Pi$ into layers by a longest-path
  layering: layer~$i$ contains all rotations whose longest chain
  of predecessors has length~$i$.
  After layers $1, \ldots, i\!-\!1$ have been applied, every
  rotation in layer~$i$ is exposed: by the downset--matching
  bijection~(i), applying a rotation $\rho'$ retains only
  matchings whose downsets contain $\rho'$, so after all
  predecessors of a layer-$i$ rotation have been applied,
  every surviving matching's downset contains them.
  Within layer~$i$, the rotations are pairwise incomparable in
  $\Pi$; since all are simultaneously exposed, applying any
  subset does not affect the exposure status of the others.
  (Exposure of $\rho'$ requires all predecessors of $\rho'$ to
  be in $\mathsf{ds}(\mu)$ for every surviving $\mu \in S$.
  Applying an incomparable $\rho$ only shrinks $S$, which can
  only make this condition easier to satisfy, not harder.)
  Hence the commitments within each layer commute, and the
  number of layers equals $\mathrm{height}(\Pi)$.

  \emph{Lower bound.}
  Let $\rho_1 \prec \rho_2 \prec \cdots \prec \rho_h$ be a
  longest chain in $\Pi$.
  Each $\varphi_{\rho_{i+1}}$ depends on $\varphi_{\rho_i}$
  (Definition~\ref{def:forced-dependency}): before $\rho_i$
  is applied, the admissible set contains matchings whose
  downsets do not include $\rho_i$, so $\rho_{i+1}$ is not
  exposed and $\varphi_{\rho_{i+1}}$ acts as the identity.
  This gives a dependency chain of length~$h$, so
  $\Depth_{\Phi_{\mathsf{rot}}}(\Spec_{\mathsf{SM}}) \ge h$
  by Proposition~\ref{thm:dependency-chain-lb}.
\end{proof}

\begin{remark}[Universality]
  \label{cor:sm-universality}
  By the Irving--Leather realization
  theorem~\cite{gusfield1989stable}, every finite poset arises
  as a rotation poset.
  Since determination depth equals poset height
  (Proposition~\ref{thm:sm-depth}), stable matching realizes
  every possible determination depth: for every $k$, there
  exists an instance with determination depth exactly~$k$.
\end{remark}

\paragraph{Depth-optimality of strategies.}
The Gale--Shapley algorithm~\cite{gale1962college} finds a
stable matching in $O(n^2)$ sequential steps, but it is not
depth-optimal: it uses $O(n^2)$ sequential steps even when the
rotation poset has small height.
A depth-optimal strategy---applying all exposed rotations
simultaneously at each layer---achieves depth equal to the
poset height, potentially much less than $O(n^2)$.
This is precisely the parallel algorithm derived independently
by Garg's lattice-linear predicate
framework~\cite{garg2020lattice}, which the determination
framework recovers as a consequence of the depth
characterization.

\subsection{Strategic Depth in Extensive-Form Games}
\label{sec:mechanism}

How much arbitrary choice does a game force upon a player?
Game-tree depth overcounts: at forced moves, where only one
option is optimal, the player faces a purely computational
burden, not a relational one.
We use determination depth to measure a game's
\emph{strategic depth}: the number of moves where the
player must break a tie among multiple equally-optimal
options---an irreducible relational cost that no amount of
computation can eliminate.

\paragraph{Specification.}
Consider a two-player extensive-form game with perfect
information.
Player~1 (the determiner) and Player~2 (the environment)
alternate moves in a game tree~$T$; this is an online
specification, since player~2's moves are environment events
that extend the history between player~1's commitments.
The outcome set $O$ is the set of all root-to-leaf paths
(complete plays), and the initial admissible set is
$\Spec(H_0) = O$.
We restrict the admissible set to plays consistent with
\emph{subgame-perfect equilibrium}
(SPE)~\cite{selten1965spieltheoretische}: at each node,
the acting player's move must be optimal given optimal play
in all subsequent subgames.
Formally, the admissible set at history~$H$ is the set of
all complete plays extending~$H$ that arise under
\emph{some} subgame-perfect equilibrium of the full game.
The specification is relational whenever multiple SPE plays
pass through the current history.

\paragraph{Commitment basis.}
At each player-1 node~$v$, the commitment ``choose
child~$c$'' excludes all plays not passing through~$c$:
a pointwise filter.
Player-2 moves are environment events---outside the
determiner's control, regardless of player-2's equilibrium
rationality.
Player-1 moves at \emph{subgame-trivial} nodes (where the
SPE prescribes a unique move) are also effectively
environment events: the move is uniquely determined, so no
choice is involved.
Only moves at \emph{subgame-non-trivial} nodes---where
multiple children lead to plays with the same SPE value for
player~1---are genuine commitments requiring an arbitrary
choice.

\begin{definition}[Subgame-non-trivial node]
  \label{def:nontrivial-node}
  A player-1 node~$v$ is \emph{subgame-non-trivial} if at
  least two children of~$v$ are each consistent with some
  (possibly different) subgame-perfect equilibrium of the
  full game.
  It is \emph{subgame-trivial} if exactly one child is
  SPE-consistent.
\end{definition}

\begin{proposition}[Strategic depth of extensive-form games]
  \label{prop:strategic-depth}
  For a two-player extensive-form game with perfect
  information, the determination depth from player~1's
  perspective (under the SPE-restricted specification) is at
  most the maximum number of subgame-non-trivial player-1
  nodes on any root-to-leaf path.
  The bound is tight whenever, between each consecutive pair
  of such nodes, player~2 has at least one
  subgame-non-trivial node.
\end{proposition}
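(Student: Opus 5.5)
The upper bound comes from an explicit strategy; the lower bound comes from an adversarial environment combined with the oracle characterization (Proposition~\ref{thm:oracle-depth-body}).

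\textbf{Upper bound.} The determiner follows the game as it unfolds. At each player-1 node $v$ it reads the current history with a disclosure oracle, which is free. It then acts according to the type of node:
\begin{itemize}
  \item if $v$ is subgame-trivial, the SPE-restricted admissible set already excludes every play outside the unique SPE-consistent child, so no commitment is needed;
  \item if $v$ is subgame-non-trivial, it issues one commitment, ``choose child $c$'', for some SPE-consistent child $c$, and this commitment forms its own layer.
\end{itemize}
Player-2 moves and trivial player-1 moves are environment events, so they only end runs. Along any realized play, the number of layers equals the number of non-trivial player-1 nodes visited, which is at most the stated maximum over root-to-leaf paths.

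Two checks remain. First, validity: choosing an SPE-consistent child leaves the admissible set nonempty at every extension. This holds because, for every player-2 response within the admissible set, some SPE play through $c$ continues, since the admissible set is defined as the plays extending $H$ under \emph{some} SPE. Second, resolution: at a leaf the admissible set is the single realized play. I will argue this by induction on the tree, which is where the finiteness of the game is used.

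\textbf{Tightness.} Fix a path $\pi$ that attains the maximum number $t$ of non-trivial player-1 nodes. Assume a non-trivial player-2 node lies between each consecutive pair of them. The adversarial environment steers play along $\pi$ wherever $\pi$ is SPE-consistent, and it chooses at each intervening player-2 node. I aim to show that no two of player~1's choices can share a layer. Suppose the determiner commits at the $j$-th non-trivial node to a choice at the $(j{+}1)$-th node in the same run. Because the player-2 node between them is non-trivial, player~2 can branch to an SPE-consistent alternative. On that branch, the committed filter either is vacuous, so it resolves nothing, or it empties the admissible set. This is exactly the forward-validity obstruction of Example~\ref{ex:three-valued}. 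Moreover, the intervening environment event forces a run boundary in any case, so each non-trivial player-1 node on $\pi$ costs a separate layer. Proposition~\ref{thm:oracle-depth-body} then shows that oracle power cannot merge these layers, which gives depth $\ge t$.

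\textbf{Main obstacle.} The delicate step is the tightness argument. I must justify that a commitment made early, such as a pointwise filter of the form ``if play reaches $v'$, choose $c'$'', cannot pre-resolve a later non-trivial node. Such a filter excludes plays through the other children at $v'$, and it also excludes them on histories where $v'$ is reached. The worry is that this might be a valid commitment in the same layer.

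My first attempt is to show that the basis only contains commitments ``choose child $c$'' at the current node $v$. Under that restriction, a later node's choice simply is not available before the history reaches it, and depth equals the number of non-trivial decision points along the adversary's path. If the basis is read more liberally, I would fall back on the non-trivial player-2 node in between: player~2 can make the realized SPE branch the one where the pre-commitment lies on an excluded continuation. I would then argue that this forces the pre-commitment to be either invalid or irrelevant there.
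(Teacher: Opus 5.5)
Your upper bound is the paper's. The tightness argument is where you depart, and it does not go through as written. The paper's mechanism is a forced dependency between consecutive non-trivial player-1 nodes $u,v$. Since ``choose child $c$'' excludes \emph{every} play not passing through $c$, issuing $v$'s choice while still at $u$ is invalid: the intervening non-trivial player-2 node has an SPE-consistent child not leading to $v$, and if the environment takes it the admissible set becomes empty. The paper chains these dependencies and converts them to depth by Proposition~\ref{thm:dependency-chain-lb}. Your fallback for the ``main obstacle'' cannot replace this. A conditional filter ``if play reaches $v'$, choose $c'$'' is pointwise and \emph{valid}: it is vacuous off $v'$, and at $v'$ it leaves an SPE play because the path to $v'$ is unique. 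Worse, pre-committing an SPE-consistent child at every non-trivial player-1 node is jointly valid, since descending through SPE-consistent children always ends at the leaf of an SPE play. After that, every later player-1 move is as forced as one at a trivial node, so with such filters the depth is at most~$1$ and the lower bound is false. The bound therefore has to rest on the basis of unconditional filters, where early issuance is invalid by the argument above; no ``current node'' restriction is needed. Then each choice follows an environment event, and your run-boundary count via Definition~\ref{def:layering-depth} is a legitimate and more direct substitute for the dependency-chain proposition. The appeal to Proposition~\ref{thm:oracle-depth-body} does no work: it bounds oracle-augmented strategies below by $\Depth_\Phi(\Spec)$, not $\Depth_\Phi(\Spec)$ below by~$t$.

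The second gap is the step in which the environment ``steers play along $\pi$''. At the non-trivial player-1 nodes of $\pi$ it is the determiner who chooses, and $\Depth_\Phi(\Spec)$ minimizes over the determiner's strategies, so it may simply leave $\pi$. Under the stated hypothesis this cannot be patched. Make all payoffs equal, so every profile is an SPE and every branching node is non-trivial. Let the root be a player-1 node with a leaf child $a$ and a child $b$ at which player~2 chooses between a leaf and a player-1 node with two leaf children. The maximum count is~$2$ and $b$ is the required non-trivial player-2 node, yet committing to $a$ resolves in one layer. What your argument (and the paper's) actually establishes is a lower bound by the count along the \emph{realized} play, i.e.\ the min--max of that count over determiner and environment choices. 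Equating it with the maximum over all root-to-leaf paths needs an extra hypothesis, for instance that the determiner's choices cannot lower the count. The paper's own proof chains along a fixed path and passes to depth through Proposition~\ref{thm:dependency-chain-lb}, whose bound is phrased as a maximum over resolving determinations, so it does not confront this point either.
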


\begin{proof}
  \emph{Upper bound.}
  A strategy that auto-plays the unique SPE move at
  subgame-trivial nodes and commits at subgame-non-trivial
  nodes uses one layer per non-trivial node along any
  realized play.
  Forced moves (at trivial nodes) and player-2 responses
  occur between commitment layers as environment events,
  contributing no determination depth.

  \emph{Lower bound (under the tightness condition).}
  Consider two consecutive subgame-non-trivial player-1
  nodes $u$ and $v$ on a root-to-leaf path, with $v$
  deeper than~$u$.
  Node~$v$ has SPE-consistent children~$c_1, c_2$, so
  there exist SPEs $\sigma_1, \sigma_2$ of the full game
  inducing plays through~$c_1$ and~$c_2$ respectively.
  Between $u$ and $v$, player-2 moves and trivial
  player-1 moves occur as environment events.
  The SPEs $\sigma_1$ and $\sigma_2$ may prescribe
  different player-2 moves at nodes between $u$ and~$v$,
  so the admissible set at~$v$---which plays remain
  SPE-consistent---depends on which environment events
  materialize after the commitment at~$u$.
  Under one sequence of player-2 moves, only plays
  through~$c_1$ may remain admissible at~$v$; under
  another, only plays through~$c_2$.
  Since these environment events have not occurred when
  the commitment at~$u$ is made, no commitment at~$u$
  can validly determine the choice at~$v$: any fixed
  choice risks selecting a child whose plays are
  inadmissible under some SPE-consistent continuation.
  The choice at~$v$ is therefore a forced dependency on
  the history up to~$v$.
  Chaining these forced dependencies across all
  non-trivial nodes on the path gives a dependency chain
  of length equal to their count
  (Proposition~\ref{thm:dependency-chain-lb}).
\end{proof}

\noindent
The result decomposes game-tree depth into two components:
\emph{strategic depth} (the non-trivial nodes, where the
player must break a tie among equally-optimal moves) and
\emph{forced depth} (the trivial nodes, where the optimal
move is uniquely determined and no choice is involved).
This decomposition is distinct from classical game-tree
depth and alternating Turing machine quantifier depth,
both of which count \emph{all} player-1 moves---including
forced ones that involve no relational choice.
The distinction is analogous to the BSP diagnostic
(Section~\ref{sec:bsp-diagnosis}): just as BSP overcharges
by counting communication rounds that resolve no ambiguity,
game-tree depth overcharges by counting moves that present
no relational choice.
The analysis is restricted to finite perfect-information
games (Zermelo's setting~\cite{zermelo1913anwendung}); extending to imperfect
information requires handling information sets and is left
open.

\begin{remark}[Error amplification under bounded rationality]
  \label{rem:error-amplification}
  Strategic depth has consequences beyond the relational
  burden itself.
  If a bounded player can compute SPE values but trembles
  when breaking ties---erring at each non-trivial node with
  independent probability~$p$---the probability of perfect
  play is $(1-p)^d$ ($d$ = strategic depth, not game-tree
  depth~$k$).
  Forced moves present no relational risk; computational
  failures at forced moves are an orthogonal axis of bounded
  rationality.
  Strategic depth is per-player and need not be symmetric:
  one player can often influence the other's strategic depth
  by choosing which subtree to enter.
  In chess, this is ``playing for complications'': steering
  toward positions where the opponent faces more
  tie-breaking choices.
  Balancing payoff maximization against error
  amplification is an open question that connects to
  Selten's trembling-hand
  refinement~\cite{selten1975reexamination}, with
  influence over determination depth as a new degree of
  strategic freedom.
\end{remark}

\begin{remark}[Mechanism design as basis enrichment]
  \label{rem:mechanism-basis}
  Beyond extensive-form games, the determination framework
  applies to mechanism design.
  A mechanism (e.g., an auction format) can be viewed as
  enriching the commitment basis: a direct-revelation
  mechanism allows a player to submit a full strategy as a
  single commitment, collapsing multiple sequential choices
  into one layer.
  A dominant-strategy mechanism (e.g., a second-price
  auction~\cite{vickrey1961counterspeculation}) makes every
  node subgame-trivial---each player's optimal action is
  independent of others'---collapsing strategic depth
  to~$0$.
  An indirect mechanism (e.g., an ascending
  auction~\cite{milgrom2000putting}) trades strategic depth
  for communication: more rounds of bidding, but less
  information per round.
  Whether indirect mechanisms obey a conservation law
  analogous to Theorem~\ref{obs:conservation}---trading
  determination depth against per-round communication and
  outcome-set width---is an open problem.
\end{remark}
\subsection{Decomposing Chain-of-Thought Length}
\label{sec:cot-appendix}
\label{sec:cot-decomposition}

An autoregressive transformer generates tokens sequentially,
each an irrevocable commitment.
Chain-of-thought (CoT) prompting increases the sequential layers
available.
Recent work shows CoT provides exponential
advantages~\cite{feng2023towards,li2024chain,mirtaheri2025letmethink},
yet longer chains sometimes
degrade~\cite{chen2024illusion,sui2025dontoverthink}.

We use the determination framework to decompose CoT length
into formally independent components.
The decomposition is not a theorem about any specific
transformer architecture; it is a semantic lower bound that
applies to any autoregressive procedure in which each token
is an irrevocable commitment and the task is relational
(multiple valid continuations exist).
The constrained generation task
(Section~\ref{sec:exp-separation}) serves as the witness
because it isolates determination depth from computational
depth: each commitment is a constant-time table lookup, so
the sequential cost is entirely semantic.
The total CoT length $T$ is lower-bounded by three
independent quantities:
(i)~\emph{determination depth}~$d$---non-commuting commitment
layers, witnessed by the constrained generation task
(Section~\ref{sec:exp-separation});
(ii)~\emph{computational depth}~$c$---inherent per-step
sequentiality from bounded transformer depth
(TC$^0$-like~\cite{feng2023towards}), witnessed by graph
connectivity~\cite{mirtaheri2025letmethink}; and
(iii)~\emph{architectural overhead}---tokens spent on
information management (context summarization, backtracking,
re-derivation), reducible by changing the architecture
without changing the task.

\begin{proposition}[CoT lower bound from determination depth]
  \label{thm:cot-lower-bound}
  CoT length $T \ge \max(d, c)$: the bound $T \ge d$ follows
  from Theorem~\ref{thm:exp-separation} and $T \ge c$
  from~\cite{feng2023towards,mirtaheri2025letmethink}.
  The two bounds are independent: there exist tasks with $d = k$,
  $c = O(1)$ (Theorem~\ref{thm:exp-separation}), and tasks with
  $d = 0$,
  $c = \omega(\!\log n)$~\cite{mirtaheri2025letmethink}.
  In both cases, fewer than $\max(d,c)$ layers requires
  exponential parallel width.
\end{proposition}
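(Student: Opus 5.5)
The proof has three parts: the lower bound $T \ge d$, the lower bound $T \ge c$, and the independence of the two together with the width claim.

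\textbf{Lower bound $T \ge d$.} I treat a chain-of-thought of length $T$ as a $T$-layer strategy in the sense of Definition~\ref{def:strategy-model}. Each emitted token is one irrevocable commitment, and it depends only on the input and on earlier tokens. So a CoT run with $T$ tokens, decoded with beam width $w$, is a $T$-layer strategy of width $w$. If $T < d = k$ on the constrained generation task, Observation~\ref{lem:uninformed-link-count} yields at least $k - T$ uninformed links. Theorem~\ref{thm:exp-separation}(b) then bounds the success probability by $w\,\gamma^{k-T}$. Consequently, resolving with constant probability forces $w \ge (1/\gamma)^{k-T}$, which is $(m/s)^{k-T}$ for the random distribution. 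This gives the determination half of the claim: width-$1$ CoT needs $T \ge d$, and any shortfall costs exponential width.

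\textbf{Lower bound $T \ge c$.} Here I do not reprove anything; I cite the cited results. A bounded-depth transformer computes each token in $\mathsf{TC}^0$-like depth. A task of inherent sequential depth $c$ therefore needs $c$ tokens of serial computation~\cite{feng2023towards,mirtaheri2025letmethink}, and those works also give the width blowup below $c$. Taking both bounds together gives $T \ge \max(d,c)$.

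\textbf{Independence and the width claim.} I exhibit two witnesses.
\begin{itemize}
\item The constrained generation task has $d = k$. By the tightness construction in Theorem~\ref{thm:conservation-tight} with $d = k$ blocks of size $1$, every layer has $c_i = 0$, so each step is an $O(1)$ table lookup and $c = O(1)$.
\item Graph connectivity is functional, so by Observation~\ref{prop:pointer-chasing}'s reasoning (a function has determination depth $0$) it has $d = 0$, while~\cite{mirtaheri2025letmethink} gives $c = \omega(\log n)$.
\end{itemize}
The final sentence of the statement, that fewer than $\max(d,c)$ layers requires exponential width, follows in each case from whichever bound attains the maximum.

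\textbf{Main obstacle.} The hard step is justifying that a transformer's CoT genuinely fits the strategy model. Hidden state within one token's forward pass can carry arbitrary computation. The point to establish is that Definition~\ref{def:strategy-model} already permits arbitrary dependence on the full input and earlier-layer assignments, so unbounded computation per token does not help. What cannot be evaded is that same-layer values are unavailable. I would state this explicitly, and note that this is exactly the content of the conservation law (Theorem~\ref{obs:conservation}) applied with per-layer circuit depth absorbed into $c$.
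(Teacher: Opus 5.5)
Your outline matches the paper's, which proves this proposition purely by citation: $T\ge d$ from Theorem~\ref{thm:exp-separation}, $T\ge c$ from the CoT literature, and the same two witnesses for independence. The gap is in the step you yourself flag as the crux. You argue that because Definition~\ref{def:strategy-model} already lets each layer depend on the full input, ``unbounded computation per token does not help,'' and that this is what the conservation law says.

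The paper's own Theorem~\ref{thm:conservation-tight} refutes this. Take $d=1$ there: one block $B_1=[k]$, lookups chained inside the layer, $c_1=k-1$. This gives a one-layer, width-$1$ strategy that assigns every position as a function of the full input and always outputs an admissible tuple. That strategy is legal under Definition~\ref{def:strategy-model}, yet Theorem~\ref{thm:exp-separation}(b) would cap its success probability at $\gamma^{k-1}<1$. ``Same-layer values are unavailable'' is no obstacle when those values can be recomputed from the input. The conservation law (Theorem~\ref{obs:conservation}) lower-bounds $\sum_i(1+c_i)$, not the number of layers, so it \emph{permits} trading layers for within-layer depth rather than forbidding it.

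So citing Theorem~\ref{thm:exp-separation}(b) at face value, as both you and the paper do, does not settle the CoT case. Its proof tacitly assumes that at an uninformed link $v_{\ell_j}$ is chosen without reading the row $P_{\ell_j}(v^*_{\ell_j-1},\cdot)$. Full-input access with unbounded per-token computation violates that assumption.

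To repair the $T\ge d$ half you must restrict per-token computation explicitly. There are two options.

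First, you can require that a value committed at an uninformed link is computed without access to its predecessor's row; for example, each token is a single lookup that cannot chain through same-layer values. This is exactly what the union-bound argument needs.

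Second, you can allow per-token depth $c_i$ and apply the conservation law correctly. That gives $\sum_{i=1}^{T}(1+c_i)\ge d$, i.e.\ $T\ge d/(1+\max_i c_i)$. For a constant-depth transformer this is $T=\Omega(d)$, not $T\ge d$. It also shows that a strategy's per-token depth can absorb determination layers, so $T\ge d$ cannot hold independently of the per-token computational budget.

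Finally, even in the restricted model the bound is distributional. With $w=1$ and $T=d-1$ the success probability is at most $\gamma$, not $0$. So ``width-$1$ CoT needs $T\ge d$'' should be phrased as a bound on success probability.
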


The decomposition explains the apparent conflict.
Tasks where CoT provides exponential advantage
are determination-bound or computation-bound: their
sequential cost is intrinsic, and more layers directly reduce
it.
Tasks where longer chains
degrade~\cite{chen2024illusion,sui2025dontoverthink} are
architecture-bound: $d$ and $c$ are small, so additional
tokens are spent on overhead (context management,
backtracking, re-derivation) that introduces errors without
reducing the bottleneck.
For determination-bound tasks, the conditional-spread
parameter $\gamma$ governs how much distributional
knowledge can substitute for chain length: a model facing a
$\gamma$-spread distribution cannot benefit from additional
training, while a model facing a more structured distribution
can trade prediction quality for fewer layers
(Remark~\ref{sec:approx-determination}).

\subsection{Diagnosing Distributed Round Complexity}
\label{sec:bsp-diagnosis}
\label{sec:example-bsp-collapse}

BSP~\cite{valiant1990bridging} organizes parallel computation
into synchronous rounds separated by barriers.
We show that BSP round complexity and determination depth
disagree in both directions, and that the disagreement is
precisely explained by the gap between BSP's implicit
commitment basis and the atomic basis that
serves as our intrinsic reference
(Section~\ref{sec:intrinsic-depth}).

\paragraph{The atomic basis as intrinsic reference.}
Under the atomic basis in the online setting, determination
depth measures the
irreducible commitment structure of a specification---the
minimum number of non-commuting layers forced by forward
validity constraints, with no assumptions about process identity,
membership, or communication primitives.
This is the ``bare'' complexity of the specification.
BSP departs from this reference in two ways: it
charges for communication (adding rounds where no commitment
occurs) and it implicitly enriches the
basis (collapsing layers that the atomic basis exposes).

\paragraph{Overcharging: rounds without commitments.}
BSP charges one round for any operation requiring
communication, whether or not the communication resolves
semantic ambiguity.
A single-valued specification has determination depth~$0$
under any basis---the output is uniquely determined, so no
commitment is needed.
Relational join is the canonical example:
parallel implementations that require no synchronization
barriers have been known since the early
1990s~\cite{wilschut1991parallel}, yet BSP assigns one round
to the data shuffle because BSP rounds account for
communication uniformly.
The unnecessary barrier has practical consequences
(e.g., straggler sensitivity and inability to pipeline),
which the determination framework diagnoses as overcharging
for a depth-$0$ task.

\paragraph{Undercharging: basis enrichment collapses depth.}
BSP implicitly enriches the atomic basis by assuming
\emph{fixed, static membership}: the process set is finite,
globally known, and does not change during execution.
In a real system, establishing membership requires a prior
commitment---a discovery protocol, a configuration step, or
a leader's decision about who participates---that BSP treats
as given.
This assumption hides exactly one unit of determination depth.
Under the atomic basis, resolving membership costs at least
one layer: a membership commitment (which processes
participate?) must precede a value commitment (what does each
process output?), and the two do not commute, giving
depth~$2$.
BSP absorbs the membership commitment into its model
assumptions, leaving only the value commitment visible---one
BSP round for a problem with intrinsic depth~$2$.
By the conservation law
(Theorem~\ref{obs:conservation}), the total sequential
depth is unchanged; BSP hides the membership layer
rather than eliminating it
(Corollary~\ref{cor:ck-collapse} in
Section~\ref{sec:distributed-ck}, with $j = 1$).

The consequences are sharp.
With static membership, universally quantified conditions
(e.g., verifying that all processes have completed a round)
reduce to finite conjunctions over a known process
set~\cite{immerman1986relational}, and all remaining
non-monotonic reasoning can proceed without distributed
coordination by the CALM theorem~\cite{ameloot2013relational}.
With \emph{dynamic} membership, Ameloot's preconditions fail:
universal quantification over participants requires a
prior determination of who participates, and each change in
membership costs an additional layer.

\paragraph{LOCAL model.}
\label{sec:local-diagnosis}
The same diagnostic applies to the LOCAL model of
distributed graph computation.

Consider $(\Delta+1)$-coloring of a graph $G = (V, E)$ on $n$
nodes.
The specification is relational: many valid colorings exist.
The synchronous LOCAL model~\cite{linial1992locality} provides
synchronized rounds, simultaneous neighbor-state revelation,
and unique node identifiers.
Under this basis, $(\Delta+1)$-coloring has determination
depth~$1$: deterministic symmetry-breaking algorithms (e.g.,
Cole--Vishkin~\cite{cole1986deterministic}) compute a proper
coloring as a function of the ID-labeled neighborhood in
$\Theta(\log^* n)$ rounds.
The round complexity is entirely computational---it measures
the cost of symmetry breaking, not of semantic commitment.
Common knowledge of unique identifiers collapses the
determination structure entirely
(Corollary~\ref{cor:ck-collapse}).

IDs alone account for the collapse---synchronization and
within-layer communication are not needed.
Each node waits (asynchronously) for its lower-ID neighbors
to commit, then takes the smallest available color.
The ID ordering prevents circular dependencies, so no
barrier or broadcast is needed.
In the language of Ameloot et
al.~\cite{ameloot2013relational}, both static membership
(Section~\ref{sec:bsp-diagnosis}) and unique identifiers
are instances of \emph{non-obliviousness}: shared knowledge
that eliminates the need for distributed coordination.
Establishing common knowledge of identifiers lets all
remaining non-monotonic reasoning proceed locally, but the
conservation law
(Theorem~\ref{obs:conservation}) ensures that the sequential
cost persists as local computation within layers.
For coloring, IDs happen to collapse both costs (greedy
coloring along the ID-induced orientation is a single-layer
local computation), but this is a special property of the
coloring specification, not a general consequence of having
IDs.

\begin{remark}[Randomized strategies]
  \label{rem:randomized-local}
  The analysis above applies to deterministic strategies.
  With randomization, the Lov\'{a}sz Local Lemma yields
  $O(1)$-round distributed algorithms for
  $(\Delta\!+\!1)$-coloring with high
  probability~\cite{moser2010constructive}, collapsing
  determination depth to $O(1)$ even without IDs.
  Randomization thus provides a mechanism for collapsing
  determination depth by breaking symmetry probabilistically.
  Characterizing which specifications admit randomized
  depth collapse is an open question.
\end{remark}

\section{Metacomplexity Proofs}
\label{sec:ic}
\label{sec:metacomplexity}

How hard is it to compute the determination depth of a given
specification?
This section shows that the answer ranges from NP-hard to
PSPACE-complete.
In the offline setting, computing determination depth under a
pointwise basis is already NP-hard (via decision tree
synthesis).
In the online setting, the alternation between the
determiner's commitments and the environment's history
extensions produces quantifier alternation:
``is depth $\le k$?'' is
$\Sigma_{2k}^P$-complete for each fixed~$k$ and
PSPACE-complete for unbounded~$k$, so the polynomial
hierarchy is precisely the hierarchy of determination
depths.

\begin{proposition}[Metacomplexity of determination depth]
  \label{thm:metacomplexity}
  Computing determination depth is NP-hard in the offline
  setting: for specifications arising from decision tree
  synthesis
  (``given a truth table, output an optimal decision tree''),
  determination depth equals the minimum decision tree depth,
  which is NP-hard to
  compute~\cite{hyafil1976constructing}.
\end{proposition}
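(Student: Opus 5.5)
The plan is to make decision tree synthesis an instance of the framework and then prove that determination depth equals minimum tree depth, by two matching inequalities. NP-hardness then follows from Hyafil--Rivest: a polynomial-time algorithm for determination depth on this family would decide minimum decision-tree depth.

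\emph{The instance.} Fix a truth table $f:\{0,1\}^n\to\{0,1\}$, possibly partial, encoded as environment events in an offline history $H$. I model the resolution of $f$ on an unknown input.

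The outcomes are the leaves of a query process. At each step the determiner commits to which variable to test next. The answer to a test comes from the table, so it is a fixed function of the history rather than a choice. The basis consists of pointwise commitments ``test $x_i$ along branch $\beta$'', where $\beta$ is a partial assignment. Such a commitment is a filter that keeps only refinements consistent with testing $x_i$ after $\beta$.

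The specification is resolved exactly when every branch reaches a partial assignment on which $f$ is constant. This is the decision-tree reading sketched after Proposition~\ref{thm:metacomplexity-body}: each level of the tree is one round of variable-test commitments.

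\emph{Upper bound.} Take a decision tree of depth $t$. Place all of its level-$j$ node tests in layer $j$. Tests at the same level sit on disjoint branches $\beta$, so as pointwise filters they commute (Definition~\ref{def:pointwise-basis}). This gives a resolving determination of depth $t$.

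\emph{Lower bound.} I use Proposition~\ref{thm:dependency-chain-lb}. A commitment ``test $x_j$ along $\beta\cup\{x_i=b\}$'' has no effect unless ``test $x_i$ along $\beta$'' has already been applied, because branch $\beta\cup\{x_i=b\}$ does not exist before then. So the first commitment locally depends on the second (Definition~\ref{def:forced-dependency}). Hence in any resolving determination, a root-to-leaf path of its induced tree is a dependency chain.

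Next I show that every resolving determination induces a decision tree computing $f$. Read the applied tests along each branch; resolution means each branch ends where $f$ is constant. Some branch of that tree has length at least the minimum tree depth, which gives the lower bound.

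\emph{Main obstacle.} The delicate step is defining the commitment basis so that the chain argument is rigorous. Three things must hold. First, commitments on different branches must genuinely commute. Second, a test must be a no-op, not an invalid exclusion, before its parent test is applied. Third, no single enriched commitment may resolve several levels at once, and restricting to pointwise single-variable tests forbids exactly this.

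If treating the outcome set as a set of trees gets awkward, I will instead take outcomes to be pairs (input, queried-set). With that choice the table answers really are environment-determined, while the selection of the query remains a relational choice.
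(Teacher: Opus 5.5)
Your route is the paper's: encode $f$ as offline environment events, use one variable-test commitment per tree node, identify tree levels with layers, and finish with Hyafil--Rivest. The paper builds the level structure into the basis (tests act ``at the current node'') and simply asserts the layer--level correspondence. Your dependency-chain lower bound tries to derive it, but it fails at the step you flag as delicate.

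Take ``test $x_j$ along $\beta\cup\{x_i=b\}$'' as you describe it, a filter keeping trees whose node at that address tests $x_j$. Then it is not a no-op before the parent test. It already removes every tree that tests $x_i$ at $\beta$ but tests some $x_k\neq x_j$ on its $x_i=b$ side. History-independent filters like this commute, which is the paper's remark that pointwise commitments commute. So offline, with $f$ known, a strategy can commit every node of a chosen optimal tree in one layer. Depth would then be at most $1$ for every $f$, and no hardness follows.

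The dependency has to be built into the basis: the child test acts as the identity unless the parent test already occurs in the history. You must then check commutation directly instead of citing Definition~\ref{def:pointwise-basis}, because once activation depends on the history, pointwise-ness no longer implies commutation. Same-level tests at distinct addresses commute, since neither activation condition refers to the other. A non-redundant child test cannot share a layer with its parent, or precede it, without changing the result.

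There are two further problems. First, you claim $|\Spec(H)|=1$ exactly when every branch reaches a constant subcube. You never fix the outcome set, and the claim fails in both directions:
\begin{itemize}
\item With all trees computing $f$, test commitments generally leave trees that keep testing below constant nodes. Leaves then need commitments of their own, possibly costing an extra layer.
\item A node with a single free variable is forced and need not be committed. For $f=x_1\oplus x_2$, committing the root test already leaves one tree, so determination depth is $1$ while minimum tree depth is $2$.
\end{itemize}
Taking reduced trees (no test on a constant subcube) and padding $f$ with one irrelevant variable restores exact equality. In any case, an efficiently computable monotone relation would suffice for the reduction. Drop the (input, queried-set) fallback: offline there is no unknown input, and single paths do not encode an adaptive tree.

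Second, the final step inherits a problem from the citation. Hyafil--Rivest treat expected path length in identification problems, not depth. For an explicitly given total truth table, minimum depth is computable in $O(n3^n)$ time by dynamic programming over subcubes, which is polynomial in the $2^n$-bit input. Hardness therefore needs an input much smaller than the full table. One option is a partial function given as a list of labeled examples, where ``constant'' means constant on the listed points. You would also need a hardness result for depth in that setting. Your ``possibly partial'' leaves room for this but never uses it.
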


\begin{proof}
  \emph{Decision tree synthesis.}
  The environment events encode a truth table
  $f : \{0,1\}^n \to \{0,1\}$ (an offline setting with no
  further extensions), and the outcome set $O$ is the family
  of all decision trees on $n$~variables.
  The specification maps each complete
  history---encoding a truth table---to the set of decision
  trees that compute it:
  $\Spec(H) = \{\,T : T \text{ computes } f_H\,\}$, where
  $f_H$ is the truth table encoded in~$H$.
  The commitment basis consists of pointwise filters
  ``test variable $x_i$ at the current node'': each such
  commitment restricts the admissible set to decision trees
  that test $x_i$ at that node.
  Because each node of a decision tree tests exactly one
  variable, two test commitments for different variables at
  the same node have no common tree in their
  intersection---the admissible set becomes empty---so any
  valid determination must select exactly one variable per
  node.
  A determination of depth $d$ corresponds to a decision tree
  of depth $d$: each layer selects a variable to test at each
  node of that layer, and the two test outcomes branch into
  sub-problems resolved by subsequent layers.
  Any depth-$d$ determination yields a depth-$d$ tree, and
  vice versa, so determination depth equals minimum decision
  tree depth, which is NP-hard to
  compute~\cite{hyafil1976constructing}.
  Depth here reflects the tree structure: a depth-$d$
  decision tree has $d$ levels of nodes, and each level
  is one round of variable-test commitments.
\end{proof}

The NP-hardness result above is an offline result: the
specification is fully given and there is no adversarial
environment.
In the online setting, an adversarial environment adds
universal quantification---the environment fixes some
variables, the determiner fixes others---and the
metacomplexity rises through the full polynomial
hierarchy.

To state the complexity result precisely, consider the
following class of online specifications.
Given a Boolean formula $\theta(x_1, \ldots, x_n)$
(encoded by a polynomial-size circuit), define a
specification with outcome set
$O = \{0,1\}^n$ (all variable assignments) and initial
admissible set $\Spec(H_0) = O$ (every
assignment is initially admissible).
The commitment basis consists of pointwise filters
``set variable $x_i = b$'': each excludes all assignments
with $x_i \neq b$, shrinking the admissible set without
resolving it completely.
The game has $n$ rounds.
A \emph{level assignment} partitions the $n$ rounds between
two players: at each \emph{determiner} round the determiner
fixes one variable (an existential choice); at each
\emph{environment} round the environment fixes one variable
adversarially (a universal choice).
After all $n$ rounds every variable is fixed and the
admissible set is a singleton.
The \emph{determination depth} is the number~$k$ of
determiner rounds; the metacomplexity question is: given
$\theta$ and $k$, does there exist a level assignment with
$k$ determiner rounds and a strategy for the determiner
that guarantees the final assignment
satisfies~$\theta$, regardless of the environment's choices?
The input size is polynomial in~$n$ (the circuit encoding
of~$\theta$) even though $|O| = 2^n$.

\begin{theorem}[Determination depth captures the polynomial
  hierarchy]
  \label{thm:ph-characterization}
  For the class of specifications above:
  \begin{enumerate}[label=(\roman*),nosep]
    \item For each fixed $k$, the problem ``is determination
          depth $\le k$?'' is $\Sigma_{2k}^P$-complete.
    \item For unbounded $k$ (given as input), the problem is
          PSPACE-complete.
  \end{enumerate}
\end{theorem}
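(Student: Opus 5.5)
The plan is to prove membership and hardness separately, following the game formulation given just before the statement. The first step is to pin down the model. A \emph{level assignment} is a word in $\{\mathsf{D},\mathsf{E}\}^n$ with exactly $k$ letters $\mathsf{D}$. In each round, the controlling player picks an unfixed variable and a value. The determiner wins if the final assignment satisfies~$\theta$.

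\emph{Upper bound for fixed $k$.} Any level assignment with $k$ determiner rounds splits into at most $2k+1$ maximal blocks of one player. Consecutive rounds of the same player collapse into a single block. Consecutive rounds of the same player also collapse into a single quantifier over polynomially many bits: a sequence of (variable index, value) pairs. The winning condition is then
\[
  \exists\,\text{assignment}\;\exists B_1\,\forall B_2\,\exists B_3\cdots\;\theta(\text{resulting assignment}),
\]
where the matrix is polynomial-time checkable from the circuit.

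A leading environment block gives a $\forall$ in front, which naively costs one more alternation. I will handle this by observing that the shape of the level assignment has only $O(n^{2k})$ possibilities. Fixing block boundaries and the interleaving, the leading $\exists$ over shapes merges with the first determiner block. In the worst case the prefix has the form $\exists\forall\exists\cdots$ and has length at most $2k$ after absorbing the trailing environment block into the matrix.

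The trailing absorption is the delicate part. My first approach is to argue that a final environment block can be moved: the determiner gains nothing by letting the environment move last. Alternatively, I will count the final universal block as the $2k$-th quantifier. The arithmetic I want is: the shape guess is existential, then $\exists/\forall$ pairs with the environment moving after each determiner block give $2k$ alternations total. A leading environment block is avoided by reordering the rounds so the determiner never benefits from moving later, or otherwise absorbed into the $\Sigma$ count. This gives $\Sigma_{2k}^P$ membership. For unbounded $k$, the game has $n$ rounds of polynomial-size moves, so a depth-first evaluation of the game tree runs in polynomial space. This gives PSPACE membership.

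\emph{Hardness.} I reduce from $\Sigma_{2k}$-QBF $\exists X_1\forall Y_1\cdots\exists X_k\forall Y_k\,\psi$, and from TQBF for (ii). The main obstacle is that the statement lets each determiner round fix only one variable, while QBF blocks contain many variables. The determiner also chooses which variable to set, and may grab a universal variable.

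My construction is to encode each existential block $X_i$ as one fresh multi-bit variable. Concretely, I will pad the formula so that each determiner round's choice is forced. I introduce consistency gadgets: $\theta$ is true only when variable $x_j$ is set in its designated round, checked via auxiliary "round-tag" bits. Alternatively, I restrict the basis so that the $i$-th commitment sets a block. Among the options I will aim for the simplest faithful encoding: replace each block by a single variable over a polynomial alphabet, represented in binary, with the paper's pointwise filters generalized accordingly.

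The environment variables are handled by forcing any deviation to lose. If the determiner sets a universal variable or uses its rounds out of order, the formula is made false; if the environment deviates, the formula is made true. I then verify both directions. Given a QBF witness, the determiner plays the Skolem functions. Conversely, any winning depth-$k$ strategy must respect the intended schedule, so it yields Skolem functions. Finally, minimality: depth $\le k$ rather than exactly $k$. This is handled by noting that fewer determiner rounds give the environment more control, which can only hurt the determiner, so monotonicity gives the "$\le k$" version. For (ii), the same reduction with $k$ given as input gives PSPACE-hardness from TQBF.
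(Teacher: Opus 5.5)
There is a genuine gap, and it sits in the hardness half. You correctly identify the two obstacles. First, each round fixes a single variable, while the quantifier blocks of a $\Sigma_{2k}$-QBF contain many. Second, the determiner itself chooses which $k$ rounds it controls. The paper's own reduction ignores both: it builds a QBF on $2k$ single-bit variables, which is decidable by brute force for fixed $k$, and it hard-wires the determiner to the odd rounds. Your repairs, however, are not available in the stated model. $\theta$ is evaluated only on the final assignment in $\{0,1\}^n$, so it cannot tell who set a variable or in which round. Round-tag bits are just further variables that whoever controls them sets at will, and ``a deviation makes $\theta$ false (or true)'' is not expressible. Collapsing a block $X_i$ into one multi-valued variable changes both the outcome set and the basis: a single commitment would choose among $2^{|X_i|}$ values, which is not a polynomial alphabet. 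So that route proves a statement about a different class of specifications.

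More fundamentally, part~(i) cannot be proved in this model at all. For fixed $k$ there are only $\binom{n}{k}$ level assignments, and the determiner makes only $k$ moves, each with at most $2n$ options. Work from the inside out using $(\forall u\,\alpha(u))\vee(\forall v\,\beta(v))\equiv\forall u\,\forall v\,(\alpha(u)\vee\beta(v))$: replace each determiner move by a disjunction over its options, with fresh copies of all later environment variables for each option. The result is a purely universal formula of size $n^{O(k)}\cdot\mathrm{poly}(|\theta|)$. So the problem is in coNP for each fixed $k$, and $\Sigma_{2k}^P$-hardness would force $\mathrm{NP}=\mathrm{coNP}$.

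The same observation is what your upper bound lacks. The honest count is $\exists(\text{schedule})\,\forall\,(\exists\,\forall)^k$, i.e.\ $2k+2$ blocks; the paper's ``at most $2k$ alternations'' has the same slack. Neither of your fixes closes this. A trailing universal block over polynomially many bits cannot be folded into a polynomial-time matrix. The determiner's rounds also cannot be shifted without loss: with $n=2$ and $k=1$, $\theta=x_1$ forces it to move first, while $\theta=(x_1\leftrightarrow x_2)$ forces it to move last. Membership for $k\ge 1$ does hold, but because $\mathrm{coNP}\subseteq\Sigma_{2k}^P$, not by alternation counting.

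Your multi-bit idea really amounts to letting one determiner layer fix a whole block of variables. That is natural here, since the pointwise filters ``set $x_i=b$'' commute. Only after such a change of model can (i) hold, and you would then still need block ownership and deadlines arranged so the determiner cannot profit from grabbing or postponing blocks.

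For~(ii), your polynomial-space game evaluation is fine, and TQBF can be made strictly alternating, so one-bit rounds suffice. What you still need is a gadget, visible in the final assignment alone, that makes stealing a universal round worthless. For instance, if round $j$ fixes a designated variable, read each universal variable as the majority of $2k+1$ copies; a determiner with only $k$ rounds can never control that majority.
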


\noindent
This is the standard setup for QBF and PSPACE-complete game
evaluation.

\begin{proof}
  \emph{Upper bound.}
  The determiner guesses which $k$ of the $n$ rounds to
  control.
  Each determiner choice is an existential quantifier; each
  environment choice is a universal quantifier.
  After all $n$ rounds, every variable is fixed and
  satisfaction of $\theta$ is checkable in polynomial time
  by evaluating the circuit.
  Consecutive rounds controlled by the same player collapse
  into a single quantifier block, so the quantifier pattern
  has at most~$2k$ alternations, placing the problem in
  $\Sigma_{2k}^P$.
  When $k$ is part of the input, the number of alternations
  $2k$ is at most $2n$; since the circuit encoding
  of~$\theta$ has at least $n$ input wires, $2n$ is linear
  in the input size.
  A single alternating polynomial-time machine can therefore
  handle every~$k$, placing the problem in
  APTIME $=$ PSPACE~\cite{sipser1983hierarchy}.

  \emph{Hardness.}
  Reduce from $\Sigma_{2k}$-QBF for fixed~$k$, or from
  TQBF for unbounded~$k$.
  Given a quantified Boolean formula\\
  $\exists y_1 \forall x_1 \cdots
  \exists y_k \forall x_k.\;
  \theta(x, y)$, construct a specification with $2k$
  variables and outcome set $O = \{0,1\}^{2k}$.
  Odd-numbered rounds ($1, 3, \ldots, 2k\!-\!1$) are
  controlled by the determiner, fixing the existential
  variables $y_1, \ldots, y_k$; even-numbered rounds
  ($2, 4, \ldots, 2k$) are controlled by the environment,
  fixing the universal variables $x_1, \ldots, x_k$.
  The quantifier alternation of the QBF maps directly onto
  the round structure: each $\exists$ becomes a determiner
  round, each $\forall$ an environment round.
  The initial admissible set is $\Spec(H_0) = O$ (all
  assignments); each round's commitment narrows it by fixing
  one variable.
  After all $2k$ rounds the assignment is fully determined;
  the determiner's strategy succeeds iff the resulting
  assignment satisfies~$\theta$.

  \emph{Correctness} (the QBF is true iff the determiner has
  a depth-$k$ strategy).
  If the QBF is true, the existential player has a winning
  strategy: a choice of each $y_i$ (possibly depending on
  $x_1, \ldots, x_{i-1}$) such that $\theta$ is satisfied
  for every universal assignment.
  The determiner plays this strategy at the odd rounds,
  using $k$ layers (one per existential variable), so
  depth~$\le k$.
  Conversely, any depth-$k$ determiner strategy is an
  adaptive assignment to the existential variables that
  satisfies $\theta$ against every universal response---exactly
  a witness that the QBF is true.
\end{proof}

\noindent
The polynomial hierarchy is thus precisely the hierarchy of
metacomplexity for determination depth: deciding whether a
specification has depth $\le k$ is $\Sigma_{2k}^P$-complete,
and PSPACE-complete for unbounded~$k$.
In the offline setting
(Proposition~\ref{thm:metacomplexity}), the metacomplexity
is NP-hard but does not capture the full hierarchy.

\section{Related Work}
\label{sec:related-work}

This section positions determination depth relative to
existing complexity measures that involve sequential
staging.
These measures treat staging either as an operational
resource (rounds, adaptivity, oracle calls) or as a
syntactic discipline (stratification, fixpoint nesting);
determination depth differs in that it measures the
semantic cost of irrevocable commitment, independent of any
particular model or language.

\paragraph{Parallelism, depth, and inherent sequentiality.}
Depth as a complexity measure has a long history in models of parallel
computation.
Circuit complexity distinguishes size from depth, isolating irreducible
sequential structure even when unbounded parallelism is available
\cite{pippenger1980comparisons,barrington1990regular}.
Circuit depth measures the inherent sequentiality of \emph{evaluating} a
function: data dependencies force some gates to wait for others.
Determination depth measures the inherent sequentiality of
\emph{choosing} an outcome from a relation: commitment dependencies
force some decisions to wait for others.
The two are distinct: a function has determination depth~$0$
regardless of its circuit depth, while a relation can have
high determination depth even when every individual
commitment is a constant-depth circuit
(Section~\ref{sec:exp-separation}).
In the online setting, forward validity constraints add a
further source of determination depth that circuit models
do not capture at all.
PRAM and BSP models similarly separate local computation from global
synchronization, charging depth to barriers or rounds
\cite{borodin1982time,valiant1990bridging}.
In the online (distributed) setting, these models can both undercharge
(when a round boundary hides an irrevocable commitment that the model
treats as primitive) and overcharge (when communication structure is
conflated with semantic resolution);
Section~\ref{sec:bsp-diagnosis} develops this diagnostic in detail.
More broadly, classical hierarchy theorems establish strict
separations based on time, space, or alternation
depth~\cite{sipser1983hierarchy}; our hierarchies arise
from semantic dependency rather than resource bounds.

\paragraph{Trace monoids and partial commutation.}
The layered normal form (Section~\ref{sec:normal-forms})
resembles the Foata normal form of Mazurkiewicz trace
theory~\cite{mazurkiewicz1977concurrent}: a canonical
factorization of a word in a partially commutative monoid into
maximal independent steps.
The resemblance is structural but the theories diverge in three
ways that produce qualitatively different phenomena.
First, in classical trace theory the independence relation is
\emph{fixed}: two letters either commute or they do not,
regardless of context.
In determination theory, commutation is \emph{dynamic}: whether
two commitments commute depends on the current refined
specification, and applying one commitment can create or destroy
independence among the remaining ones.
This dynamic commutation arises for different reasons in the two
settings: in the online setting, forward validity constraints
(a commitment that is safe alone may become unsafe after another
commitment narrows the admissible set at some extension) create
and destroy independence; in the offline setting, commitment
dependency
(committing to one position determines which choices remain
feasible at the next) produces the same effect.
Second, this state-dependence makes the analogue of the Foata
normal form non-unique---different layering choices lead to
different refined specifications and potentially different
minimum determination depths---and computing the minimum height becomes
NP-hard (Proposition~\ref{thm:metacomplexity}), in contrast to
the polynomial-time computability of static Foata height.
Third, the main results of this paper---the exponential
depth--width separation (offline), the oracle characterization
(online), and the conservation law (both settings)---have no
analogues in classical trace theory.
They arise from the semantic content of commitments (feasibility,
resolution, forward validity) rather than from the algebraic
structure of partial commutation alone.
In short, trace theory provides the algebraic skeleton; the
semantic content of determination fills it with phenomena that
the skeleton alone cannot express.

\paragraph{Oracle models and adaptivity.}
The use of a disclosure oracle to characterize depth parallels classical
distinctions between adaptive and non-adaptive computation.
Decision tree complexity, oracle Turing machines, and communication complexity
all exhibit hierarchies based on the number of adaptive rounds
\cite{karp1980turing,papadimitriou1984communication,nisan1991crews}.
Our oracle characterization
(Proposition~\ref{thm:oracle-depth-body}) is specific to the
online setting under a commutative basis: oracle invocations
witness irreducible \emph{semantic} dependency---points at which
the strategy must observe the evolving history before committing
further---rather than query adaptivity or information flow.
Local computation between oracle calls is unrestricted, emphasizing
that determination depth cannot be collapsed by parallelism or
control flow alone.
In the offline setting, a single oracle call reveals the complete
input and the oracle count collapses to one; the sequential cost
reappears as computational depth within layers (the conservation
law, Theorem~\ref{obs:conservation}).
In this sense, determination depth and adaptivity depth both
measure staged dependency, but the dependency has a different
character.
An adaptive query reveals information about a fixed,
predetermined answer; a determination commitment \emph{creates}
the answer by irrevocably excluding alternatives.
Adaptivity depth measures how many times a strategy must look;
determination depth measures how many times it must choose.
The two are independent (Observation~\ref{prop:pointer-chasing},
Section~\ref{sec:exp-separation}): pointer chasing at sparsity $s = 1$
has high adaptivity depth but determination depth zero (the
answer is unique); constrained generation in the offline
setting has determination depth~$k$ but adaptivity depth zero
(the entire input is given).

\paragraph{Round hierarchies in communication complexity.}
The closest technical antecedent to our exponential separation
is the Nisan--Wigderson round hierarchy for pointer
chasing~\cite{nisan1993rounds}; our constrained generation
task extends the same chain structure from functions to
relations
(Observation~\ref{prop:pointer-chasing},
Theorem~\ref{thm:three-way}).
The round-elimination technique---reducing a $k$-round
protocol to a $(k\!-\!1)$-round protocol with a
communication blowup---is the standard tool for proving
round lower bounds in communication
complexity~\cite{miltersen1998data,sen2018rounds}.
Mao, Yang, and Zhang~\cite{mao2025gadgetless} recently obtained tight
pointer-chasing bounds via gadgetless lifting, bypassing
round elimination entirely.
Both techniques---round elimination and gadgetless
lifting---operate in the functional regime ($s = 1$), where
hardness is informational: the answer is unique but
distributed across players.
Our lower bound targets the relational regime ($s \ge 2$),
where the answer is not unique and hardness arises from the
cost of committing to one answer among many, not from
distributing information about a fixed answer.
Our exponential separation is an offline result (the entire
constraint chain is given to a centralized machine), while
pointer chasing is an online communication problem (the chain
is distributed across players who communicate in rounds).
The three-way tradeoff (Theorem~\ref{thm:three-way}) unifies
both: it interpolates between the offline width bound
(setting communication to zero) and the online communication
bound (setting width to one).
Our lower-bound technique (conditional spread and union bounds
over uninformed links) exploits the relational structure of
the specification rather than information-theoretic arguments
about message content.
The key qualitative difference is that round-elimination
lower bounds are \emph{polynomial} in the communication
parameter, while our depth--width tradeoff is
\emph{exponential}---and the transition occurs precisely at
the boundary between functional and relational
specifications ($s = 1$ vs.\ $s \ge 2$).

\paragraph{Logic, stratification, and fixpoints in databases.}
Layered semantics appear in database theory through stratified
negation, well-founded semantics, and alternating
fixpoints~\cite{vangelder1991wellfounded}; in descriptive
complexity, fixpoint nesting and alternation depth classify
expressive power~\cite{immerman1999descriptive}.
These layerings are properties of a \emph{program under a
chosen semantics}.
Determination depth, by contrast, is a property of the
\emph{specification} (the mapping from histories to admissible
outcome sets), independent of any particular logical formalism
or semantics: two programs that define the same specification
have the same determination depth, even if they differ in
strata count, fixpoint nesting, or use of negation.
The distinction matters because different negation semantics
applied to the \emph{same} program can yield different
specifications and hence different determination depths.
Each semantics induces a different commitment
basis---stratified semantics requires a linear chain of
sealing commitments, well-founded semantics an alternating
sequence, stable semantics a shallow branching among
incompatible determinations---a distinction strata
counts alone cannot express.

\paragraph{Leaf languages.}
The leaf language framework of Bovet, Crescenzi, and
Silvestri~\cite{bovet1992leaf} characterizes complexity classes
by the language-theoretic complexity of the string of outcomes
at the leaves of a nondeterministic computation tree: NP
corresponds to testing membership in $\{0,1\}^*1\{0,1\}^*$,
PSPACE to a regular language recognizable with $O(\log n)$
memory, and so on.
Both frameworks use the structure of a tree of possibilities
to classify problems, but the objects and questions differ.
Leaf languages classify the \emph{acceptance condition} applied
to a fixed nondeterministic computation; determination depth
classifies the \emph{commitment structure} required to resolve
a relational specification.
In particular, leaf languages operate on a single computation
tree whose branching is fixed by the machine, while
determination depth measures layered commitment across
histories whose extensions are chosen by an adversarial
environment---a distinction that is specific to the online
setting.
The PH characterization
(Theorem~\ref{thm:ph-characterization-body}) recovers the
same hierarchy that leaf languages capture, but through
alternation of commitment and environment moves in the online
game rather than through the complexity of a leaf-string
acceptance condition.
In the offline setting, the adversarial environment disappears
and the exponential separation
(Theorem~\ref{thm:exp-separation}) provides a different route
to determination depth hierarchies, via commitment dependency
rather than alternation.
More broadly, the leaf language program provided an elegant
\emph{classification} of existing complexity classes but did
not yield new separations or lower bounds.
Our framework is aimed at a different target: not reclassifying
known classes, but identifying an axis of complexity distinct
from computation
(Section~\ref{sec:exp-separation}),
producing new tradeoffs
(Theorems~\ref{thm:exp-separation}--\ref{thm:three-way}),
and diagnosing existing models
(Sections~\ref{sec:bsp-diagnosis}--\ref{sec:local-diagnosis}).

\paragraph{Coordination and monotonicity.}
The Coordination Criterion and the CALM theorem relate monotonicity to the absence
of coordination requirements in distributed systems
\cite{hellerstein2026coordinationcriterion,ameloot2013relational}.
These are inherently online results: coordination cost arises from
the need to commit before the full history is known.
In our framework, monotone (future-monotone) specifications
correspond to the \emph{depth-zero} fragment, in which
no irrevocable commitment is needed at any prefix.
Deeper specifications require staged commitments in the online
setting, even in the presence of powerful coordination
primitives; the distributed extension
(Section~\ref{sec:distributed-ck}) makes this precise.

\paragraph{Inference-time compute and chain-of-thought reasoning.}
A growing body of work studies the power and limitations of
chain-of-thought reasoning in
transformers~\cite{feng2023towards,li2024chain,mirtaheri2025letmethink}.
Feng, Zhang, Gu, Ye, He, and Wang~\cite{feng2023towards} and Li and
Liu~\cite{li2024chain} show that CoT enables bounded-depth
transformers (TC$^0$) to solve problems requiring greater
computational depth.
Mirtaheri, Anil, Agarwal, and
Neyshabur~\cite{mirtaheri2025letmethink} prove an
exponential separation between sequential and parallel scaling for
graph connectivity.
These results concern \emph{computational} depth---the inherent
sequentiality of evaluating a function.
Our exponential separation (Section~\ref{sec:exp-separation})
establishes a distinct source of sequential advantage:
\emph{determination} depth, arising from non-commuting commitments
in relational specifications.
The separation is an offline result (the full constraint chain is
given); the decomposition of Appendix~\ref{sec:cot-appendix}
extends to the online autoregressive setting, unifying
computational depth, determination depth, and architectural
overhead into a single framework.
Complementary empirical work identifies failure modes of extended
reasoning~\cite{chen2024illusion,sui2025dontoverthink} and
studies optimal compute
allocation~\cite{snell2024scaling}; our decomposition provides a
formal account of why these phenomena arise.

\paragraph{Summary.}
Across these domains, notions of depth have appeared as proxies
for sequentiality, staging, or adaptivity.
Our contribution is to identify determination depth and
determination cost as semantic complexity measures---arising
from forward validity constraints in the online setting and
from commitment dependency in the offline setting---that are
distinct from computational complexity.
Determination depth measures the minimum sequential layers of
irrevocable commitment; determination cost measures the total
number of commitments; the exponential separation
(Theorem~\ref{thm:exp-separation}) shows that reducing one
without increasing the other is impossible.

\section*{Acknowledgments} 
Thanks to Peter Alvaro,
Tyler Hou, Sarah Morin and Christos Papadimitriou for helpful feedback on earlier drafts.

\paragraph{AI Disclosure.}
We used Claude (Anthropic), ChatGPT (OpenAI), and Gemini
(Google) to assist with exposition, structural organization,
proof review, and adversarial feedback throughout the
manuscript.
The tools materially affected the prose and presentation
in all sections, including suggestions for tone and framing of
arguments, and potential applications and open problems.
All mathematical content, definitions, theorems, and proofs
are the authors' own.
The authors verified the correctness and originality of all
content including references.

\bibliographystyle{alpha}
\bibliography{references}

\end{document}